\theoremstyle{plain}
\newtheorem{example}{Example}[section]
\theoremstyle{remark}
\newcommand*{\fancyrefthmlabelprefix}{thm}
\newcommand*{\fancyreflemlabelprefix}{lem}
\newcommand*{\fancyrefcorlabelprefix}{cor}
\newcommand*{\fancyrefdefilabelprefix}{defi}
\newcommand*{\fancyrefalglabelprefix}{alg}
\newcommand*{\frefalgname}{algorithm}
\newcommand*{\Frefalgname}{Algorithm}
\def\beq{\begin{equation}}
\def\eeq{\end{equation}}
\def\bq{\begin{quote}}
	\def\eq{\end{quote}}
\def\ben{\begin{enumerate}}
	\def\een{\end{enumerate}}
\def\bit{\begin{itemize}}
	\def\eit{\end{itemize}}
\def\r|{\right|}
\newcommand\M{\mathbb{M}}
\newcommand{\cC}{{\mathbb{C}}}
\newcommand{\mat}{{\rm{mat}}}
\newcommand{\id}{{\rm{id}}}
\newcommand{\ot}{\otimes}
\newcommand{\End}{\operatorname{End}}
\newcommand{\Ad}{\operatorname{Ad}}
\newcommand{\Int}{\operatorname{Int}}
\newcommand{\sgn}{\operatorname{sgn}}
\newcommand{\<}{\langle}
\renewcommand{\>}{\rangle}
\newcommand{\tr}{\operatorname{tr}}
\newcommand\be{\begin{equation}}
\newcommand\ee{\end{equation}}
\newtheorem{theorem}{Theorem}
\newtheorem{corollary}[theorem]{Corollary}
\newtheorem{fact}[theorem]{Fact}
\newtheorem{definition}[theorem]{Definition}
\newtheorem{p_example}[theorem]{Example}  
\newtheorem{lemma}[theorem]{Lemma}
\newtheorem{proposition}[theorem]{Proposition}
\begin{document}	
	\title{Positive Maps From Irreducibly Covariant Operators}
	
\author{Piotr Kopszak$^{1}$,
	Marek Mozrzymas$^{1}$ and Micha{\l} Studzi\'nski$^{2}$}
\affiliation{$^1$ Institute for Theoretical Physics, University of Wrocław
	50-204 Wrocław, Poland \\
	$^2$ Institute of Theoretical Physics and Astrophysics, National Quantum Information Centre, 
Faculty of Mathematics, Physics and Informatics, University of Gda{\'n}sk, Wita Stwosza 63, 80-308 Gda{\'n}sk, Poland}
\date{\today}	 
\begin{abstract}
In this paper, we discuss positive maps induced by (irreducibly) covariant linear operators for finite groups. The application of group theory methods allows deriving some new results of a different kind. In particular, a family of necessary conditions for positivity, for such objects is derived, stemming either from the definition of a positive map or the novel method inspired by the inverse reduction map.  In the low-dimensional cases, for the permutation group $S(3)$ and the quaternion group $Q$, the necessary and sufficient conditions are given, together with the discussion on their decomposability. In higher dimensions, we present positive maps induced by a three-dimensional irreducible representation of the permutation group $S(4)$ and $d$-dimensional representation of the monomial unitary group $MU(d)$. In the latter case, we deliver if and only if condition for the positivity and compare the results with the method inspired by the inverse reduction. We show that the generalised Choi map can be obtained by considered covariant maps induced by the monomial unitary group. As an additional result,  a novel interpretation of the Fujiwara-Algolet conditions for positivity and complete positivity is presented. Finally, in the end, a new form of an irreducible representation of the symmetric group $S(n)$ is constructed, allowing us to simplify the form of certain Choi-Jamio{\l}kowski images derived for irreducible representations of the symmetric group.   
\end{abstract}
\maketitle
\section{Introduction}
Symmetry is one of the most prominent properties which physical theory can exhibit. Usually, it is induced by some group structure. In this paper we consider symmetries encoded in linear operators of the special importance in quantum information theory that are \emph{irreducibly covariant} with respect to some finite group $G$. We call a map $\Phi$  
irreducibly covariant ($ICLM$-irreducibly covariant linear map) if  
\begin{equation}
\label{1a}
 \Phi\left(U(g)\rho U^\dagger(g)\right) = U(g)\Phi(\rho)U^\dagger(g),\quad \forall g \in G,
\end{equation}
for some unitary irreducible representation $U$ of a finite group $G$. If $U$ is not irreducible representation we say that the map is just covariant.

There are two special classes of linear maps that play a special role in quantum information theory: positive maps $(P)$ and completely positive $(CP)$ maps. Namely, a map $\Phi$, not necessarily covariant, is \textit{positive} if for every positive semidefinite matrix $M\geq0$, we have $  \Phi(M)\geq0$ as well. Secondly, a map $\Phi$ is \emph{completely positive}  when $\Phi \otimes \text{id}$, where $\text{id}$ stands for the identity map, is positive. Completely positive maps model quantum channels, that are among the fundamental blocks for quantum information processing tasks~\cite{rev1}. On the other hand, maps that are positive, but not completely positive, which are of the main interest of this manuscript, are useful since they can be used for entanglement detection via so-called entanglement witnesses. The concept of entanglement witness follows from the famous Hanh-Banach Theorem and was introduced firstly in the field of quantum information by Terhal in~\cite{ter}.
This method allows us to detect entanglement without the full knowledge about a given quantum state since one does not have to apply the full tomography of the state, which is an expensive process. Instead of that, it is enough to check the mean value of the observable representing entanglement witness on a given quantum state.  Such a method is universal since any entangled state posses corresponding entanglement witness. 

In general, the problem of classification or even finding some new examples of positive maps with certain properties is very hard despite many attempts and important results in the field~\cite{Woronowicz,Chrust1,Takasaki1983,TOMIYAMA1985169,Chru2009,Kossakowski2003,Chru2014,Chru2009a,TANG198633}. The main reason for that is the non-existence of a universal operational criterion for positivity. To prove that a given map is positive one has to check for example its block positivity, i.e. positivity of expectation values on all product vectors, while to check complete positivity of the given map it is enough to compute only all eigenvalues of the corresponding Choi-Jamio{\l}kowski image, which can be done effectively, at least for reasonably low dimensions. Imposing symmetry constraints simplifies the problem significantly and thus can lead to new results concerning the positivity of linear maps in the considered class and produce new important examples. One such constraint is irreducible covariance with respect to a given finite group $G$, defined as in~\eqref{1a}. The $ICLM$ with respect to a group $G$ inherits some symmetrical patterns from the group $G$, which usually simplifies the conditions for the positivity and complete positivity.

Approach exploiting covariance so far has been applied with success for the unitary group $\mathcal{U}(d)$, i.e. considered map from~\eqref{1a} commutes with every unitary matrix, not necessarily an irrep. These studies has been started by Bhat in~\cite{bhat2011}, where it was proven that maps satisfying~\eqref{1a} for $U=V\in \mathcal{U}(d)$ must be of the form $\Phi(X)=\alpha X+\beta \tr(X)\mathbf{1}$, where $\alpha,\beta \in\mathbb{C}$. It is clear that properties such as $P$ or $CP$ of $\Phi$ are fully determined by the values of the parameters $\alpha,\beta$. Later, in~\cite{COLLINS2018398} authors have generalized this idea by considering a family of three-parameter covariant maps $\Phi(\alpha,\beta,n):\mathbb{M}(n,\mathbb{C})\rightarrow \mathbb{M}(n^2,\mathbb{C})$, where $\alpha,\beta\in\mathbb{C}$, $n\in\mathbb{N}$. One of the main result is the link between $k-$positivity (in particular $P$) of a linear map and its covariance property. In particular, authors derive range of parameters $\alpha,\beta$ with $n=3$, for which considered map $\Phi(\alpha,\beta,3)$ is $2-$positive, $CP$ and decomposable. However, the if and only if conditions for $P$ are presented for an arbitrary $n\geq 3$. Next, in~\cite{bardet2018characterization} further generalisations are discussed. Namely, authors using elements of the representation theory and graphical representation of permutations, have characterised linear maps $\Phi:\mathbb{M}(n,\mathbb{C})\rightarrow \mathbb{M}(n^a,\mathbb{C})\ot \mathbb{M}(n^b,\mathbb{C})$ satisfying  $\Phi(UXU^{\dagger})=(\overline{U}^{\ot a}\ot U^{\ot b})\Phi(X)(\overline{U}^{\ot a}\ot U^{\ot b})^{\dagger}$, for all $X\in \mathbb{M}(n,\mathbb{C})$ and $a,b \in\mathbb{N}$. Recently, in paper~\cite{huber2020positive} such covariance, with $a=0$, plays the central role in extension of the polarized Cayley-Hamilton identity to an operator inequality on the positive cone and characterisation of the set of multilinear covariant positive maps. Finally, we mention 
9
 the result from~\cite{Siudzi_ska_2018}, where the family of positive maps induced by the covariance with respect to the finite group generated by the Weyl operators is derived. 
Our paper is a complementary continuation of~\cite{Moz2017}, where the authors analysed completely positive irreducibly covariant linear maps. The difference with the results quoted above is that now a map $\Phi$ is covariant with respect to some set of unitary matrices, which are the irreps of a chosen finite group $G$, so our requirement is less strong, more demanding in the analysis, causing the impossibility of the use results derived in papers cited above. Nevertheless, we still are able to derive general results.

Our paper is organised as follows. We start from Section~\ref{Def_notations} with the general introduction of the theory of entanglement. In particular, we define the Choi-Jamio{\l}kowski image, its connection with $P$ and $CP$ as well as with entanglement witnesses.
Further, in Section~\ref{general_cons} we present general considerations about the $ICLMs$ of the form
\begin{equation}
\label{dede}
    \Phi=\sum_\alpha l_\alpha\Pi^\alpha,\qquad l_\alpha \in \mathbb{C},
\end{equation}
where $\alpha$ labels irreps of $G$, $\Pi^\alpha$ are projectors arising from the structure of group $G$ and its unitary representation $U$ given in \eqref{1a}, and $l_\alpha$ are spectral parameters of such decomposition. Further, by $\operatorname{id}$ we understand the identity representation. Since the projectors $\Pi^\alpha$ can be calculated for a given representation the problem of checking whether a given map $\Phi$ is $P$ is reduced to determining the spectral parameters $l_\alpha$. The considerations are of the general nature, since we do not choose a specific representation $U$, but rather use its general properties.

 In the mentioned paper~\cite{Moz2017}, the necessary and sufficient conditions for $CP$ has been presented as a restriction on the values of the spectral parameters $l_{\alpha}$. We were able to provide a condition of similar structure, describing maps that are positive, but not completely positive. However, unlike in the former case, this condition is not operational. Nevertheless, for such a large class of $ICLMs$ given in~\eqref{dede}, we prove a general necessary condition for their positivity in the form of a condition on their spectral parameters $l_{\alpha}$. It has a simple geometrical meaning: in the space of spectral parameters of $\Phi$, all the spectral parameters $l_{\alpha}$, such that $\alpha\neq \operatorname{id}$, must lie inside a cuboid, whose edge lengths are equal to $2l_{\operatorname{id}}$. It appears that this necessary condition is non-trivial, because we prove additionally, that for the $ICLMs$ generated by two-dimensional irreps of the group $S(3)$ and the quaternion group $Q$, the necessary condition is also the sufficient one.  
 Additionally, we deliver the explicit decomposition into the sum of complete positive and complete co-positive maps of the $ICLM's$ induced by the mentioned groups.  It is known, due to~\cite{Woronowicz}, that for qubits, such a decomposition always exists, but without any method of finding it in a given case. 
 
  To give a flavour of our considerations, in Figure \ref{fig:S3_intro} we present two regions for which linear map $\Phi$, which is irreducibly covariant with respect to symmetric group $S(3)$, admitting the decomposition from~\eqref{dede}, is positive but not completely positive. One is calculated directly from the definition and describes the necessary and sufficient region, while the second one is obtained from the novel method based on the inverse reduction map (Section~\ref{invRR}), which is more universal but provides only subregion of positivity. 

\begin{figure}[h]
     \centering
     \subfloat[][]{\includegraphics[width=0.3\textwidth]{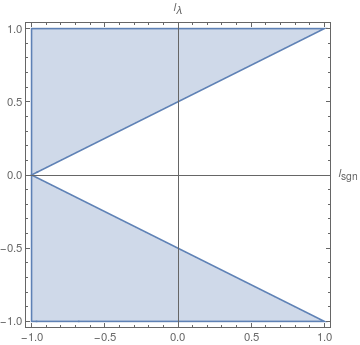}\label{<figure1>}}
     \subfloat[][]{\includegraphics[width=0.3\textwidth]{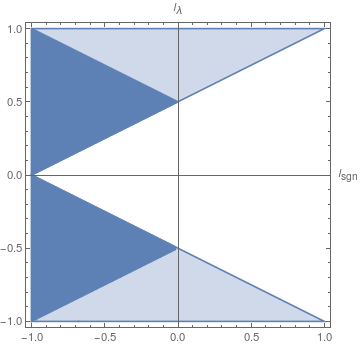}\label{<figure2>}}
    \caption{Possible solutions for the spectral parameters of $ICLM$ $\Phi$ (see~\eqref{dede}) with respect to $S(3)$ group, and its two-dimensional irrep labelled by $\lambda$. The region depends on two parameters $l_\text{sgn}, l_\lambda$, where $\sgn$ is the sign representation. The panel (a) shows the general solution where the map $\Phi$ is positive, but not completely positive, whereas in the panel (b) the dark blue region corresponds to the result obtained from the inverse reduction map.}
    \label{fig:S3_intro}
\end{figure}
\FloatBarrier
In Section~\ref{invRR} we deliver a novel method for the construction of positive $ICLMs$  for irreducible representations of dimensions higher than the qubit case. The method is based on the inverse reduction map~\cite{Moz2015} and it rids of the explicit checking the block positivity of the corresponding Choi-Jamio{\l}kowski image. We reduce the problem of checking the block positivity to a set of linear inequalities describing the region of positivity, which can be easily solved. Although we consider $ICLMs$, this method can be applied to any other linear map which proves the universality of the result. In particular, it can be applied to symmetric groups of higher order than $S(3)$ discussed in Section~\ref{general_cons}.
In Section~\ref{large_d}, we deliver examples of positive maps acting in higher dimensions. We examine a positive $ICLMs$ for the three-dimensional irreducible representation of $S(4)$ and the $d-$dimensional representations of the monomial unitary group. Here, we use two approaches: the inverse reduction map and the direct examination of the block-positivity of the Choi-Jamio{\l}kowski image of a given $ICLM$. 
While the latter method is of a little use in the general scenario, it proved efficient in the special case of determining positivity of $ICLMs$ with respect to the subgroup of monomial unitary group $MU(d)$ denoted as $MU(d,n)$, for the parameter $n\in \mathbb{N}$, and an arbitrary dimension $d$.  This map has been defined in~\cite{Daniel} and used in the same paper to generalize randomized benchmarking protocol, when gates are representations of a  finite group, but not necessarily irreducible or 2-design. Moreover, it turns out the monomial group $MU(d)$ plays a role in many-body state formalism and some aspects of quantum computations, for the details please see~\cite{Nest} and references within it, while its subgroup is also in the interest of quantum information community. It contains so-called $T-$gate~\cite{NielsenChuang}, defined as  $T=|0\>\<0|+\exp(\frac{\operatorname{i}\pi}{4})|1\>\<1|$, whenever $n\geq 8$. This gate, together with the Clifford gates, forms a universal set for quantum computations~\cite{NielsenChuang}.  The result obtained by the direct approach gives the necessary and sufficient conditions for the parameters of the  $ICLM$ in question, which is a significant improvement over conditions obtained from the inverse reduction map approach. The comparison of these two methods is also presented. In Section~\ref{Choi-porownanie} we compare results obtained in the previous Section, concerning qutrit ($d=3$) case, with the famous generalized Choi map~\cite{cho}, showing that our maps include the generalized Choi map and generate a larger region of positivity. Unfortunately, we could not find any new indecomposable positive maps in those regions. Nevertheless, for at least some cases of decomposable maps, we provided an explicit form of the decomposition. Finally, in Section~\ref{sec:unit_qc} yet another remarkable property of irreducible covariance is presented. It turns out, that in the qubit case every unital quantum channel can be expressed in the same terms as a quantum channel which is irreducibly covariant with respect to the two-dimensional irrep of the quaternion group $Q$. We show that Fujiwara-Algolet conditions~\cite{PhysRevA.59.3290} for $P$ and $CP$ for arbitrary qubit unital map obtained in this approach exploiting Bloch sphere representation coincide with the conditions for $P$ and $CP$ for quantum channels that are irreducibly covariant with respect to quaternion group. Thus, the conditions for $P$ and $CP$ for an arbitrary qubit unital map are connected with the quaternionic invariance of quantum channels.
\section{Definitions and notations}
\label{Def_notations}
In this section, we provide the necessary mathematical tools for our further considerations. In the first part, we present basic notions from linear algebra and entanglement theory. We focus on the concepts of an entanglement witness together with their (in)decomposability and their connection with positive maps. The second part is dedicated to the connection of the linear maps and their Choi-Jamio{\l}kowski images. We focus on the \textit{if and only if} conditions for positivity, complete positivity and complete copositivity for an arbitrary linear map. In particular, we introduce the group-theoretic background for studying linear maps which are irreducibly covariant.

\subsection{Quantum entanglement, entanglement witnesses and decomposability}
\label{wit_thm}
Let $\M (d, \cC)$ denote the space of $d\times d$ complex matrices and let
     $\{E_{ij}\}_{i,j=1}^d$, where $E_{ij} \equiv \ket{i}\bra{j}$, denote a basis
     of $\M (d, \cC)$ and by $\{|i\>\}_{i=1}^d$ we denote the standard basis for $\mathbb{C}^d$. The matrix resulting from the action of  map $\Phi \in \End[\M(d,\mathbb{C})]$ on any basis element $E_{ij} \in \M(d, \cC)$ can be expressed as $\Phi (E_{ij})=\sum_{k,l=1}^d\phi _{kl,ij}E_{kl}$.
The coefficients $\phi _{kl,ij}$ can be viewed as elements of a $d^2 \times d^2$ matrix, and hence we use the notation:
\begin{align}
\mat(\Phi) :=\left(\phi _{kl,ij}\right) \in \M(d^2, \cC).
\end{align}
In this and further sections by $\mathcal{H}=\mathbb{C}^d$ we denote Hilbert space of dimension $d$. Using this notation we define a set
\be
\mathcal{S}(\mathcal{H}):=\{\rho \in \mathcal{H} \ | \ \rho\geq 0, \ \tr \rho=1\}
\ee
of all states on $\mathcal{H}$. Having two Hilbert spaces $\mathcal{H}, \mathcal{K}$ we say that state $\rho\in \mathcal{S}(\mathcal{H}\ot \mathcal{K})$ is \textit{separable} if it can be written as $\rho=\sum_i p_i \sigma_i \ot \omega_i$, where $\sigma_i,\omega_i$ are states on $\mathcal{H},\mathcal{K}$ respectively, and $p_i$ are positive numbers satisfying $\sum_i p_i=1$, otherwise the state $\rho$ is \textit{entangled}.
Having above we are in a position to define objects called \textit{entanglement witnesses}~\cite{ter}:
\begin{definition}
	\label{def01}
	The hermitian operator $W \in \M(d^2,\mathbb{C})$ is called entanglement witness when:
	\begin{enumerate}
		\item $W \ngeq  0$,
		\item $\tr \left(\sigma W\right)\geq 0$ for all separable states $\sigma$,
		\item There exists at least one entangled state $\rho$, such that $\tr \left(\rho W\right)< 0$.
	\end{enumerate}
\end{definition}
	 The so-called \textit{Choi-Jamiołkowski isomorphism} gives useful characterization of entanglement witnesses. For a linear map $\Phi \in \End[\M(d,\mathbb{C})]$ its Choi-Jamio{\l}kowski image $J(\Phi)$ is given by~\cite{Jam,cho}:
	\be
	\label{Ch-J}
	J(\Phi):=\sum_{i,j=1}^dE_{ij}\ot \Phi(E_{ij})\in \M(d^2,\mathbb{C}).
	\ee
	Isomorphism defined in~\eqref{Ch-J} encodes properties of linear maps into properties of corresponding Choi-Jamio{\l}kowski image. 

Thanks to Choi-Jamio{\l}kowski isomorphism every entanglement witness is connected with positive, but not completely positive linear map  $\Phi \in \End[\M(d,\mathbb{C})]$ by
\be
\label{connP}
W=(\mathbf{1}\ot \Phi)P_d^+,
\ee
where $P_d^+$ is a projector on \textit{maximally entangled state} $|\psi^+_d\>=\frac{1}{\sqrt{d}}\sum_{i=1}^d|ii\>$. In the set of all entanglement witnesses we distinguish subset of \textit{decomposable} ones. Namely any decomposable entanglement witness $W$ admits the following decomposition
\be
\label{decompos}
W=A+B^{\Gamma},
\ee
with $A,B$ being positive operators on the space $\mathcal{H}\ot \mathcal{H}$, and $\Gamma$ being a partial transposition with respect to the standard basis. The corresponding map $\Phi_W$ is decomposable if it can be expressed as 
\begin{equation}
\Phi_W = \Phi^{(A)} + \Phi^{(B)}\circ T,
\end{equation}
where $\Phi^{(A)}, \Phi^{(B)}$ are completely positive, and $T$ stands for transposition with respect to standrard basis. Additionally, operators $W$ satisfying Definition~\ref{def01} for states being \textit{PPT entangled}, i.e. states $\rho$ for which $\rho^{\Gamma}\geq 0$, do not admit decomposition~\eqref{decompos} and they are called \textit{indecomposable entanglement witnesses}. At this point for more informations about entanglement witnesses and their properties we refer reader to a review paper~\cite{Chrust1}.

\subsection{Irreducibly covariant linear maps and quantum channels}
\label{S_ICLM}
Here we summarize all necessary facts and definitions about irreducibly covariant linear maps and we explain the connection of the Choi-Jamio{\l}kowski image of a general linear map with its positivity, complete positivity as well as complete co-positivity. For more details we refer reader to~\cite{Moz2017}, we also keep the original notation from the mentioned paper, since we consider here the complementary topic. Additionally, we prove to auxiliary facts, necessary for the further studies on the positivity.
\begin{definition}
\label{iclm}
A linear map $\Phi \in \End[\M(d,\mathbb{C})]$ is irreducibly covariant with respect to an irreducible representation $U$ of the finite group $G$  if
\be
\label{irr_cov}
\forall g\in G \quad \forall X\in \M(d,\mathbb{C}) \quad \Phi\left(U(g)X U^{\dagger}(g) \right)=U(g)\Phi(X)U^{\dagger}(g).
\ee
\end{definition}
The above definition can be phrased in the equivalent way, using concept of the adjoint map $
\Ad_{U}^{G}:G\longrightarrow \End\left[ \M(n,\mathbb{C})\right] $ defined through its action 
on any $X \in \M(n , \cC)$ as follows:
\begin{align}
\forall g \in G \quad \Ad_{U(g)}(X):= U(g)XU^{\dagger }(g).
\end{align}
Having that, we say that a linear map $\Phi\in \End[\M(d,\mathbb{C})]$ is irreducibly covariant if $\forall g\in G,\quad \forall X\in \M(n,\mathbb{C})\quad \Ad_{U(g)}[\Phi
(X)]=\Phi \lbrack \Ad_{U(g)}(X)]$. For further purposes, we define also  the commutant of the adjoint representation $\Int_G(\Ad_U)$ and its matrix representation $\Int_G(U\ot \overline{U})$ given as
\be
\Int_G(\Ad_U):=\left\lbrace \Psi\in \End[\M(d,\mathbb{C})] \ : \ \Psi\circ \Ad_U=\Ad_U\circ \Psi \right\rbrace,
\ee
and 
\be
\Int_G(\operatorname{mat}(\Ad_{U}^{G}))=\Int_G(U\ot \overline{U}):=\left\lbrace X\in \M(d^2,\mathbb{C}) \ : \ \forall g\in G \quad X\left(U(g)\ot \overline{U}(g) \right)=\left(U(g)\ot \overline{U}(g) \right)X  \right\rbrace.
\ee
Obviously a linear map $\Phi \in \End[\M(d,\mathbb{C})]$ is covariant with respect to $G$ and its irreducible representation $U$ when belongs to $\Int_G(\Ad_U)$ or equivalently $\operatorname{mat}(\Phi)$ belongs to $\Int_G(U\ot \overline{U})$.
In the most of the cases we study properties of $\Int_G(U\ot \overline{U})$ in the matrix space $\M(d^2,\mathbb{C})$, which is simpler (but equivalent) than studying $\Int_G(\Ad_U)$ in the space $\End[\M(d,\mathbb{C})]$. As was shown in~\cite{FHa} the representation $U\ot U^c:G\rightarrow \M(d^2,\mathbb{C})$ is not irreducible and we have
\be
\label{UUmult}
U\ot \overline{U}=\bigoplus_{\alpha \in \Theta} m_{\alpha}\varphi^{\alpha},
\ee
where $\Theta$ is the set of irreps of the group $G$ that appear in the decomposition $U\ot \overline{U}$, $\varphi ^{\alpha}$ are unitary irreps of the group $G$: $\forall \, g \in G$, $%
\varphi ^{\alpha}(g)=\left( \varphi _{ij}^{\alpha }(g)\right) \in \M\left(
d_{\alpha} ,\mathbb{C}\right) $,  and $m_{\alpha }$  is the multiplicity of the irrep $\varphi^{\alpha }$ of dimension $d_{\alpha}\equiv \dim \varphi^{\alpha }$. The multiplicity $m_{\alpha}$ is given by the 
following expression~\cite{NS}:
\begin{equation}
\label{mult}
m_{\alpha }=\frac{1}{|G|}\sum_{g\in G}\chi ^{\alpha }\left( g^{-1}\right) \chi ^{\Ad}(g),
\end{equation}
 $\chi ^{\alpha }$ is the character of the irrep $\varphi ^{\alpha }(g) $,  and $\chi^{\Ad_U}(g)=\left|\chi^U(g) \right|^2$ is the   character of the
adjoint representation $\Ad_{U}$. The identity irrep $\varphi^{\id}$ is always included in the decomposition~\eqref{UUmult} with the multiplicity one. 

If we restrict ourselves to the case when decomposition~\eqref{UUmult} is multiplicity free, i.e. all irreps in the set $\Theta$ occur with the multiplicity at most one.  In this situation we can write
\be
\Int_G(\Ad_U)=\operatorname{span}_{\mathbb{C}}\left\lbrace \Pi^{\alpha} \ : \ \alpha\in \Theta \right\rbrace.
\ee
The operators $\Pi^{\alpha}\in \End[\M(d,\mathbb{C})]$ are of the form
\be
\label{op_Pi}
\Pi^{\alpha}=\frac{d_{\alpha}}{|G|}\sum_{g\in G}\chi^{\alpha}\left(g^{-1} \right)\Ad_{U(g)}, \quad \alpha \in \Theta,
\ee
and
\be
\label{orto_Pi}
\Pi^{\alpha}\Pi^{\beta}=\delta_{\alpha \beta}\Pi^{\alpha},\quad (\Pi^{\alpha})^*=\Pi^{\alpha},\quad \sum_{\alpha \in \Theta}\Pi^{\alpha}=\id_{\End[\M(d,\mathbb{C})]},
\ee
where $\id_{\End[\M(d,\mathbb{C})]}$ is identity map on $\End[\M(d,\mathbb{C})]$. Equivalently we can write
\be
\Int_G(U\ot U^c)=\operatorname{span}_{\mathbb{C}}\left\lbrace \widetilde{\Pi}^{\alpha} \ : \ \alpha \in \Theta \right\rbrace, 
\ee 
where 
\be
\label{mat_Pi}
\widetilde{\Pi}^{\alpha}=\mat(\Pi^{\alpha})=\frac{d_{\alpha}}{|G|}\sum_{g\in G}\chi^{\alpha}\left(g^{-1} \right)U(g)\ot \overline{U}(g).
\ee
Orthogonality properties from~\eqref{orto_Pi} are also valid for matrix images $\widetilde{\Pi}^{\alpha}$. Since, the projectors from~\eqref{ICQC} have group-theoretical origin, they admit useful properties. Namely, we have the following:
\begin{fact}
	\label{F3}
	Let $\Pi^\alpha$ be projectors given through expression \eqref{ICQC}. Then we have, for all $X\in \M(|U|, \mathbb{C})$:
	\begin{itemize}
		\item $\Pi^{\operatorname{id}}(X)=\frac{1}{|U|}\text{Tr}(X)\mathbf{1}_{|U|},$
		\item $\alpha\neq {\operatorname{id}} \implies\text{Tr}\Pi^\alpha(X)=0. $
	\end{itemize}
The symbol $\id$ denotes the identity irrep of the group $G$, $|U|=\operatorname{dim}U$, and $\mathbf{1}_{|U|}$ is the identity operator of dimension $|U|$.
\end{fact}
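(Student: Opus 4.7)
The plan is to prove both items by substituting the explicit form \eqref{op_Pi} of $\Pi^\alpha$ and then invoking two classical representation-theoretic inputs: Schur's lemma for the first item, and orthogonality of irreducible characters for the second. Both reductions are direct; the only preparatory work is recognising which specialisation of \eqref{op_Pi} to use.

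For the first item, I would observe that the identity irrep of $G$ has character $\chi^{\operatorname{id}}\equiv 1$ and dimension $d_{\operatorname{id}}=1$, so the formula \eqref{op_Pi} collapses to the group twirl
\begin{equation*}
\Pi^{\operatorname{id}}(X) = \frac{1}{|G|}\sum_{g\in G}U(g)\,X\,U^{\dagger}(g).
\end{equation*}
A routine change of summation variable $g\mapsto h^{-1}gh$ shows that $\Pi^{\operatorname{id}}(X)$ commutes with $U(h)$ for every $h\in G$. Because $U$ is irreducible by hypothesis, Schur's lemma forces $\Pi^{\operatorname{id}}(X) = c(X)\,\mathbf{1}_{|U|}$ for some scalar $c(X)$ linear in $X$. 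Taking traces on both sides and using cyclicity gives $c(X)=\text{Tr}(X)/|U|$, which is the claim.

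For the second item, I would take the trace of \eqref{op_Pi} and pull $\text{Tr}(X)=\text{Tr}(U(g)XU^{\dagger}(g))$ outside the sum by cyclicity,
\begin{equation*}
\text{Tr}\,\Pi^\alpha(X)=\frac{d_\alpha}{|G|}\,\text{Tr}(X)\sum_{g\in G}\chi^\alpha(g^{-1}).
\end{equation*}
The remaining sum is $|G|$ times the Hermitian inner product $\langle \chi^{\operatorname{id}},\chi^\alpha\rangle$ of the characters of the trivial irrep and of $\varphi^\alpha$. By the orthogonality relations for irreducible characters of the finite group $G$, this inner product vanishes whenever $\alpha\neq \operatorname{id}$, which completes the proof.

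Neither step presents a genuine obstacle; both are standard applications of finite-group representation theory that are already in the toolbox used throughout the paper. The only point requiring mild bookkeeping is matching the normalising factor $d_\alpha/|G|$ and the convention $\chi^\alpha(g^{-1})=\overline{\chi^\alpha(g)}$ in \eqref{op_Pi} with the usual form $\sum_{g}\overline{\chi^\alpha(g)}\chi^\beta(g)=|G|\,\delta_{\alpha\beta}$ of the orthogonality relations, but this is purely notational.
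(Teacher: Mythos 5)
Your proposal is correct and takes essentially the same route as the paper: Schur's lemma applied to the group average for the first item, and cyclicity of the trace together with the character orthogonality relation $\frac{1}{|G|}\sum_{g\in G}\chi^{\alpha}(g^{-1})=\delta_{\alpha,\operatorname{id}}$ for the second. (One cosmetic remark: the invariance of the twirl under conjugation by $U(h)$ follows from the left-translation substitution $g\mapsto h^{-1}g$ rather than the conjugation $g\mapsto h^{-1}gh$ you wrote, but this does not affect the validity of the argument.)
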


\begin{proof}
The first part follows from the Schur Lemma~\cite{FHa}, since we average $X$ over whole group, and $\forall g\in G \ \chi^{\operatorname{id}}(g^{-1})=1$. To prove the second part we have to apply one of the  Schur orthogonality relation~\cite{FHa}:
\begin{align}
\label{orr2}
\frac{1}{|G|}\sum_{g\in G}\chi ^{\alpha }\left( g^{-1}\right)&=
\begin{cases}
1 \ if \ \alpha =\id,\\ 
0 \ if~\ \alpha \neq \id.
\end{cases}
\end{align} 
\end{proof}

Finally, a linear map $\Phi \in \End[\M(d,\mathbb{C})]$ that is irreducible covariant with respect to unitary irreducible representation $U$ of a finite group $G$ admits the decomposition
\be
\label{ICQC}
\Phi=\sum_{\alpha \in \Theta}l_{\alpha}\Pi^{\alpha}, \quad l_{\alpha}\in \mathbb{C}.
\ee
The numbers $l_{\alpha}$ are eigenvalues of a covariant map $\Phi$, whenever the decomposition~\eqref{UUmult} is multiplicity-free.  If the multiplicities $m_{\alpha}\geq 1$ in~\eqref{UUmult}, then form~\eqref{ICQC} is still irreducibly covariant, but not the most general.  Having Fact~\ref{F3} and the explicit form of the $ICLMs$ in~\eqref{ICQC}, we prove the following:
\begin{fact}
	\label{f:decomp}
	For all $X\in \M(|U|, \mathbb{C})$, with $|U|=\operatorname{dim}U$, we have
	\begin{equation}
	X =\frac{ \text{Tr}(X)}{|U|}\mathbf{1}_{|U|} + \sum_{\alpha\neq{\operatorname{id}}}e_\alpha,
	\end{equation}
	where $e_\alpha\coloneqq\Pi^\alpha(X)\in \text{Im}(\Pi^\alpha), \;\text{Tr}(e_\alpha) = 0,\; e_\alpha e_\beta = \delta_{\alpha\beta}e_\beta$ and if $X$ is 
	hermitian then $e_\alpha$ is hermitian as well.
\end{fact}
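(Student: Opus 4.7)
The proof is a direct application of the spectral-projection properties collected in~\eqref{orto_Pi} together with Fact~\ref{F3}. First, the completeness relation $\sum_{\alpha\in\Theta}\Pi^\alpha=\id_{\End[\M(d,\mathbb{C})]}$ from~\eqref{orto_Pi} gives $X=\sum_{\alpha\in\Theta}\Pi^\alpha(X)$ for any $X\in\M(|U|,\mathbb{C})$. Separating out the $\alpha=\operatorname{id}$ summand and applying the first part of Fact~\ref{F3} rewrites it as $\Pi^{\operatorname{id}}(X)=\tfrac{\operatorname{Tr}(X)}{|U|}\mathbf{1}_{|U|}$; defining $e_\alpha:=\Pi^\alpha(X)$ for $\alpha\neq\operatorname{id}$ then produces the claimed decomposition, and automatically $e_\alpha\in\operatorname{Im}(\Pi^\alpha)$.

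The stated trace and orthogonality properties fall out essentially for free. Tracelessness $\operatorname{Tr}(e_\alpha)=0$ for $\alpha\neq\operatorname{id}$ is exactly the second bullet of Fact~\ref{F3}. The orthogonality relation $e_\alpha e_\beta=\delta_{\alpha\beta}e_\beta$, understood as the superoperator statement $\Pi^\alpha(e_\beta)=\delta_{\alpha\beta}e_\beta$, follows by applying both sides of the projector identity $\Pi^\alpha\Pi^\beta=\delta_{\alpha\beta}\Pi^\alpha$ from~\eqref{orto_Pi} to $X$: the left side becomes $\Pi^\alpha(\Pi^\beta(X))=\Pi^\alpha(e_\beta)$, while the right side is $\delta_{\alpha\beta}\Pi^\alpha(X)=\delta_{\alpha\beta}e_\alpha$, which matches $\delta_{\alpha\beta}e_\beta$ since the Kronecker delta enforces $\alpha=\beta$.

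The one genuinely delicate point is Hermiticity preservation. Starting from the explicit formula~\eqref{op_Pi} and assuming $X=X^\dagger$, the adjoint reads
\[
\Pi^\alpha(X)^\dagger=\frac{d_\alpha}{|G|}\sum_{g\in G}\overline{\chi^\alpha(g^{-1})}\,U(g)X U(g)^\dagger .
\]
Using unitarity $U(g^{-1})=U(g)^\dagger$, which yields $\overline{\chi^\alpha(g^{-1})}=\chi^\alpha(g)$, and then relabelling $g\mapsto g^{-1}$ in the sum reduces the claim $\Pi^\alpha(X)^\dagger=\Pi^\alpha(X)$ to the requirement that $\chi^\alpha$ be a real-valued character. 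This is exactly the step I expect to be the main obstacle: it is automatic for all the irreps appearing later in the paper (the two-dimensional irreps of $S(3)$ and $Q$, the three-dimensional irrep of $S(4)$, etc.), so the claim holds as stated, but for a group with a genuinely complex irreducible character one would be forced to pair $\alpha$ with its conjugate $\overline{\alpha}$ and work with $e_\alpha+e_{\overline\alpha}$ in order to recover a Hermitian piece.
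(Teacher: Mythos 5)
Your proof is correct and follows essentially the same route as the paper, whose own proof is a one-line appeal to exactly the ingredients you use: completeness and orthogonality of the projectors from~\eqref{orto_Pi} together with Fact~\ref{F3}, with your superoperator reading $\Pi^\alpha(e_\beta)=\delta_{\alpha\beta}e_\beta$ being the only sensible interpretation of the relation $e_\alpha e_\beta=\delta_{\alpha\beta}e_\beta$ (as a matrix product a nonzero traceless idempotent is impossible). Your caveat on Hermiticity is a genuine refinement rather than a gap: since $\Pi^\alpha(X)^\dagger=\Pi^{\overline{\alpha}}(X)$ for Hermitian $X$, the claim as stated implicitly assumes $\chi^\alpha$ real (equivalently $\alpha\cong\overline{\alpha}$), which holds for all groups used later in the paper but is silently elided in the paper's proof.
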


\begin{proof}
	The statement follows directly from the form the definition of the rank-one projectors~\eqref{op_Pi}, their orthogonality and completeness~\eqref{orto_Pi}, and the previous Fact~\ref{F3}.
\end{proof}

It was shown in~\cite{Daniel}, that $l_{\alpha}$ can be chosen to be real if we restrict ourselves to $CPTP$ maps. 
To ensure that map $\Phi$ is trace-preserving ($TP$) we put $l_{\id}=1$, where $\id$ denotes the  trivial representation (see Proposition 25 in~\cite{Moz2017}). The complete positivity ($CP$) is obtained from the Choi-Jamiołkowski isomorphism, via theorem below.
	\begin{theorem}
		\label{mainTHM}
	The Choi-Jamio{\l}kowski isomorphism given through~\eqref{Ch-J} has the following properties:
	\begin{enumerate}
		\item A linear map is completely positive if and only if its Choi-Jamio{l}kowski image $J(\Phi)$ is a positive semidefinite matrix, i.e. $J(\Phi) \geq 0$.
		\item A linear map is positive but not completely positive if and only if its Choi-Jamio{l}kowski image is block-positive matrix, but not positive semifefinite. i.e. $\<x|\ot\<y|J(\Phi)|x\>\ot |y\>\geq 0$ for all vectors $|x\>,|y\>\in \mathbb{C}^d$, but there exists $|z\>\in \mathbb{C}^d\ot \mathbb{C}^d$ for which $\<z|J(\Phi)|z\><0$.
		\item A linear map is completely copositive ($CoP$) if and only if  Choi-Jamio{\l}kowski image of its composition with transposition $T$, i.e. $\Phi \circ T$ is positive semidefinite matrix.
	\end{enumerate}
	\end{theorem}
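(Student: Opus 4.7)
The three items are classical facts about the Choi-Jamiołkowski isomorphism, so my plan is to derive each from the definition~\eqref{Ch-J} together with the single observation that $J(\Phi)=d\cdot(\operatorname{id}\otimes\Phi)(|\psi_d^+\rangle\langle\psi_d^+|)$, which ties the map's action to a fixed positive operator.

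For item 1, the forward direction is immediate: if $\Phi$ is $CP$, then $\operatorname{id}\otimes\Phi$ preserves positivity by definition, so $J(\Phi)\geq 0$. For the converse, I would take a spectral decomposition $J(\Phi)=\sum_k|v_k\rangle\langle v_k|$, reshape each $|v_k\rangle\in\mathbb{C}^d\otimes\mathbb{C}^d$ into a matrix $V_k\in\M(d,\mathbb{C})$ via the canonical $\operatorname{vec}/\operatorname{mat}$ correspondence, and verify by direct computation on basis elements $E_{ij}$ that $\Phi(X)=\sum_k V_k X V_k^{\dagger}$. Once this Kraus form is in hand, complete positivity follows in one line, since each summand $X\mapsto V_k X V_k^{\dagger}$ is manifestly $CP$ and a sum of $CP$ maps remains $CP$.

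For item 2, the central calculation is the identity
\[
\langle\bar{x}|\otimes\langle y|\,J(\Phi)\,|\bar{x}\rangle\otimes|y\rangle \;=\; \langle y|\Phi(|x\rangle\langle x|)|y\rangle,
\]
which I would derive by expanding $|x\rangle\langle x|=\sum_{i,j}x_i\bar{x}_j E_{ij}$ and inserting into~\eqref{Ch-J}, using $\langle\bar{x}|E_{ij}|\bar{x}\rangle=x_i\bar{x}_j$. Since every positive operator is a conic combination of rank-one projectors, positivity of $\Phi$ is equivalent to $\Phi(|x\rangle\langle x|)\geq 0$ for all $|x\rangle$, which by the above identity is precisely the block-positivity of $J(\Phi)$. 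The qualifier \emph{but not completely positive} is then handled by negating the condition from item 1. Item 3 is a direct corollary of item 1 applied to the composed map $\Phi\circ T$, since by definition $\Phi$ is completely copositive iff $\Phi\circ T$ is $CP$.

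The main obstacle is the reconstruction step in item 1: being explicit and convention-correct about the $\operatorname{vec}/\operatorname{mat}$ identification, and verifying that the reshaped eigenvectors of $J(\Phi)$ really do assemble into a valid Kraus decomposition. A secondary subtlety is keeping track of complex conjugates in item 2, where it is the conjugate vector $|\bar{x}\rangle$ on the first tensor factor that picks out $|x\rangle\langle x|$ under the Choi map; everything else reduces either to unpacking a definition or invoking a preceding item.
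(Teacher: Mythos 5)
Your proposal is correct, and it is essentially the standard textbook proof; note, however, that the paper offers no proof of Theorem~\ref{mainTHM} at all --- it quotes these facts as known results of Jamio{\l}kowski and Choi, so there is nothing internal to compare your argument against. What you supply is exactly the classical route: item~1 via $J(\Phi)=d\,(\operatorname{id}\otimes\Phi)(P_d^+)$ for the forward direction and Choi's reshaping of the (nonnegatively weighted) spectral vectors of $J(\Phi)$ into Kraus operators for the converse; item~2 via the identity $\langle\bar{x}|\otimes\langle y|J(\Phi)|\bar{x}\rangle\otimes|y\rangle=\langle y|\Phi(|x\rangle\langle x|)|y\rangle$, combined with the observation that positivity of $\Phi$ reduces to its action on rank-one projectors (the same reduction the paper later records as Fact~\ref{th:pos_proj}) and that $x\mapsto\bar{x}$ is a bijection of $\mathbb{C}^d$, so block positivity in the paper's unconjugated form is equivalent; item~3 by applying item~1 to $\Phi\circ T$. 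Your two flagged subtleties are handled consistently with the paper's convention $J(\Phi)=\sum_{ij}E_{ij}\otimes\Phi(E_{ij})$: the conjugate on the first tensor factor is the right bookkeeping, and the reconstructed Kraus form (whatever the precise $\operatorname{vec}/\operatorname{mat}$ convention) is in any case a sum of maps $X\mapsto V_kXV_k^{\dagger}$, which suffices for complete positivity. The only cosmetic remark is that the paper defines $CP$ through positivity of $\Phi\otimes\operatorname{id}$ while you use $\operatorname{id}\otimes\Phi$; these are equivalent and the equivalence is immediate, so this is not a gap.
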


Thus, to have complete positivity, the coefficients $l_{\alpha}$ in expression~\eqref{ICQC} must satisfy the following set of inequalities~\cite{Moz2017}:
\begin{equation}
\label{sol}
\sum_{g\in G}\left( \sum_{\alpha \in \Theta }l_{\alpha }d_{\alpha} \chi
^{\alpha }\left( g^{-1}\right) \right) \left\vert \tr\left( V_{i}^{\beta
}U^{\dagger }(g)\right) \right\vert ^{2}\geq 0, \quad \forall \beta \in \Theta, \quad i \in \{1,\ldots, d_{\beta}\}.
\end{equation}%
In the above, $V_i^\beta \in \M(d_{\beta}, \cC)$ denote the normalized eigenvectors (see Proposition 20 in~\cite{Moz2017}) of rank one projectors $\Pi_i^\beta \in \End\left[\M( d_{\beta},\mathbb{C})\right]$ such that
$\Pi^\beta = \sum_i \Pi_i^\beta$. Expression~\eqref{sol} gives us the necessary and sufficient condition for a map in~\eqref{ICQC} to be CPTP whenever decomposition in~\eqref{UUmult} is multiplicity free. Whenever we consider a finite groups $G$ for which the numbers $m_{\alpha}$ in~\eqref{UUmult} can be greater than one, then we reduce to necessary conditions. This is due to the fact that if $m_{\alpha}>1$ for some irrep $\alpha$ in~\eqref{UUmult}, then the set $\Int_G(\Ad_U)$ is no longer spanned by the operators from~\eqref{op_Pi}.

\section{Positivite Irreducible Covariant Linear Maps}
\label{general_cons}
In the next three subsections, we discuss the conditions for a linear map admitting the decomposition from~\eqref{ICQC} to be an $ICLM$. In the first part, we do not restrict ourselves to finite groups having multiplicity-free decomposition~\eqref{UUmult}. Then the linear maps of the form~\eqref{ICQC} still belong to $\Int_G(\Ad_U)$, but in principle, we can find maps with the different decomposition than in~\eqref{ICQC}.   This, of course, gives us requirements for a specific but still wide class of linear maps satisfying condition of covariance from Definition~\ref{iclm}.  Moreover, such construction is universal and tractable in computations, allowing us for explicit constructions of new positive $ICLMs$. In the multiplicity-free case (low dimensions or monomial unitary group discussed later) we get very strong conditions for positivity since the form in~\eqref{ICQC} is the only allowed. The main result of this section is formulated in Theorem~\ref{th:pos_necess}. 

In the second part, we focus on the low dimensional case, where we work with multiplicity-free cases. In this regime, by straightforward calculations, we show that the necessary conditions obtained in the previous subsection are also sufficient. Additionally, in the third part, we briefly focus on co-positivity in the low dimensional case. In both, last subsections we focus on the permutation group $S(3)$ and quaternion group $Q$ with their two-dimensional irreps. We motivate this further discussion on qubit state transformations and Fujiwara-Algoet conditions.
\subsection{General Considerations}
\label{gen_cons2}
We can now formulate the full characterisation of a positive, but not completely positive map $\Phi\in \End[\M(d,\mathbb{C})]$ that is an $ICLM$  with respect to irreducible representation $U$ of a finite group $G$.  Let us observe that from the second point of Theorem~\ref{mainTHM} and the necessary form of $ICLM$ given in~\eqref{ICQC} we get the following 
\begin{theorem}
\label{th:P_ICLM}
Let the linear map $\Phi$ be ICLM with respect to unitary irreduible representation $U$ of finite group G, and thus it admits decomposition
\be
\Phi = \sum_\alpha l_\alpha\Pi^\alpha, \quad  l_\alpha \in \mathbb{C}.
\ee
Then it is positive, but not completely positive, when the spectral parameters $l_\alpha$ fulfil the following inequality
\be
\label{t_con}
\sum_{g\in G} \left(\sum_{\alpha \in \hat{G}}l_{\alpha}d_{\alpha}\chi^{\alpha}(g^{-1}) \right)\left|\tr\left(yx^TU^{\dagger}(g) \right)  \right|^2\geq 0,\quad \forall |x\>,|y\>\in \mathbb{C}^{\operatorname{dim}U}, 
\ee
where $\operatorname{dim}U$ is the dimension of the irrep $\alpha$ with respect to $ICLM$ is constructed for arbitrary $\ket{x}, \ket{y} \in \mathbb{C}^d$ and simultaneously violate at least one of the inequalities given in (\ref{sol}).
\end{theorem}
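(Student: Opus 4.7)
The plan is to reduce the positive-but-not-completely-positive criterion to a direct calculation on the Choi-Jamio{\l}kowski image of the decomposition \eqref{ICQC}, combined with the CP criterion \eqref{sol} already proved in \cite{Moz2017}. By the second part of Theorem~\ref{mainTHM}, $\Phi$ is positive but not completely positive if and only if $J(\Phi)$ is block-positive yet fails to be positive semidefinite. The ``not CP'' half is automatic from \eqref{sol}: violation of at least one of those inequalities is exactly the statement that $J(\Phi)\not\geq 0$. So the substantive task is to recast the block-positivity requirement $\langle x|\otimes\langle y|J(\Phi)|x\rangle\otimes|y\rangle\geq 0$ as inequality~\eqref{t_con}.

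First, I would use the identity $J(\Phi)=(\i\otimes\Phi)(|\phi\rangle\langle\phi|)$ for $|\phi\rangle=\sum_i|ii\rangle$ to obtain, for the summand $\Ad_{U(g)}$, the rank-one Choi matrix $|\phi_{U(g)}\rangle\langle\phi_{U(g)}|$ where $|\phi_{U(g)}\rangle=\sum_i|i\rangle\otimes U(g)|i\rangle$. Substituting the projector formula \eqref{op_Pi} into $\Phi=\sum_\alpha l_\alpha\Pi^\alpha$ then yields
\begin{equation}
J(\Phi)=\frac{1}{|G|}\sum_{g\in G}\left(\sum_{\alpha\in\Theta} l_\alpha d_\alpha\chi^\alpha(g^{-1})\right)|\phi_{U(g)}\rangle\langle\phi_{U(g)}|.
\end{equation}
Evaluating $\langle x|\otimes\langle y|\phi_{U(g)}\rangle=\sum_{ij}x_i^* y_j^*\,U(g)_{ji}$ and identifying this with $\overline{\tr(yx^TU^\dagger(g))}$ by direct index manipulation (the outer product $yx^T$ has entries $(yx^T)_{ij}=y_ix_j$, and $U^\dagger(g)_{ji}=\overline{U(g)_{ij}}$), one gets $|\langle x,y|\phi_{U(g)}\rangle|^2=|\tr(yx^TU^\dagger(g))|^2$ up to the freedom of swapping the labels of the arbitrary vectors $x,y$. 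Multiplying by $|G|$ produces exactly \eqref{t_con}.

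To close the argument, I would note that block positivity on all product vectors is equivalent to requiring the above quantity to be nonnegative for every $|x\rangle,|y\rangle\in\mathbb{C}^{\dim U}$, which matches the universal quantifier in the statement. Combining this with the negation of at least one inequality in \eqref{sol} yields the full characterisation: the first condition enforces $P$, the second rules out $CP$.

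The only delicate step is the identification of the inner product $\langle x,y|\phi_{U(g)}\rangle$ with the trace $\tr(yx^TU^\dagger(g))$; everything else is a bookkeeping exercise using \eqref{op_Pi} and the definition of $J$. The rest of the proof is essentially an invocation of Theorem~\ref{mainTHM} together with the previously established CP criterion, so no new machinery beyond a careful handling of complex conjugates and transpositions is required.
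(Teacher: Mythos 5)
Your proposal is correct and follows essentially the same route as the paper: both reduce the statement to evaluating $\<x|\ot\<y|J(\Phi)|x\>\ot|y\>$ on product vectors via Theorem~\ref{mainTHM} and identifying this quantity with the left-hand side of~\eqref{t_con}, with ``not CP'' handled by~\eqref{sol}. The only difference is cosmetic: you re-derive the Choi image from~\eqref{op_Pi} as a sum of rank-one terms $|\phi_{U(g)}\rangle\langle\phi_{U(g)}|$ (and correctly note the harmless overall factor $1/|G|$), whereas the paper quotes the equivalent expression from Proposition 26 of~\cite{Moz2017}.
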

\begin{proof}
    Using the expression for the Choi-Jamio{\l}kowski image of an $ICLM \Phi$, provided in Proposition 26 in~\cite{Moz2017}
    \be
    J(\Phi) =\frac{1}{|G|} \sum_{i, j} E_{ij} \otimes \sum_{g} \left( \sum_{\alpha \in \hat{G}}l_\alpha d_\alpha\chi^\alpha(g^{-1})U_{C(i)}(g)U^\dagger_{R(j)}(g)\right),
    \ee
    where $U_{C(i)}$ denotes $i-th$ column of $U$, $U^\dagger_{R(j)}$ denotes $j-th$ row of $U^\dagger$ and $E_{ij} = \ket{i}\bra{j}$ for $\ket{i}, \ket{j}$ belonging to a given basis in $\mathbb{C}^d$,  we obtain by direct calculation that
    \be
    \<x|\ot\<y|J(\Phi)|x\>\ot |y\>\ = \sum_{g\in G} \left(\sum_{\alpha \in \hat{G}}l_{\alpha}d_{\alpha}\chi^{\alpha}(g^{-1}) \right)\left|\tr\left(yx^TU^{\dagger}(g) \right)  \right|^2.
    \ee
\end{proof}
This result, although fully describes necessary and sufficient condition for a given $ICLM$ to be positive, but not completely positive, turns out to be of small use in case of most $ICLMs$ (although in Section \ref{large_d} we can see that in one scenario it proved remarakbely efficient). Nevertheless, we can derive some neccessary condition, that can be easiliy applied.

If we consider the special case, when $|x\>=|i\>, |y\>=|j\>$, where $|i\>,|j\>$ belong to orthonormal basis of $\mathbb{C}^{d}$ we can formulate necessary condition for positivity of a $ICLM$ $\Phi\in \End[\M(d,\mathbb{C})]$. Namely we have the following
\begin{fact}
\label{f:P_ICLM}
If the $ICLM$ map $\Phi\in \End[\M(d,\mathbb{C})]$ is positive, then the elements $\Phi_{ii,jj}$ of $\operatorname{mat}(\Phi)$, where $1\leq i,j\leq \operatorname{dim} U$ satisfy
\be
\Phi_{ii,jj} = \frac{1}{|G|}\sum_{g \in G} \left(\sum_{\gamma \in \Theta}l_\gamma d_{\gamma}\chi^\gamma(g^{-1})\right) \left(U(g)\otimes\overline{U}(g)\right)_{ii,jj} \geq 0.
\ee
\end{fact}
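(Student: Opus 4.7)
The plan is to specialize the block-positivity inequality of Theorem~\ref{th:P_ICLM} to the case where the two vectors in $\mathbb{C}^{\operatorname{dim}U}$ are chosen from the standard basis. Concretely, I would pick $|x\> = |j\>$ and $|y\> = |i\>$ in the inequality
\begin{equation*}
\sum_{g\in G} \left(\sum_{\alpha \in \Theta}l_{\alpha}d_{\alpha}\chi^{\alpha}(g^{-1}) \right)\left|\tr\left(yx^{T}U^{\dagger}(g) \right)  \right|^2\ge 0,
\end{equation*}
which must hold for every $|x\>,|y\>\in \mathbb{C}^{\operatorname{dim}U}$ by the block-positivity characterisation of positive maps (Theorem~\ref{mainTHM}, item 2) applied to the Choi-Jamio\l kowski image of an irreducibly covariant $\Phi$.

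For this choice, the standard basis vectors have real entries, so $y x^T = |i\>\<j|$, and a short calculation gives
\begin{equation*}
\tr\left(|i\>\<j|U^{\dagger}(g)\right) = \<j|U^{\dagger}(g)|i\> = \overline{U_{ij}(g)},
\end{equation*}
hence $|\tr(y x^T U^\dagger(g))|^2 = U_{ij}(g)\,\overline{U_{ij}(g)}$. Under the Kronecker index convention $(A\ot B)_{(ab),(cd)} = A_{ac}B_{bd}$, this product is precisely $(U(g)\ot \overline{U}(g))_{ii,jj}$. Inserting this identification back into the displayed inequality, and reintroducing the factor $1/|G|$ that is carried by the full Choi-Jamio\l kowski formula of Proposition~26 of~\cite{Moz2017} used in the proof of Theorem~\ref{th:P_ICLM}, yields exactly the claimed expression for $\Phi_{ii,jj}$ together with its non-negativity.

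As a cross-check, the same conclusion would follow from a direct argument bypassing Theorem~\ref{th:P_ICLM}: since $E_{jj} = |j\>\<j|$ is a rank-one projector and hence positive semidefinite, positivity of $\Phi$ gives $\Phi(E_{jj}) \ge 0$, so its diagonal entry $\<i|\Phi(E_{jj})|i\>$, which by the definition of $\operatorname{mat}(\Phi)$ equals $\Phi_{ii,jj}$, is non-negative. Expanding $\Phi$ via~\eqref{ICQC} and~\eqref{op_Pi} and collecting the $g$-sum reproduces the same explicit formula. Because the whole argument is a one-line specialization, I do not expect any substantive obstacle; the only point requiring care is tracking the tensor-product index convention so that the computed $|U_{ij}(g)|^2$ is correctly identified with the matrix entry $(U(g)\ot\overline{U}(g))_{ii,jj}$ appearing in the statement.
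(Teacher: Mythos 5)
Your proposal is correct and follows essentially the same route as the paper: it specializes the block-positivity inequality of Theorem~\ref{th:P_ICLM} to $|x\>=|j\>$, $|y\>=|i\>$, identifies $|\tr(yx^TU^\dagger(g))|^2=|U_{ij}(g)|^2=(U(g)\ot\overline{U}(g))_{ii,jj}$, and reads off the matrix element $\Phi_{ii,jj}$ (your explicit tracking of the $1/|G|$ factor is actually cleaner than the paper's). The direct cross-check via $\<i|\Phi(E_{jj})|i\>\geq 0$ is a valid equivalent restatement, not a different method.
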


\begin{proof}
Let ICLM $\Phi$ be positive and let $x = |j\>$ and $y=|i\>$ belong to some orthorormal basis of $\mathbb{C}^d$, and let $E_{ij} = \ket{i}\bra{j}$. Thus we have
\be
	\left| \tr\left( E_{ji}U^\dagger(g)\right)\right|^2 = \left| \sum_{k,l} \delta_{ki}\delta_{jl}U^\dagger_{lk}(g)\right|^2 = \left|U_{ji}^\dagger(g)\right|^2 = \left|U_{ij}\right|^2=\left(U(g)\otimes\overline{U}(g)\right)_{ii,jj}
\ee
and from \ref{t_con}
\begin{align}
0 \leq \sum_{g\in G} \left(\sum_{\alpha \in \hat{G}}l_{\alpha}d_{\alpha}\chi^{\alpha}(g^{-1}) \right)\left|\tr\left(yx^TU^{\dagger}(g) \right)  \right|^2 &=  \sum_{g\in G} \left(\sum_{\alpha \in \hat{G}}l_{\alpha}d_{\alpha}\chi^{\alpha}(g^{-1}) \right)\left| \tr\left( E_{ij}U^\dagger(g)\right)\right|^2 \\ &=\sum_{g\in G} \left(\sum_{\alpha \in \hat{G}}l_{\alpha}d_{\alpha}\chi^{\alpha}(g^{-1}) \right)\left(U(g)\otimes\overline{U}(g)\right)_{ii,jj} = \Phi_{ii,jj}.
\end{align}
\end{proof}



The inequalities in Theorem~\ref{th:P_ICLM} and Fact~\ref{f:P_ICLM}  are either technically difficult to solve or concern only a limited number of matrix elements of the matrix representation of a given ICLM.
(An analysis of  some geometrical properties of the solutions of
inequalities (23) is given in Section 7 of~\cite{Moz2017} and, what is
natural,  their solutions depend strongly on the group that defines
considered ICLM).
Surprisingly it appears that it possible to derive a universal 
necessary condition for positivity of ICLM, which is completely
independent from the group structure of the ICLM. The necessary
condition gives a geometrical restriction on the spectral parameters of
ICLM $\Phi$ and may be formulated in the following way.

\begin{theorem}\label{th:pos_necess}
	Let $\Phi$ be an $ICLM$ of the form
	\be
	\Phi=\sum_{\alpha \in \Theta}l_{\alpha}\Pi^{\alpha},\quad l_{\alpha}\in\mathbb{C},
	\ee
	described in~\eqref{ICQC}. Then if $\Phi$ is positive, the spectral parameters must obey \be|l_{\alpha}|\leq l_{\operatorname{id}}\quad \forall l_{\alpha}:\alpha \neq \operatorname{id}.\ee
\end{theorem}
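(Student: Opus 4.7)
The plan is to derive the claimed geometric bound from a short duality argument based on (i) the Hilbert--Schmidt self-adjointness of the projectors $\Pi^\alpha$ and (ii) the orthogonality relations~\eqref{orto_Pi}.

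First I would show that $l_{\operatorname{id}}\geq 0$. By Fact~\ref{F3}, $\Pi^{\operatorname{id}}(\mathbf{1}/|U|)=\mathbf{1}/|U|$, and the orthogonality $\Pi^\alpha \Pi^{\operatorname{id}}=0$ from~\eqref{orto_Pi} gives $\Pi^\alpha(\mathbf{1}/|U|)=0$ for $\alpha\neq\operatorname{id}$; hence $\Phi(\mathbf{1}/|U|)=(l_{\operatorname{id}}/|U|)\mathbf{1}$. Since $\mathbf{1}/|U|$ is a state and $\Phi$ is positive, $l_{\operatorname{id}}\geq 0$. Moreover, using Fact~\ref{F3} once more, $\tr(\Phi(\rho))=l_{\operatorname{id}}$ for every state $\rho$, because $\tr(\Pi^\alpha(\rho))=0$ whenever $\alpha\neq\operatorname{id}$.

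The central step is the identity
$$\tr\bigl(\hat e\,\Phi(\rho)\bigr)=l_\alpha\,\tr\bigl(\hat e\,\rho\bigr),$$
valid for every hermitian $\hat e\in\text{Im}(\Pi^\alpha)$ with $\alpha\neq\operatorname{id}$ and every state $\rho$. To prove it I would expand $\Phi(\rho)=\sum_\beta l_\beta\Pi^\beta(\rho)$, transfer each $\Pi^\beta$ onto $\hat e$ using the Hilbert--Schmidt self-adjointness of $\Pi^\beta$ (a direct consequence of $\chi^\beta(g^{-1})=\overline{\chi^\beta(g)}$ in~\eqref{op_Pi}), and then invoke $\Pi^\beta\Pi^\alpha=\delta_{\alpha\beta}\Pi^\alpha$ from~\eqref{orto_Pi} to collapse the sum to the single term $\beta=\alpha$. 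A by-product is that $l_\alpha$ is forced to be real whenever $\tr(\hat e\,\rho)\neq 0$, because both sides are real when $\Phi$ preserves hermiticity.

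To conclude I would combine this identity with the elementary inequality $|\tr(\hat e\,\Phi(\rho))|\leq\|\hat e\|_\infty\,\tr(\Phi(\rho))=\|\hat e\|_\infty\,l_{\operatorname{id}}$ (valid because $\Phi(\rho)\geq 0$ and $\hat e$ is hermitian), yielding $|l_\alpha|\,|\tr(\hat e\,\rho)|\leq\|\hat e\|_\infty\,l_{\operatorname{id}}$. Choosing $\rho=|v\rangle\langle v|$ for a normalised eigenvector $|v\rangle$ of $\hat e$ belonging to an eigenvalue of largest modulus saturates $|\tr(\hat e\,\rho)|=\|\hat e\|_\infty$ and delivers $|l_\alpha|\leq l_{\operatorname{id}}$, provided a nonzero hermitian $\hat e\in\text{Im}(\Pi^\alpha)$ exists. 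This existence is automatic: $\Pi^\alpha\neq 0$ since $\alpha$ appears in~\eqref{UUmult}, and $\Pi^\alpha$ preserves hermiticity by the same character identity, so splitting any witness of $\Pi^\alpha\neq 0$ into its hermitian and anti-hermitian parts produces such an $\hat e$. The delicate point I would check most carefully is the Hilbert--Schmidt self-adjointness used in the central identity, since it is the one place where the group-theoretic structure of $\Pi^\alpha$ enters essentially; the rest of the proof is purely convex-analytic and uses nothing beyond positivity of $\Phi(\rho)$ and the spectral characterisation of $\|\hat e\|_\infty$.
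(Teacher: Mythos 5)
Your argument is correct and reaches the stated bound, but it finishes differently from the paper. The paper also starts from a traceless hermitian $e_\alpha\in\operatorname{Im}(\Pi^\alpha)$ (via Fact~\ref{f:decomp}), but then works \emph{primally}: it builds the explicit positive semidefinite inputs $f^\alpha=|\lambda_m^\alpha|\mathbf{1}+e_\alpha$ and $f^{\prime\alpha}=\lambda_M^\alpha\mathbf{1}+e_\alpha$, uses only the eigenvector property $\Phi(e_\alpha)=l_\alpha e_\alpha$ together with $\Phi(\mathbf{1})=l_{\id}\mathbf{1}$, and reads off a negative eigenvalue of $\Phi(f^\alpha)$ or $\Phi(f^{\prime\alpha})$ whenever $l_\alpha>l_{\id}$ or $l_\alpha<-l_{\id}$; tracelessness of $e_\alpha$ is what guarantees eigenvalues of both signs. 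You instead argue \emph{dually}: the Hilbert--Schmidt self-adjointness of the $\Pi^\beta$ (true, and a genuine extra ingredient the paper never needs) plus the orthogonality \eqref{orto_Pi} give $\tr\bigl(\hat e\,\Phi(\rho)\bigr)=l_\alpha\tr(\hat e\,\rho)$ and $\tr\Phi(\rho)=l_{\id}$, after which H\"older's inequality $|\tr(\hat e B)|\leq\|\hat e\|_\infty\tr B$ for $B\geq0$ and the choice of $\rho$ as the top eigenprojector of $\hat e$ yield $|l_\alpha|\leq l_{\id}$. What each route buys: the paper's version is constructive, exhibiting the concrete states on which positivity fails (which is reused in spirit in the low-dimensional sections), and needs nothing beyond the spectral action of $\Phi$; yours treats a priori complex $l_\alpha$ uniformly (the modulus bound and the realness of $l_\alpha$ fall out together), does not need tracelessness of $\hat e$, and makes the witness/duality structure explicit. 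One caveat, which you share exactly with the paper rather than introduce: your claim that $\Pi^\alpha$ preserves hermiticity ``by the same character identity'' in fact gives $\Pi^\alpha(X)^\dagger=\Pi^{\bar\alpha}(X)$ for hermitian $X$, so a nonzero hermitian element of $\operatorname{Im}(\Pi^\alpha)$ exists only when $\chi^\alpha$ is real (self-conjugate $\alpha$); this is precisely the implicit assumption behind Fact~\ref{f:decomp}, and it holds for every group treated in the paper, so it is not a gap relative to the paper's own proof.
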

\begin{proof}
	Let us  consider arbitrary hermitian $e_\alpha$  from the Fact \ref{f:decomp}. Since it is traceless (see Fact~\ref{F3}) its eigenvalues must fulfil $\lambda_M^\alpha > 0$ and $\lambda_m^\alpha<0$ (where $\lambda_M^\alpha$  and $\lambda_m^\alpha$ denote maximal and minimal eigenvalue, respectively). Now it is easy to check, that either $\lambda_M^\alpha \geq |\lambda_m^\alpha|$, or $\lambda_M^{\prime\alpha} \geq |\lambda_m^{\prime\alpha}|$ (where $\lambda^{\prime\alpha}$ is an eigenvalue of $e^{\prime}_{\alpha}\coloneqq-e_\alpha$ which belongs to $\text{Im}(\Pi^\alpha)$ as well). 
	
	This implies, that there exists $e_\alpha\in \text{Im}(\Pi^\alpha)$, such that $\lambda_M^\alpha \geq |\lambda_m^\alpha|$. Let us now consider 
	\begin{equation}
		f^\alpha\coloneqq |\lambda_m^\alpha|\mathbf{1} + e_\alpha \geq 0.
	\end{equation}
	Since $\Phi(f^\alpha) = l_{\operatorname{id}}|\lambda_m^\alpha|\mathbf{1} + l_\alpha e_\alpha$ we can clearly see, that for $l_\alpha > l_{\operatorname{id}}$ the map $\Phi$ cannot be positive. 
	Again, setting
	\begin{equation}
		f^{\prime\alpha}\coloneqq \lambda_M^\alpha\mathbf{1} + e_\alpha\geq0,
	\end{equation}
	we get $\Phi(f^{\prime\alpha}) = l_{\operatorname{id}}\lambda_M^\alpha\mathbf{1} + l_\alpha e_\alpha$ and thus, for $l_\alpha < -l_{\operatorname{id}},$ $ICLM$ $\Phi$ cannot be positive either.
\end{proof}
From the above theorem one can see that in the space of spectral parameters all the spectral parameters $l_{\alpha}$, such that $\alpha \neq \operatorname{id}$, must lie inside a cuboid whose edge lengths are equal to $2l_{\operatorname{id}}$. In particular $l_{\operatorname{id}}$ must be positive.

\subsection{Direct Approach to Positivity - Low Dimensions}
 In the following section, we discuss the case when the dimensions of the corresponding irreducible representation are low (two-dimensional). The general structure of every low-dimensional positive map is known due to works by Størmer and Woronowicz~\cite{Stromer,Woronowicz}. Namely, we know that every such map can be written as a sum of completely positive and completely co-positive maps. Nevertheless, it does not give us an explicit method of how to build positive maps, together with the mentioned decomposition, for a given finite group.
First, we focus on the case of two-dimensional irrep of the symmetric group $S(3)$ and then we move to appropriate irrep of the quaternion group $Q$. Besides, we present direct decomposition of maps into a sum of $CP$ and $CoP$ maps and show that components are also irreducibly covariant maps.
In these particular considerations, we set $l_{\id}=1$ in~\eqref{ICQC}, since we restrict to unital maps, i.e. preserving the identity operator.
The following calculations rely solely on the spectral decomposition of $ICLM \Phi$ given in \ref{ICQC} and the following  
\begin{fact}
\label{th:pos_proj}
	A linear map $\Phi :\mathbb{M}(n,\mathbb{C})\rightarrow \mathbb{M}(n,\mathbb{C})$ is positive if and only if $\Phi $ transforms rank one projectors into positive semidefinite
	matrices.
\end{fact}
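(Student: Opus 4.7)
The plan is to split the biconditional into its two directions, with the forward implication being an immediate consequence of the definition of positivity and the reverse implication relying on the spectral theorem.

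For the forward direction ($\Rightarrow$), I would simply note that a rank one projector $|v\>\<v|$ (with $\<v|v\>=1$) is by construction a positive semidefinite matrix, with eigenvalues $1$ (once) and $0$ (with multiplicity $n-1$). Therefore, if $\Phi$ is positive in the sense of Definition given in the paper, $\Phi(|v\>\<v|)\geq 0$ holds automatically for every such projector. This direction requires no work.

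For the reverse direction ($\Leftarrow$), I would take an arbitrary positive semidefinite $M\in\M(n,\mathbb{C})$ and apply the spectral theorem to write
\begin{equation}
M=\sum_{i=1}^{n}\lambda_i\, |v_i\>\<v_i|,\qquad \lambda_i\geq 0,
\end{equation}
where $\{|v_i\>\}_{i=1}^n$ is an orthonormal eigenbasis of $M$. By linearity of $\Phi$,
\begin{equation}
\Phi(M)=\sum_{i=1}^{n}\lambda_i\,\Phi(|v_i\>\<v_i|).
\end{equation}
Each $\Phi(|v_i\>\<v_i|)$ is positive semidefinite by hypothesis, and $\lambda_i\geq 0$, so $\Phi(M)$ is a nonnegative linear combination of positive semidefinite matrices and hence positive semidefinite itself. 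This establishes positivity of $\Phi$ on the whole positive cone.

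There is no genuine obstacle here; the statement is essentially a restatement of the fact that the extreme rays of the cone of positive semidefinite matrices are generated by rank one projectors, so checking a linear map on this generating set suffices. The only care needed is to invoke the spectral theorem explicitly and use linearity, which together reduce the general case to the rank one case.
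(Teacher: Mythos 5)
Your proof is correct: the forward direction is immediate from the definition of positivity, and the converse follows from writing any positive semidefinite matrix as a nonnegative combination of rank one projectors via the spectral theorem and invoking linearity. The paper states this Fact without proof, treating it as standard, and your argument is precisely the standard justification it implicitly relies on, so there is nothing to add or correct.
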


\subsubsection{Positive irreducibly covariant linear maps for $S(3)$}
\label{subsubsub}
The permutation group $S(3)$ contains three irreducible representations: the identity representation $\phi^{\operatorname{id}}$, the sign representation $\phi^{\operatorname{sgn}}$, and non-trivial two dimensional representation $\phi^{\lambda}$.
Here we consider the positivity of $ICLM$ given in~\eqref{ICQC}, where instead of $U$ in~\eqref{op_Pi}, we take the two-dimensional irreducible representation $\phi^{\lambda}$.

Thanks to the Theorem \ref{th:pos_proj}, in order to check whether a map $\Phi :\mathbb{M}(n,\mathbb{C})\rightarrow \mathbb{M}(n,\mathbb{C})$ is positive we have to evaluate under what conditions the following holds
\be
\label{bla2}
\Phi (P)\geq 0:P=pp^{+}=\left( 
\begin{array}{cc}
|p_{1}|^{2} & p_{1}\overline{p}_{2} \\ 
\overline{p}_{1}p_{2} & |p_{2}|^{2}%
\end{array}%
\right) ,\quad p=\left( 
\begin{array}{c}
p_{1} \\ 
p_{2}%
\end{array}%
\right) ,\quad |p_{1}|^{2}+|p_{2}|^{2}=1.
\ee
Using the explicit form of $ICLM$ for the group $S(3)$, which is of the form $\Phi =\Pi ^{\id}+l_{\sgn}\Pi ^{\sgn}+l_{\lambda }\Pi^{\lambda }$, since $\phi^{\lambda}\ot \bar{\phi}^{\lambda}=\phi^{\operatorname{id}}\oplus \phi^{\operatorname{sgn}}\oplus \phi^{\lambda}$, hence~\eqref{bla2} reads as
\be
\label{PP}
\Phi (P)=\left( 
\begin{array}{cc}
\frac{1}{2}(1+l_{\sgn}(|p_{1}|^{2}-|p_{2}|^{2})) & l_{\lambda }p_{1}\overline{p}%
_{2} \\ 
l_{\lambda }\overline{p}_{1}p_{2} & \frac{1}{2}%
(1+l_{\sgn}(|p_{2}|^{2}-|p_{1}|^{2}))%
\end{array}%
\right) .
\ee
From this we deduce  that $l_{\sgn}$, $l_{\lambda }\in \mathbb{R}$, otherwise $\Phi (P)$ would not be hermitian. Now calculating the
eigenvalues of~\eqref{PP} one gets that the map $\Phi =\Pi ^{\id}+l_{\sgn}\Pi ^{\sgn}+l_{\lambda }\Pi ^{\lambda }$ is positive if and only if
	\be
	\label{pp3}
	1\geq l_{\sgn}^{2}(|p_{1}|^{2}-|p_{2}|^{2})^{2}+4l_{\lambda }^{2}|p_{1}|^{2}|%
	\overline{p}_{2}|^{2},
	\ee
	for any $p=\left( 
	\begin{array}{c}
	p_{1} \\ 
	p_{2}%
	\end{array}%
	\right), |p_{1}|^{2}+|p_{2}|^{2}=1.$
Now let us observe that $(|p_{1}|^{2}-|p_{2}|^{2|})^{2}+4|p_{1}|^{2}|\overline{p}_{2}|^{2}=1$, so we have 
\be
\label{asum1}
|p_{1}|^{2}-|p_{2}|^{2|}=\sin \theta ,\quad 2|p_{1}||\overline{p}_{2}|=\cos
\theta ,\quad \theta \in \lbrack 0,2\pi ). 
\ee
Using directly expression~\eqref{pp3} and expression~\eqref{asum1} we see that ma $\Phi$ is positive if and only if 
	\be
	\label{bla1}
	1\geq l_{\sgn}^{2}\sin ^{2}\theta +l_{\lambda }^{2}\cos ^{2}\theta ,
	\ee
	for any $\theta \in \lbrack 0,2\pi ).$ Above condition is equivalent to $|l_{\sgn}|\leq 1, |l_{\lambda }|\leq 1$. Indeed, if $|l_{\sgn}|>1$ or $|l_{\lambda }|>1$, then for,  $\theta =%
	\frac{\pi }{2}$ or $\theta =0$ the LHS of~\eqref{bla1} is violated. If $|l_{\sgn}|\leq 1,\quad |l_{\lambda }|\leq 1$, then we have 
	\be
	l_{\sgn}^{2}\sin ^{2}\theta +l_{\lambda }^{2}\cos ^{2}\theta \leq \sin
	^{2}\theta +\cos ^{2}\theta =1.
	\ee
Therefore we may formulate the main result for this paragraph:
\begin{theorem}
	The linear map $\Phi =\Pi ^{\id}+l_{\sgn}\Pi ^{\sgn}+l_{\lambda }\Pi ^{\lambda }$
	is positive if and only if $l_{\sgn}$, $l_{\lambda }\in \mathbb{R}$ and lie in the rectangle $|l_{\sgn}|\leq 1,\quad |l_{\lambda }|\leq 1.$
\end{theorem}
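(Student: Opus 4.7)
The plan is to invoke \fref{th:pos_proj} so that positivity reduces to checking $\Phi(P)\geq 0$ for every rank-one projector $P=pp^\dagger\in\M(2,\C)$ with $|p_1|^2+|p_2|^2=1$. Using the decomposition $\phi^\lambda\otimes\overline{\phi}^\lambda=\phi^{\operatorname{id}}\oplus\phi^{\operatorname{sgn}}\oplus\phi^\lambda$ (which is multiplicity-free), the form \eqref{ICQC} simplifies to $\Phi=\Pi^{\operatorname{id}}+l_{\sgn}\Pi^{\sgn}+l_\lambda\Pi^\lambda$, and using \fref{F3} together with the explicit action of the projectors on a $2\times 2$ matrix (which I would compute once, obtaining the off-diagonal entry multiplied by $l_\lambda$ and the traceless diagonal part multiplied by $l_{\sgn}$), the image $\Phi(P)$ takes the explicit form \eqref{PP}.

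From this matrix form, the first conclusion is that $\Phi(P)$ can be hermitian for generic $p$ only if both $l_{\sgn}$ and $l_\lambda$ are real, which gives the reality constraint. Next, I would compute the determinant of $\Phi(P)$: the trace is $1$, so positive semidefiniteness is equivalent to a non-negative determinant, which rearranges to the single inequality \eqref{pp3}. The key trigonometric reduction is to note that $(|p_1|^2-|p_2|^2)^2+4|p_1|^2|p_2|^2=(|p_1|^2+|p_2|^2)^2=1$, so one can parametrize as in \eqref{asum1} by $|p_1|^2-|p_2|^2=\sin\theta$ and $2|p_1||p_2|=\cos\theta$, converting the inequality to $l_{\sgn}^2\sin^2\theta+l_\lambda^2\cos^2\theta\leq 1$ that must hold for all $\theta\in[0,2\pi)$.

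It remains to show this universal trigonometric bound is equivalent to the rectangle $|l_{\sgn}|\leq 1$, $|l_\lambda|\leq 1$. For necessity, evaluate at $\theta=\pi/2$ (giving $l_{\sgn}^2\leq 1$) and $\theta=0$ (giving $l_\lambda^2\leq 1$). For sufficiency, note that under these bounds $l_{\sgn}^2\sin^2\theta+l_\lambda^2\cos^2\theta\leq\sin^2\theta+\cos^2\theta=1$ holds for every $\theta$. Combining the two directions yields the claimed iff characterisation.

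I do not expect a genuine obstacle here: the whole argument rests on a single eigenvalue computation in $2\times 2$ and a clean trigonometric parametrization. The only subtle point is to make sure, when extracting necessity of reality of $l_{\sgn},l_\lambda$, that one argues over a \emph{generic} choice of $p$ (not just a single fixed projector), so that both the diagonal and off-diagonal hermiticity constraints are activated; this is immediate from \eqref{PP} since $p_1\bar p_2$ sweeps all of $\{z\in\C:|z|\leq 1/2\}$ as $p$ varies over unit vectors. With that small care taken, the direct computation matches the general necessary bound from \fref{th:pos_necess}, and shows that in this low-dimensional multiplicity-free setting the necessary condition $|l_\alpha|\leq l_{\operatorname{id}}=1$ is also sufficient.
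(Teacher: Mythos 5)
Your proposal is correct and follows essentially the same route as the paper: reduce positivity to $\Phi(P)\geq 0$ on rank-one projectors, read off reality of $l_{\sgn},l_\lambda$ from hermiticity of \eqref{PP}, reduce to the single scalar inequality \eqref{pp3}, and use the parametrization \eqref{asum1} to show equivalence with the rectangle $|l_{\sgn}|\leq 1$, $|l_\lambda|\leq 1$. Your use of trace-plus-determinant in place of the paper's explicit eigenvalue computation is an equivalent way to state positive semidefiniteness of the $2\times 2$ matrix and changes nothing substantive.
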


\subsubsection{Positive irreducibly covariant linear maps for quaternion group $Q.$}
Let us consider the quaternion group $Q$, defined by
\be
\label{defQ}
Q \coloneqq \<\pm e,\pm i,\pm j,\pm k:i^2=j^2=k^2=ijk=-e\>,
\ee
with its five irreducible representations: $\phi^{\id}$ - identity representation, $ \phi^{t_1},\phi^{t_2},\phi^{t_3}$ -sign representations and two dimensional irrep $\phi^{t_4}$.
 Then, the $ICLM$ $\Phi$ generated by the two-dimensional irreducible representation $\phi^{t_4}$, is of the form $\Phi=l_\id\Pi^\id + l_{t_1} \Pi^{t_1 }+ l_{t_2} \Pi^{t_2 }+ l_{t_3} \Pi^{t_3}$, since $\phi^{t_4}\ot \bar{\phi}^{t_4}=\phi^{\operatorname{id}}\oplus \phi^{t_1}\oplus \phi^{t_2}\oplus \phi^{t_3}$.  In the matrix representation the map $\Phi$ is of the form
\begin{align}\label{eq:iclm_q}
	\operatorname{mat}(\Phi) &= \frac{1}{2}\begin{pmatrix} l_{\id}+l_{t_2} & 0 & 0 & l_{\id}-l_{t_2}\\ 0 &l_{t_1}+l_{t_3}&l_{t_3}-l_{t_1}&0\\ 0 &l_{t_3}-l_{t_1}&l_{t_1}+l_{t_3}&0\\l_{\id}-l_{t_2} & 0 & 0 & l_{\id}+l_{t_2} \end{pmatrix}.
\end{align}
Due to Theorem~\ref{th:pos_proj}, restricting only to rank one projectors $P=pp^{+}$ for $p=(p_1,p_2)^t$, such that $|p_{1}|^{2}+|p_{2}|^{2}=1$, and setting $l_\id=1$, we have
\be
	\Phi (P)=\frac{1}{2}\left( 
	\begin{array}{cc}
	1+l_{t_2}(|p_{1}|^{2}-|p_{2}|^{2}) & l_{t_1}(p_{1}\overline{p}_{2}-\overline{p}%
	_{1}p_{2})+l_{t_3}(p_{1}\overline{p}_{2}+\overline{p}_{1}p_{2}) \\ 
	l_{t_1}(\overline{p}_{1}p_{2}-p_{1}\overline{p}_{2})+l_{t_3}(\overline{p}%
	_{1}p_{2}+p_{1}\overline{p}_{2}) & 1+l_{t_2}(|p_{2}|^{2}-|p_{1}|^{2})%
	\end{array}%
	\right) . 
	\ee
	Now taking $q=(q_1,q_2)^t$, such that $|q_{1}|^{2}+|q_{2}|^{2}=1$, we have for any normalised vectors $p,q$ $q^{+}\Phi (P)q\geq 0$  if and only if $%
	|l_{i}|\leq 1$, where  $i\in \{\operatorname{id}, t_1,t_2,t_3,t_4\}$. We can summarise our findings in the following:
\begin{theorem}
	\label{thmQQ}
	The map $\Phi=\Pi^\id + l_{t_1} \Pi^{t_1 }+ l_{t_2} \Pi^{t_2 }+ l_{t_3} \Pi^{t_3}$, generated by the two-dimensional irrep $\phi^{t_4}$ is positive if and only if $%
	|l_{i}|\leq 1$, for $i\in \{\operatorname{id}, t_1,t_2,t_3,t_4\}$.
\end{theorem}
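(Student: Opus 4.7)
The plan is to follow exactly the same strategy used for $S(3)$ above: invoke Fact~\ref{th:pos_proj} to reduce positivity to checking $\Phi(P)\ge 0$ for every rank-one projector $P=pp^{\dagger}$ with $p=(p_1,p_2)^t$ and $|p_1|^2+|p_2|^2=1$, then use the explicit matrix form of $\Phi(P)$ that has already been written out in the excerpt. Before doing anything analytic I would record that $\Phi(P)$ must be Hermitian; inspection of its off-diagonal and diagonal entries then forces $l_{t_1},l_{t_2},l_{t_3}\in\mathbb{R}$, so from this point on the three spectral parameters are real.

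Because the paragraph fixes $l_{\id}=1$, the trace of the $2\times 2$ matrix $\Phi(P)$ equals $1$ identically, so positivity is equivalent to the single condition $\det\Phi(P)\ge 0$. Writing $p_1\overline{p}_2-\overline{p}_1 p_2=2i\,\Im(p_1\overline{p}_2)$ and $p_1\overline{p}_2+\overline{p}_1 p_2=2\,\Re(p_1\overline{p}_2)$, a short calculation gives
\be
4\det\Phi(P)=1-l_{t_2}^{2}(|p_1|^2-|p_2|^2)^{2}-4l_{t_3}^{2}\,\Re(p_1\overline{p}_2)^{2}-4l_{t_1}^{2}\,\Im(p_1\overline{p}_2)^{2}.
\ee
Introducing Bloch-sphere parameters $u=|p_1|^2-|p_2|^2$, $v=2\Re(p_1\overline{p}_2)$, $w=2\Im(p_1\overline{p}_2)$, the normalisation $|p_1|^2+|p_2|^2=1$ is equivalent to $u^{2}+v^{2}+w^{2}=1$, and positivity of $\Phi(P)$ on every rank-one projector becomes the uniform inequality $l_{t_2}^{2}u^{2}+l_{t_3}^{2}v^{2}+l_{t_1}^{2}w^{2}\le 1$ over the unit sphere in $\mathbb{R}^{3}$.

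From this reformulation both directions are immediate. Necessity follows by evaluating at the three coordinate axes (and is in any case a special case of Theorem~\ref{th:pos_necess} with $l_{\id}=1$): each axis gives $|l_{t_i}|\le 1$ for $i=1,2,3$. Sufficiency is the obvious domination $l_{t_2}^{2}u^{2}+l_{t_3}^{2}v^{2}+l_{t_1}^{2}w^{2}\le u^{2}+v^{2}+w^{2}=1$. There is no real obstacle here; the only step requiring a little care is the algebraic identification of the off-diagonal entry as $l_{t_3}\Re(p_1\overline{p}_2)+il_{t_1}\Im(p_1\overline{p}_2)$ (so that $l_{t_1}$ and $l_{t_3}$ cleanly weight the imaginary and real Bloch coordinates respectively), which relies on the specific placement of the coefficients in the matrix~\eqref{eq:iclm_q}.
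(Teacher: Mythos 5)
Your proof is correct and follows essentially the same route as the paper: positivity is reduced via Fact~\ref{th:pos_proj} to checking $\Phi(P)\geq 0$ on rank-one projectors, using the explicit $2\times 2$ form of $\Phi(P)$ coming from~\eqref{eq:iclm_q} with $l_{\id}=1$. The paper merely asserts that $q^{+}\Phi(P)q\geq 0$ for all normalised $p,q$ holds iff $|l_{i}|\leq 1$, whereas your trace--determinant argument with the Bloch-sphere parametrisation $u^{2}+v^{2}+w^{2}=1$ supplies exactly the elementary computation that assertion leaves implicit.
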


\subsection{Remarks on Decomposability - Low Dimensions}

We would like to decompose an $ICLM$ $\Phi$, given by (\ref{eq:iclm_q}) into sum of completely positive map and completely co-positive map 
\begin{equation}\label{eq:rozklad}
\Phi = \Psi^{(1)} + \Psi^{(2)}\circ T,
\end{equation}
where $T$ denotes the transposition operator. In this section, we show that both $\Psi^{(1)}$ and  $\Psi^{(2)}\circ T$ have to be $ICLMs$ of the same structure as $\Phi$. 
First, let us formulate the following 
\begin{lemma}
    Let $\Phi$ be an $ICLM$ with respect to $\phi^{t_4}$ irrep of the quaternion group, given by (\ref{eq:iclm_q}), and  $J(\Phi)$ be its Choi-Jamio{\l}kowski image. Then if $\Phi$ is positive, but not completely positive then exactly one eigenvalue of $J(\Phi)$ is negative.
\end{lemma}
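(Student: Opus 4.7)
The plan is to diagonalise $J(\Phi)$ explicitly and then exploit a combinatorial identity among its eigenvalues combined with the positivity constraint already obtained in Theorem~\ref{thmQQ}.

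First, starting from the matrix form of $\operatorname{mat}(\Phi)$ in~\eqref{eq:iclm_q} (or equivalently directly from $J(\Phi)=\sum_{ij}E_{ij}\otimes \Phi(E_{ij})$), I would write out the Choi matrix in the computational basis $\{|11\>,|12\>,|21\>,|22\>\}$. The off-diagonal structure forces $J(\Phi)$ to decompose into two independent $2\times 2$ blocks, one supported on $\{|11\>,|22\>\}$ and one on $\{|12\>,|21\>\}$. Each block is of circulant form and is diagonalised in the Bell-like basis, yielding the four eigenvalues
\begin{equation}
\mu_{\epsilon_1,\epsilon_2,\epsilon_3}=\tfrac{1}{2}\left(l_{\operatorname{id}}+\epsilon_1 l_{t_1}+\epsilon_2 l_{t_2}+\epsilon_3 l_{t_3}\right),\qquad \epsilon_i\in\{+1,-1\},\quad \epsilon_1\epsilon_2\epsilon_3=+1,
\end{equation}
i.e.\ exactly the four sign patterns with an even number of minuses among the coefficients of $l_{t_1},l_{t_2},l_{t_3}$.

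Next, I would observe the key algebraic fact that any two of these four eigenvalues sum to $l_{\operatorname{id}}\pm l_{t_i}$ for some $i\in\{1,2,3\}$; this is immediate from the explicit formula since any two of the four sign triples differ in exactly two of the three coordinates, so adding them cancels two of the $l_{t_i}$ and doubles the third. By Theorem~\ref{thmQQ} (recovered also from the general Theorem~\ref{th:pos_necess}), positivity of $\Phi$ forces $|l_{t_i}|\leq l_{\operatorname{id}}$ for every $i$, so every such pair sum is non-negative. In particular two eigenvalues of $J(\Phi)$ cannot be simultaneously negative, since their sum would then be negative and contradict this bound.

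Finally, since $\Phi$ is assumed not to be completely positive, Theorem~\ref{mainTHM}(1) guarantees that $J(\Phi)\not\geq 0$, so at least one of the four eigenvalues must be strictly negative; together with the previous step this forces exactly one to be negative. The only genuinely computational step is the block-diagonalisation of $J(\Phi)$ yielding the explicit eigenvalue formula; once that is in hand, the pair-sum identity makes the rest of the argument a one-line observation.
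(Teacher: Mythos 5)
Your proof is correct and follows essentially the same route as the paper: the explicit eigenvalues $\tfrac{1}{2}(l_{\operatorname{id}}+\epsilon_1 l_{t_1}+\epsilon_2 l_{t_2}+\epsilon_3 l_{t_3})$ with an even number of minus signs are exactly the $\delta_i$ the paper obtains via its Hadamard-type matrix $I$, and the observation that every pair of eigenvalues sums to $l_{\operatorname{id}}\pm l_{t_i}\geq 0$ under the positivity bound $|l_{t_i}|\leq l_{\operatorname{id}}$, combined with non-complete-positivity forcing at least one negative eigenvalue, is precisely the paper's argument. The only cosmetic difference is that you derive the spectrum by block-diagonalising $J(\Phi)$ in the computational basis rather than quoting it directly.
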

\begin{proof}
Examining the spectrum of $J(\Phi)$, one can see that the spectral parameters $l_{i}$, for $i\in \{\operatorname{id}, t_1,t_2,t_3,t_4\}$ of $ICLM$ with respect to the two-dimensional irrep $\phi^{t_4}$ of the quaternion group $Q$, given by (\ref{eq:iclm_q}), are connected with eigenvalues $\delta_i$ of the Choi-Jamiołkowski image of $\Phi$ as
\begin{align}\label{eq:delty}
\begin{pmatrix}\delta_{\id}\\ \delta_{t_1}\\ \delta_{t_2}\\ \delta_{t_3} \end{pmatrix}
&= I \begin{pmatrix}l_{\id}\\ l_{t_1} \\ l_{t_2} \\ l_{t_3} \end{pmatrix} = \frac{1}{2}\begin{pmatrix} 1&1&1&1 \\ 1&1&-1&-1 \\1&-1&1&-1\\1&-1&-1&1 \end{pmatrix} \begin{pmatrix}l_{\id}\\ l_{t_1} \\ l_{t_2} \\ l_{t_3} \end{pmatrix}, \quad I\neq \mathbf{1}.
\end{align} For an $ICLM$ that is positive we have $|l_i| \leq l_{\id}$ (from Theorem~\ref{th:pos_necess}), then writing all possible combinations $\delta_i + \delta_j, i\neq j$, i.e. 
\begin{align}
&\delta_{\id} + \delta_i = l_{\id} + l_i \geq 0\quad \forall i\in \{\operatorname{id}, t_1,t_2,t_3,t_4\},\\
&\delta_{i} + \delta_j = 2l_{\id} - 2l_k \geq 0\quad \forall i\neq j\neq k\neq \id,
\end{align}
one can observe that for $P$ but not $CP$ map $\Phi$, exactly one $\delta_i$ is negative. 
\end{proof}

Now we would like to decompose such $\Phi$ into sum of completely positive map and completely co-positive map like in (\ref{eq:rozklad}), 
where $\Psi^{(1)}$ and $\Psi^{(2)}\circ T$ are both CP $ICLMs$ with respect to the quaternion group of the form ($\ref{eq:iclm_q}$) and depend on spectral parameters $a_i, b_i, i\in \{\operatorname{id}, t_1,t_2,t_3,t_4\}$ (instead of $l_i$) respectively. Hence, since $I$ is involutive, they satisfy 

\begin{align}\label{eq:male}
[a_i]_i &= I [\varepsilon_i]_i,\quad  \varepsilon_i\geq0\;  \forall i,\\
[b_i]_i &= I [\gamma_i]_i,\quad \gamma_i\geq 0\; \forall i.
\end{align}

where $\delta_i, \varepsilon_i$ and $ \gamma_i$ denote the eigenvalues of respective Choi-Jamio{\l}kowski images of $\Phi, \Psi^{(1)}, \Psi^{(2)}$. Combining \eqref{eq:rozklad} with \eqref{eq:delty} and \eqref{eq:male} we obtain
\begin{align}\label{eq:rozkladd}
\begin{cases}
\delta_{\id} &= \varepsilon_{\id} + \frac{1}{2}(\gamma_{\id}-\gamma_{t_1}+\gamma_{t_2}+\gamma_{t_3}),\\
\delta_{t_1}& = \varepsilon_{t_1} + \frac{1}{2}(-\gamma_{\id}+\gamma_{t_1}+\gamma_{t_2}+\gamma_{t_3}),\\
\delta_{t_2} &= \varepsilon_{t_2} + \frac{1}{2}(\gamma_{\id}+\gamma_{t_1}+\gamma_{t_2}-\gamma_{t_3}),\\
\delta_{t_3} &= \varepsilon_{t_3} + \frac{1}{2}(\gamma_{\id}+\gamma_{t_1}-\gamma_{t_2}+\gamma_{t_3}).
\end{cases}
\end{align}
From the above equations one can see, that $\delta_i + \delta_j \geq 0$ if $i\neq j$. Without the loss of generality one can assume that $\delta_{\id}< 0$ (the exactly one negative eigenvalue mentioned earlier), thus $\delta_{\id}\geq|\delta_{\id}|$. Rewriting \eqref{eq:rozkladd} we get
\begin{align}
\begin{cases}
\delta_{\id}& = \varepsilon_{\id} + \frac{1}{2}(\gamma_{\id}-\gamma_{t_1}+\gamma_{t_2}+\gamma_{t_3}),\\
\delta_{t_1}&= - \delta_{\id} + \gamma_{t_2} + \gamma_{t_3} + \varepsilon_{\id} + \varepsilon_{t_1},\\
\delta_{t_2} &= - \delta_{\id} + \gamma_{\id} + \gamma_{t_2} + \varepsilon_{\id} + \varepsilon_{t_2},\\
\delta_{t_3} &= - \delta_{\id} + \gamma_{\id} + \gamma_{t_3} + \varepsilon_{\id} + \varepsilon_{t_3}.\\
\end{cases}
\end{align}
Since $\delta_i > |\delta_{\id}|$, we can write
\begin{align}
\begin{cases}
0 \geq \delta_{\id} = \varepsilon_{\id} + \frac{1}{2}(\gamma_{\id}-\gamma_{t_1}+\gamma_{t_2}+\gamma_{t_3}),\\
0 \leq \gamma_{t_2} + \gamma_{t_3} + \varepsilon_{\id} + \varepsilon_{t_1},\\
0 \leq  \gamma_{\id} + \gamma_{t_2} + \varepsilon_{\id} + \varepsilon_{t_2},\\
0 \leq \gamma_{\id} + \gamma_{t_3} + \varepsilon_{\id} + \varepsilon_{t_3}.\\
\end{cases}
\end{align}
For an arbitrary set of eigenvalues $\delta_i$ corresponding to $ICLM$ that is $P$ but not $CP$ the above equation has solution of the form
\begin{align}\label{eq:res_q}
\begin{pmatrix}\gamma_{\id}\\ \gamma_{t_1}\\ \gamma_{t_2}\\ \gamma_{t_3} \end{pmatrix} = \begin{pmatrix}0\\ -\delta_{\id}\\ 0\\ 0 \end{pmatrix} ,\quad
\begin{pmatrix}\varepsilon_{\id}\\ \varepsilon_{t_1}\\ \varepsilon_{t_2}\\ \varepsilon_{t_3} \end{pmatrix} = \begin{pmatrix}0\\ \delta_{\id} + \delta_{t_1}\\\delta_{\id} +  \delta_{t_2}\\\delta_{\id} +   \delta_{t_3}\end{pmatrix}.
\end{align}
Thus, we can formulate
\begin{theorem}
    An $ICLM \Phi$ with respect to the quaternion group irrep $\phi^{t_4}$ (given by (\ref{eq:iclm_q})) that is positive, but not completely positive can be decomposed into 
    \be
    \Phi = \Psi^{(1)} + \Psi^{(2)}\circ T,
    \ee
    where $\Psi^{(1)}$ and $\Psi^{(2)}$ are completely positive and are $ICLMs$ with respect to the irrep $\phi^{t_4}$ of the quaternion group as well.
\end{theorem}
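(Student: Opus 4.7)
The plan is to reduce the construction of the decomposition to a linear-algebraic feasibility problem at the level of eigenvalues of Choi-Jamio{\l}kowski images. First I would postulate that both $\Psi^{(1)}$ and $\Psi^{(2)}$ are $ICLMs$ with respect to the same two-dimensional irrep $\phi^{t_4}$, so that each has the matrix form~\eqref{eq:iclm_q} with its own spectral parameters $a_i$ and $b_i$. The invertible involution $I$ in~\eqref{eq:delty} relates spectral parameters to eigenvalues of the associated Choi-Jamio{\l}kowski images, so by Theorem~\ref{mainTHM} asking $\Psi^{(1)}$ to be $CP$ (respectively $\Psi^{(2)}$ to be $CP$, which makes $\Psi^{(2)}\circ T$ a $CoP$ map) is the same as asking all $\varepsilon_i$ (respectively $\gamma_i$) to be non-negative. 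Matching the decomposition $\Phi=\Psi^{(1)}+\Psi^{(2)}\circ T$ coefficient by coefficient then produces exactly the linear system~\eqref{eq:rozkladd}, and the question becomes whether this underdetermined system has a solution with $\varepsilon_i,\gamma_i\geq 0$.

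Next, I would exploit the preceding lemma, which guarantees that exactly one of the four eigenvalues $\delta_i$ of $J(\Phi)$ is strictly negative; label it $\delta_{\operatorname{id}}$. Combining Theorem~\ref{th:pos_necess} with the explicit expressions~\eqref{eq:delty} yields $\delta_i+\delta_j\geq 0$ for every pair $i\neq j$, since each such sum collapses to $l_{\operatorname{id}}\pm l_k$, which is non-negative by $|l_k|\leq l_{\operatorname{id}}$. These pairwise inequalities, together with the single sign constraint $\delta_{\operatorname{id}}<0$, are the only hypotheses needed beyond the structure of~\eqref{eq:rozkladd}.

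With this setup in place I would verify the explicit ansatz~\eqref{eq:res_q}: take $\gamma_{t_1}=-\delta_{\operatorname{id}}$, set the remaining $\gamma_i$ and $\varepsilon_{\operatorname{id}}$ to zero, and $\varepsilon_{t_i}=\delta_{\operatorname{id}}+\delta_{t_i}$ for $i=1,2,3$. Substitution into~\eqref{eq:rozkladd} reproduces the correct $\delta_i$, non-negativity of $\gamma_{t_1}$ is immediate from $\delta_{\operatorname{id}}<0$, and non-negativity of each $\varepsilon_{t_i}$ is exactly the pairwise inequality $\delta_{\operatorname{id}}+\delta_{t_i}\geq 0$ derived above. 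Inverting via~\eqref{eq:male} recovers the spectral parameters $a_i$ and $b_i$, hence CP $ICLMs$ $\Psi^{(1)}$ and $\Psi^{(2)}$ of the required form whose signed sum reproduces $\Phi$.

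The main obstacle, I expect, is justifying rigorously the reduction to the case $\delta_{\operatorname{id}}<0$. The four rows of $I$ are exchanged by a natural $S_4$-action, but at the level of covariance only the three sign irreps $t_1,t_2,t_3$ are manifestly interchangeable; swapping $\operatorname{id}$ with a $t_i$ is not an automorphism of the $ICLM$ structure. I would therefore either run the analogous ansatz separately in each of the four cases (writing down an appropriate permutation of~\eqref{eq:res_q} each time) or identify a sufficiently large symmetry of the set of admissible decompositions that makes the reduction canonical. Once the case distinction is resolved, the rest is a direct algebraic verification.
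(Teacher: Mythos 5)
Your proposal is correct and follows essentially the same route as the paper: postulate quaternion-covariant $CP$ components, translate everything into Choi eigenvalues via the involution $I$ of \eqref{eq:delty}, combine the one-negative-eigenvalue lemma with the pairwise bounds $\delta_i+\delta_j\geq 0$ coming from Theorem~\ref{th:pos_necess}, and exhibit an explicit non-negative solution of \eqref{eq:rozkladd}. Two remarks: with \eqref{eq:rozkladd} exactly as written the consistent choice is $\gamma_{t_1}=-2\delta_{\operatorname{id}}$ rather than $-\delta_{\operatorname{id}}$ (the paper's \eqref{eq:res_q} is off by this factor, so your claim that literal substitution ``reproduces the correct $\delta_i$'' needs that correction), and your insistence on handling the four possible positions of the negative eigenvalue separately, via a permuted ansatz or a symmetry of the linear system, is more careful than the paper's bare ``without loss of generality''; the permuted ansatz does work in each of the four cases, so your plan closes that gap.
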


When we set $l_{t_1}=l_{t_3}$ in \eqref{eq:iclm_q}, we obtain $ICLM$ with respect to the two-dimensional irrep $\phi^{\lambda}$ of the $S(3)$ group. Considering expression \eqref{eq:delty} we get that the equality must hold for appropriate Choi-Jamio{\l}kowski eigenvalues as well (i.e. $\delta_{t_1}=\delta_{t_3}$). Thus from equation~\eqref{eq:delty} and the fact that only one $\delta_i$, for $i\in \{\operatorname{id}, t_1,t_2,t_3,t_4\}$, can be negative we conclude, that for no choice of $\delta_i$ there is a solution for $(\gamma_i)$ such that $\Psi^{(2)}$ is $ICLM$ with respect  to $S(3)$ group. This, along with the discussion presented in Section~\ref{sec:unit_qc}, shows the special role quaternion of quaternion group $Q$ among the low-dimensional $ICLMs$.

\section{Positive Irreducibly Covariant Linear Maps from The Inverse Reduction Map}
\label{invRR}
 In this section, we present the efficient method allowing for the construction of irreducibly covariant positive linear maps using the inverse reduction map. In this approach we consider linear maps, not necessarily multiplicity free in the decomposition presented in~\eqref{UUmult} but admitting the decomposition given through~\eqref{ICQC}. The presented method allows us to construct $ICLMs$ also in higher dimensions, i.e. when the dimension of considered irrep is larger than 2. This is due to the fact, that this method rids of the direct checking the block positivity (see the second point of Theorem~\ref{mainTHM}) of the corresponding Choi-Jamio{\l}kowski image. Instead of that, we must ensure the positive spectrum of the certain hermitian operator, which is much more friendly in the practical applications.  Thanks to that, the presented approach is universal and can be applied to any linear map, not necessarily $ICLM$ one.
We start from the definition of the inverse reduction map. We start from the following:
\begin{definition}
	\label{invR}
	The inverse reduction map $R^{-1}\in \End[\M(d,\mathbb{C})]$ is given as
	\be
	\forall X\in \M(d,\mathbb{C}) \quad R^{-1}(X)=\frac{\tr(X)}{d-1}\mathbf{1}-X.
	\ee
\end{definition}
The reader can check by direct calculation that indeed map defined above is the inverse map with respect to the reduction map defined in~\cite{Hor2}. Moreover, the map $R^{-1}$ is surjective between set $\mathcal{P}_k^d$ of rank $k$ projectors and the set $\mathcal{P}_1^d$  of rank one projectors. Later in~\cite{ChrusWig} it has been shown that the map $R^{-1}$, from Definition~\ref{invR}, is the only one with such property. We will also need the following theorem proved in~\cite{Moz2015}:
\begin{theorem}
	\label{rank}
	Let $W\in \M(d^2,\mathbb{C})$ be a hermitian and non-positive operator, such that $\widetilde{W}=(\mathbf{1}\ot R^{-1})W\geq 0$. Then the operator $W$ is an entanglement witness (see Definition~\ref{def01}).
\end{theorem}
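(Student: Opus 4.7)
My plan is to verify the three conditions in Definition~\ref{def01}. Condition (1), $W\ngeq 0$, is part of the hypothesis. Condition (3) will follow immediately once condition (2) is established: any vector $|\psi\>$ with $\<\psi|W|\psi\><0$ yields a pure state $|\psi\>\<\psi|$ on which $W$ takes a negative value, and under (2) such a state cannot be separable, so it is entangled and serves as the required witnessed state. The real content of the proof therefore lies in verifying condition (2): $\tr(\sigma W)\geq 0$ for every separable $\sigma$. By convexity it suffices to check this on pure product states $\sigma = |x\>\<x|\otimes|y\>\<y|$, i.e.\ to prove the block-positivity $\<x,y|W|x,y\>\geq 0$ for all $|x\>,|y\>\in \mathbb{C}^d$.

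The algebraic pivot is the observation that $R^{-1}$ of Definition~\ref{invR} really is the two-sided inverse of the reduction map $R(X) := \tr(X)\mathbf{1}-X$, which is a direct computation (both $R \circ R^{-1}$ and $R^{-1}\circ R$ equal $\operatorname{id}_{\M(d,\mathbb{C})}$). Inverting the hypothesis $\widetilde W = (\mathbf{1}\otimes R^{-1})W$ therefore yields $W = (\mathbf{1}\otimes R)\widetilde W$. Writing $\widetilde W = \sum_{i,j} E_{ij}\otimes \widetilde W_{ij}$ block-wise with respect to the standard basis of the first tensor factor, this becomes $W = \sum_{i,j} E_{ij}\otimes R(\widetilde W_{ij})$.

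Next, for fixed $|x\>\in\mathbb{C}^d$, introduce the conditional block $\widetilde M_x := (\<x|\otimes\mathbf{1})\widetilde W(|x\>\otimes\mathbf{1}) = \sum_{i,j}\overline{x}_i x_j\, \widetilde W_{ij}$. Since $\widetilde W\geq 0$ by hypothesis, $\widetilde M_x$ is positive semidefinite. A short expansion using $\<x|E_{ij}|x\> = \overline{x}_i x_j$ then yields $\<x,y|W|x,y\> = \<y|R(\widetilde M_x)|y\>$. The proof concludes by invoking the classical fact that $R$ is itself a positive map (though not completely positive): for any $M\geq 0$, $\tr(M)\geq \lambda_{\max}(M)$, whence $R(M) = \tr(M)\mathbf{1}-M \geq 0$. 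Applying this to $M = \widetilde M_x$ delivers $\<x,y|W|x,y\>\geq 0$ and finishes condition (2).

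I do not foresee any real obstacle; the whole argument is a clean combination of inverting the hypothesis via $W = (\mathbf{1}\otimes R)\widetilde W$ and the elementary observation that the reduction map is positive. The only mildly delicate bookkeeping is keeping the indices and conjugations straight when expanding $\<x,y|W|x,y\>$ in the block decomposition, but this is routine.
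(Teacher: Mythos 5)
Your argument is correct and complete: block positivity follows from inverting the hypothesis to $W=(\mathbf{1}\otimes R)\widetilde W$ with $R(X)=\tr(X)\mathbf{1}-X$, noting $\<x,y|W|x,y\>=\<y|R(\widetilde M_x)|y\>$ with $\widetilde M_x=(\<x|\otimes\mathbf{1})\widetilde W(|x\>\otimes\mathbf{1})\geq 0$, and using positivity of the reduction map; conditions (1) and (3) of Definition~\ref{def01} then follow as you say. Note that the paper itself does not prove this statement but imports it from~\cite{Moz2015}, so your proof serves as a valid self-contained replacement for that citation and is in the same spirit as the standard argument there.
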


As was mentioned in the introduction in this manuscript we restrict ourselves to special class of linear maps which are irreducibly covariant, see expression~\eqref{irr_cov} in Section~\ref{S_ICLM}. Having the above, we prove the following:

\begin{lemma}
	\label{comp1}
	Let $R^{-1}\in \End[\M(d,\mathbb{C})]$ be the inverse reduction map given in Definition~\ref{invR} and $\Phi\in \End[\M(d,\mathbb{C})]$ be an $ICLM$ given through~\eqref{ICQC}, then we have
	\be
	\label{eq1}
	\forall X\in\M(d,\mathbb{C}) \quad R^{-1}\circ \Phi(X)=\frac{l_{\id}\tr(X)}{d-1}-\Phi(X),
	\ee
	and
	\be
	\label{eq2}
	\widetilde{W}:=(\mathbf{1}\ot R^{-1}\circ \Phi)P_d^+=\frac{l_{\id}}{d(d-1)}\mathbf{1}\ot \mathbf{1}-\frac{1}{d}J(\Phi),
	\ee
	where $P_d^+$ is projector on maximally entangled sate $|\psi^+_d\>=\frac{1}{\sqrt{d}}\sum_{i=1}^d|ii\>$, $J(\Phi)$ is the Choi-Jamio{\l}kowski image of $\Phi$ defined in~\eqref{Ch-J}, and the symbol $\circ$ means the composition of maps.
\end{lemma}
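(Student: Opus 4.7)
The plan is to prove both identities by direct computation, leveraging the spectral decomposition $\Phi=\sum_{\alpha\in\Theta}l_\alpha\Pi^\alpha$ from~\eqref{ICQC} together with Fact~\ref{F3}, which controls how the projectors $\Pi^\alpha$ interact with the trace functional. Since $R^{-1}$ is an affine combination of the trace and the identity map, its composition with $\Phi$ is essentially determined by how $\Phi$ acts on the trace of an input.

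For the first identity, I would start from the definition
\begin{equation}
R^{-1}\circ\Phi(X)=\frac{\tr(\Phi(X))}{d-1}\mathbf{1}-\Phi(X),
\end{equation}
so the only nontrivial step is to evaluate $\tr(\Phi(X))$. Expanding $\Phi$ via~\eqref{ICQC} gives $\tr(\Phi(X))=\sum_{\alpha}l_\alpha\tr(\Pi^\alpha(X))$. Now Fact~\ref{F3} tells us that $\tr(\Pi^\alpha(X))=0$ for every $\alpha\neq\id$, while $\Pi^{\id}(X)=\frac{\tr(X)}{d}\mathbf{1}$, so $\tr(\Pi^{\id}(X))=\tr(X)$. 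Substituting yields $\tr(\Phi(X))=l_{\id}\tr(X)$, which gives~\eqref{eq1} (with the missing $\mathbf{1}$ factor on the right-hand side that the display momentarily suppresses).

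For the second identity, I would use the expansion $P_d^+=\frac{1}{d}\sum_{i,j}E_{ij}\otimes E_{ij}$ and apply $\mathbf{1}\otimes(R^{-1}\circ\Phi)$ term by term. Using~\eqref{eq1} with $X=E_{ij}$ and $\tr(E_{ij})=\delta_{ij}$, one gets
\begin{equation}
(\mathbf{1}\otimes R^{-1}\circ\Phi)P_d^+=\frac{1}{d}\sum_{i,j}E_{ij}\otimes\!\left(\frac{l_{\id}\delta_{ij}}{d-1}\mathbf{1}-\Phi(E_{ij})\right).
\end{equation}
The first term collapses via $\sum_i E_{ii}=\mathbf{1}$ to $\frac{l_{\id}}{d(d-1)}\mathbf{1}\otimes\mathbf{1}$, and the second term is recognised as $-\tfrac{1}{d}J(\Phi)$ by the definition~\eqref{Ch-J} of the Choi-Jamio{\l}kowski image. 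Combining gives~\eqref{eq2}.

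The argument is essentially mechanical; the only conceptual ingredient is the observation that $\tr\circ\Phi=l_{\id}\tr$ for any $ICLM$ of the form~\eqref{ICQC}, which is an immediate consequence of Fact~\ref{F3}. There is no real obstacle to overcome — both claims follow by two short computations and need no further machinery beyond the definitions already introduced.
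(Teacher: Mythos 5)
Your proposal is correct and follows essentially the same route as the paper: compute $\tr(\Phi(X))=l_{\id}\tr(X)$ from the spectral decomposition~\eqref{ICQC} together with the tracelessness of $\Pi^\alpha(X)$ for $\alpha\neq\id$ (the paper cites Fact~\ref{f:decomp}, which rests on the same Fact~\ref{F3} you invoke), then expand $P_d^+$ in the basis $E_{ij}\ot E_{ij}$ and use $\tr(E_{ij})=\delta_{ij}$ to obtain~\eqref{eq2}. Your remark about the suppressed $\mathbf{1}$ on the right-hand side of~\eqref{eq1} is also accurate.
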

\begin{proof}
	First let us prove equation~\eqref{eq1}. For an arbitrary $X\in \M(d,\mathbb{C})$, we have
	\be
	\label{1}
	R^{-1}\circ \Phi(X):=\frac{1}{d-1}\tr\left[\Phi(X) \right]\mathbf{1}-\Phi(X). 
	\ee
	Using the general form of an $ICLM$, given in equation~\eqref{ICQC}, we compute
	\be
	\label{www}
	\begin{split}
		\tr\left[\Phi(X) \right]&=\sum_{\alpha \in \Theta}l_{\alpha}\tr\left[\Pi^{\alpha}(X) \right]=l_{\id}\tr(X).
	\end{split}
	\ee
	In the last equality we used Fact~\ref{f:decomp} from Subsection~\ref{gen_cons2}.
	Now, substituting~\eqref{www} into~\eqref{1} we get equation~\eqref{eq1}. Having the explicit expression for the composition $R^{-1}\circ \Phi$ from~\eqref{1}, we calculate $\widetilde{W}$ in~\eqref{eq2}:
	\be
	\begin{split}
		\widetilde{W}&:=(\mathbf{1}\ot R^{-1}\circ \Phi)P_d^+=\frac{1}{d}(\mathbf{1}\ot R^{-1}\circ \Phi)\sum_{ij}E_{ij}\ot E_{ij}=\frac{l_{\id}}{d(d-1)}\sum_{ij}E_{ij}\ot \mathbf{1}\delta_{ij}-\frac{1}{d}\sum_{ij}E_{ij}\ot \Phi(E_{ij})\\
		&=\frac{l_{\id}}{d(d-1)}\mathbf{1}\ot \mathbf{1}-\frac{1}{d}J(\Phi),
	\end{split}
	\ee
	since $\tr(E_{ij})=\delta_{ij}$.
\end{proof}
{Having an arbitrary linear map $\Phi$, not necessary irreducible, by defining $W:=(\mathbf{1}\otimes \Phi)P_d^+$, the conditions $\widetilde{W}\geq 0$ and $W<0$ imply the map $\Phi$ must be positive (see expression~\eqref{connP} and Theorem~\ref{rank}). One can see that requirement $W<0$ is nothing else but demanding that a map $\Phi$ is not completely positive. This general observation gives us a tool for systematic construction of positive maps among all $ICLMs$ considered in this manuscript by imposing constraints on the coefficients $l_{\alpha}$ (see expression~\eqref{dede}). To do so, first we need the following corollary giving the conditions for the positive spectrum of $\widetilde{W}$ given in Lemma~\ref{comp1}:}
\begin{corollary}
	\label{cor_Wtilde}
	From Lemma~\ref{comp1} follows  that $\widetilde{W}\geq 0$, if and only if 
	\be
	\label{cor_wtilde}
	\frac{l_{\id}}{d-1}-\epsilon^{\beta}_i\geq 0 \quad \beta \in \Theta, \quad i=1,\ldots,d_{\beta}.
	\ee
	The numbers $\epsilon^{\beta}_i$ are eigenvalues of the Choi-Jamio{\l}kowski image $J(\Phi)$ of an $ICLM$  $\Phi$ admitting the decomposition in~\eqref{ICQC}.
\end{corollary}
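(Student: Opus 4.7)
The plan is to reduce the positive semidefiniteness of $\widetilde{W}$ to an explicit spectral condition on $J(\Phi)$ using the closed form already available from the preceding lemma. First I would invoke the identity
\be
\widetilde{W} = \frac{l_{\id}}{d(d-1)}\mathbf{1}\ot\mathbf{1} - \frac{1}{d}J(\Phi)
\ee
from Lemma~\ref{comp1}. Since $\mathbf{1}\ot\mathbf{1}$ commutes with every operator, $\widetilde{W}$ is simultaneously diagonalisable with $J(\Phi)$ in the eigenbasis of the latter, and its spectrum is obtained by applying the affine transformation $x\mapsto \frac{l_{\id}}{d(d-1)}-\frac{x}{d}$ to the spectrum of $J(\Phi)$.

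Next I would label the eigenvalues of $J(\Phi)$ in the form $\epsilon^{\beta}_i$, where $\beta\in\Theta$ runs over the irreps appearing in the decomposition~\eqref{UUmult} and $i=1,\dots,d_{\beta}$ indexes the internal components of the rank-one projectors $\Pi^{\beta}_i$ with $\Pi^{\beta}=\sum_i \Pi^{\beta}_i$ used earlier in the excerpt. This is the natural indexing coming from the block decomposition of $J(\Phi)$ into irreducible components, as derived in~\cite{Moz2017} and already used to state the $CP$ condition~\eqref{sol}. With this labelling, the eigenvalues of $\widetilde{W}$ are precisely the numbers $\tfrac{l_{\id}}{d(d-1)}-\tfrac{\epsilon^{\beta}_i}{d}$.

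Finally I would invoke the elementary fact that a hermitian operator is positive semidefinite if and only if all its eigenvalues are non-negative. Thus $\widetilde{W}\geq 0$ is equivalent to
\be
\frac{l_{\id}}{d(d-1)} - \frac{\epsilon^{\beta}_i}{d} \geq 0 \quad \forall\,\beta\in\Theta,\; i=1,\ldots,d_{\beta},
\ee
and multiplying through by the positive number $d$ produces the claimed inequality $\frac{l_{\id}}{d-1}-\epsilon^{\beta}_i\geq 0$. There is essentially no obstacle here: all the non-triviality has been absorbed into Lemma~\ref{comp1}, and what remains is a direct translation of operator positivity into a bound on the spectrum of $J(\Phi)$. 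The only mild care needed is to confirm the indexing convention for $\epsilon^{\beta}_i$ is consistent with the decomposition used earlier so that the corollary is stated in terms that line up with the $CP$ criterion from~\eqref{sol}.
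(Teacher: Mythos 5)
Your argument is correct and is exactly the (implicit) argument the paper relies on: Corollary~\ref{cor_Wtilde} is a direct spectral translation of the identity $\widetilde{W}=\frac{l_{\id}}{d(d-1)}\mathbf{1}\ot\mathbf{1}-\frac{1}{d}J(\Phi)$ from Lemma~\ref{comp1}, since $\widetilde{W}$ is an affine function of the hermitian operator $J(\Phi)$ and positivity is equivalent to non-negativity of its eigenvalues $\frac{l_{\id}}{d(d-1)}-\frac{\epsilon^{\beta}_i}{d}$, rescaled by $d>0$. Your remark about matching the indexing $\epsilon^{\beta}_i$ to the block structure used in~\eqref{sol} is consistent with the paper's conventions, so nothing is missing.
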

{Finally, Corollary~\ref{cor_Wtilde} and Theorem~\ref{rank} explain how to proceed to get an $ICLM$ which is positive but not completely positive:}
\begin{corollary}
	\label{Cor27}
	An $ICLM$ $\Phi \in \End[\M(d,\mathbb{C})]$ is positive when all inequalities given in~\eqref{cor_wtilde} of Corollary~\ref{cor_Wtilde} are satisfied and the map $\Phi$ is not completely positive, i.e. when at least one inequality from~\eqref{sol} is not fulfilled.
\end{corollary}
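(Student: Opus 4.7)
The plan is to chain together Lemma~\ref{comp1}, Theorem~\ref{rank}, and the Choi--Jamio{\l}kowski characterisation recorded in Theorem~\ref{mainTHM}. I would first set $W:=(\mathbf{1}\ot \Phi)P_d^+=\tfrac{1}{d}J(\Phi)$, and note that $W$ is Hermitian because we are tacitly restricting to hermiticity-preserving $\Phi$, i.e. to spectral parameters $l_\alpha\in\mathbb{R}$ in~\eqref{ICQC} (justified in the same manner as in Section~\ref{subsubsub}). Lemma~\ref{comp1} then gives $\widetilde{W}=(\mathbf{1}\ot R^{-1})W=(\mathbf{1}\ot R^{-1}\circ\Phi)P_d^+$, and Corollary~\ref{cor_Wtilde} identifies the inequalities in~\eqref{cor_wtilde} with exactly the operator condition $\widetilde{W}\ge 0$. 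Thus the first hypothesis of the statement is a rephrasing of $\widetilde{W}\ge 0$.

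Next I would unpack the second hypothesis. Violation of at least one inequality in~\eqref{sol} is, by the $CP$ criterion recalled in Section~\ref{S_ICLM} (in the multiplicity-free regime this is both necessary and sufficient), equivalent to $J(\Phi)\not\ge 0$, hence to $W\not\ge 0$. At this point all hypotheses of Theorem~\ref{rank} are in place: $W$ is Hermitian, not positive semidefinite, and $\widetilde{W}\ge 0$. That theorem concludes that $W$ is an entanglement witness in the sense of Definition~\ref{def01}.

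Finally I would extract block positivity from the witness property. Specialising condition 2 of Definition~\ref{def01} to product pure states $\sigma=|x\>\<x|\ot |y\>\<y|$ gives $\<x|\<y|W|x\>|y\>\ge 0$ for all $|x\>,|y\>\in\mathbb{C}^d$, i.e. block positivity of $W$, and therefore of $J(\Phi)=dW$. Combined with $J(\Phi)\not\ge 0$, part 2 of Theorem~\ref{mainTHM} yields the desired conclusion: $\Phi$ is positive but not completely positive.

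Most of the argument is bookkeeping between equivalent reformulations: spectral inequalities versus the operator inequality $\widetilde{W}\ge 0$, $CP$-violation versus $J(\Phi)\not\ge 0$, and the witness property versus block positivity. The only genuinely non-formal point is the hermiticity of $W$; if the spectral parameters are complex, Theorem~\ref{rank} does not directly apply, so this hypothesis must be flagged explicitly at the start. With hermiticity recorded, the corollary follows at once from Theorem~\ref{rank} combined with the Choi--Jamio{\l}kowski dictionary, and the block-positivity of $J(\Phi)$ is obtained without being checked directly — which is precisely the practical advantage this approach offers over the brute-force use of Theorem~\ref{th:P_ICLM}.
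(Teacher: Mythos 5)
Your proposal is correct and follows essentially the same route as the paper: the paper's own justification is precisely the chain $W=(\mathbf{1}\ot\Phi)P_d^+$, Lemma~\ref{comp1} and Corollary~\ref{cor_Wtilde} giving $\widetilde{W}\ge 0$, violation of~\eqref{sol} giving $W\ngeq 0$, and Theorem~\ref{rank} together with the witness--block-positivity dictionary of Definition~\ref{def01} and Theorem~\ref{mainTHM} yielding positivity without complete positivity. Your explicit flagging of the hermiticity of $W$ (real spectral parameters) is a reasonable clarification of a hypothesis the paper leaves tacit, not a departure from its argument.
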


As an illustration of the above statements, let us consider the symmetric group $S(3)$ and $ICLM$ $\Phi$ with respect to the two-dimensional irrep labelled by $\lambda$ (see Subsection~\ref{subsubsub}). In this particular case, the conditions~\eqref{cor_wtilde} (left-hand side) and~\eqref{sol} (right-hand side) have a form
\be
\begin{cases}
	3l_{\id}-1\geq 0,\\
	l_{\id}+l_{\sgn}\geq 0,\\
	2l_{\lambda}+l_{\id}-l_{\sgn}\geq 0,\\
	-2l_{\lambda}+l_{\id}-l_{\sgn}\geq 0,
\end{cases}
\quad
\begin{cases}
	1-l_{\id}\geq 0,\\
	l_{\id}-l_{\sgn} \geq 0,\\
	l_{\sgn}+l_{\id}-2l_{\lambda}\geq 0,\\
	l_{\sgn}+l_{\id}+2l_{\lambda}\geq 0.
\end{cases}
\ee
Computing the Choi-Jamio{\l}kowski image $J(\Phi)$  and evaluating overlap with the maximally entangled state $P_+$, we have $\tr(P_+J(\Phi))=l_{\sgn}+2l_{\lambda}+l_{\id}$. We see that the point $(l_{\id},l_{\sgn},l_{\lambda})=(1,-1,-1)$ belongs to the region for which the map $\Phi$ is positive but not completely positive, and thus the $J(\Phi)$, for such a choice of parameters, is appropriate entanglement witness since it detects at least one entangled state. In further sections we apply the method developed here to construct other $ICLMs$ for different groups and in higher dimensions.
\section{New examples of positive (irreducibly) covariant  linear maps in higher dimensions}
\label{large_d}
In the following three subsections we present the analysis of the irreducibly covariant positive linear maps induced by three-dimensional irreducible representation of the permutation group $S(4)$ and $d-$dimensional representations of the group of monomial unitaries $MU(d)$, for an arbitrary $d$. In the case of the group $S(4)$ our results have been obtained using construction of a novel irreducible representation, see Appendix~\ref{NovelSn}, and methods developed in Section~\ref{invRR}. This approach allows for significant simplification of the respective Choi-Jamio{\l}kowski images comparing to the previous approach presented in~\cite{Moz2017} and based on the Young-Yamanouchi basis.  In the case of the group $MU(d)$, the necessary and sufficient conditions for positivity, for an arbitrary dimension are given. This allows us to compare with the method based on the inverse reduction map discussed in the previous section. Additionally, we show that when $d=3$ and the specific choice of parameters, the $ICLMs$ generated by irrep of $S(3)$ and $MU(3)$ are isomorphic. 
\subsection{Irreducibly covariant maps induced by symmetric group S(4)}
The permutation group $S(4)$ has a five irreducible representations. Two of dimension one: symmetric ($\operatorname{id}$), antisymmetric ($\operatorname{sgn}$), one two-dimensional ($\lambda_2$), and two three-dimensional labelled here by $\lambda_1$ and $\lambda_3$. If $\Phi^{\lambda_1}$ is the trace preserving $ICLM$ from~\eqref{ICQC}, with respect to three-dimensional irrep of $S(4)$ labelled by $\lambda_1$, it is of the form 
\begin{equation}
\label{S4}
	\Phi^{\lambda_1}=\Pi^{\id}+l_{\lambda_1}\Pi^{\lambda_1} + l_{\lambda_2}\Pi^{\lambda_2} + l_{\lambda_3}\Pi^{\lambda_3}.
\end{equation}
One can see that in the above decomposition we do not have the irrep $\operatorname{sgn}$. This is because in the three-dimensional space the multiplicity of $\operatorname{sgn}$ irrep is zero.
The corresponding Choi-Jamio{\l}kowski image, calculated in using construction introduced in Appendix~\ref{NovelSn}, has a form
\be
\label{chooi}
J(\Phi^{\lambda_1}) = \begin{pmatrix}
	a_1 & 0 & 0 & 0 & a_5 & 0 & 0 & 0 & a_7\\
	0 & a_3 & 0 & 0 & 0 & a_6 & 0 & 0 & 0\\
	0 & 0 & a_2 & 0 & 0 & 0 & a_8 & 0 & 0\\
	0 & 0 & 0 & a_3 & 0 & 0 & 0 & a_6 & 0\\
	a_5 & 0 & 0 & 0 & a_4 & 0 & 0 & 0 & a_5\\
	0 & a_6 & 0 & 0 & 0 & a_3 & 0 & 0 & 0\\
	0 & 0 & a_8 & 0 & 0 & 0 & a_2 & 0 & 0\\
	0 & 0 & 0 & a_6 & 0 & 0 & 0 & a_3 & 0\\
	a_7 & 0 & 0 & 0 & a_5 & 0 & 0 & 0 & a_1\\
	\end{pmatrix},
\ee
where 
\begin{align}
\begin{split}
a_1&=\frac{1}{6}(l_{\lambda_2}+3l_{\lambda_3}+2), \  a_2=\frac{1}{6}(l_{\lambda_2}-3l_{\lambda_3}+2),
a_3=\frac{1}{3}(1-l_{\lambda_2}), \ a_4=\frac{1}{3}(2l_{\lambda_2}+1),\\
a_5&=\frac{1}{2}(l_{\lambda_1}+l_{\lambda_3}), \ a_6=\frac{1}{2}(l_{\lambda_1}-l_{\lambda_3}),
a_7=\frac{1}{2}(l_{\lambda_1}+l_{\lambda_2}), \ a_8=\frac{1}{2}(l_{\lambda_1}-l_{\lambda_2}).
\end{split}
\end{align}
The eigenvalues of $J(\Phi^{\lambda_1})$ are of the form
\begin{align}
\label{map}
\epsilon^{\lambda_1}_1 &= \epsilon^{\lambda_1}_1=\epsilon^{\lambda_1}_1=\frac{1}{6}(2+3l_{\lambda_1}-2l_{\lambda_2} -3l_{\lambda_3}),\\
\epsilon^{\lambda_2}_1 &= \epsilon^{\lambda_2}_2= \frac{1}{6}(2-3l_{\lambda_1}+4l_{\lambda_2} -3l_{\lambda_3}),\\
\epsilon^{\lambda_1}_1 &= \epsilon^{\lambda_1}_1=\epsilon^{\lambda_1}_1=\frac{1}{6}(2-3l_{\lambda_1}-2l_{\lambda_2} +3l_{\lambda_3}),\\
\epsilon^{\id} &= \frac{1}{3}(1+3l_{\lambda_1}+2l_{\lambda_2} +3l_{\lambda_3}).
\end{align}
Thus, the conditions for $\Phi^{\lambda_1}$ to be positive, but not completely positive, based on the inverse reduction map, from Section~\ref{invRR} yields two regions, presented in  Figure \ref{fig:s4}.
\begin{figure}[h]
	\begin{centering}
		\includegraphics[width=0.4\textwidth]{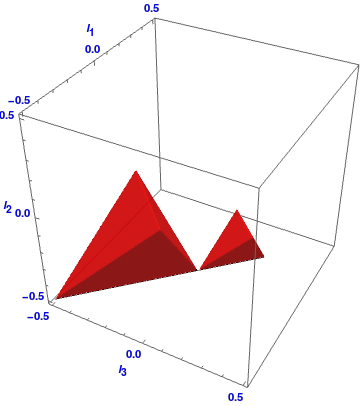}
		\caption{Visualisation of possible values of the parameters $l_{\lambda_1},l_{\lambda_2},l_{\lambda_3} $,  for which the map $\Phi^{\lambda_1}$, given in~\eqref{S4} is positive, but not completely positive. Regions obtained from inverse reduction map described in Section~\ref{invRR}.}
		\label{fig:s4}
	\end{centering}
\end{figure}
We can see, that the resulting region splits into two regions (larger and smaller), which are described respectively  by the following sets of inequalities
\begin{equation}
\begin{cases}
\frac{1}{2} - \frac{1}{6}(2-3l_{\lambda_1}-2l_{\lambda_2} +3l_{\lambda_3})\geq 0,\\
\frac{1}{2}-  \frac{1}{6}(2+3l_{\lambda_1}-2l_{\lambda_2} -3l_{\lambda_3}) \geq 0,\\
\frac{1}{2} - \frac{1}{6}(2-3l_{\lambda_1}+4l_{\lambda_2} -3l_{\lambda_3}) \geq0,\\
\frac{1}{3}(1+3l_{\lambda_1}+2l_{\lambda_2} +3l_{\lambda_3}) \leq 0,\\
\end{cases} \quad \text{and} \quad 
\begin{cases}
\frac{1}{2} - \frac{1}{6}(2-3l_{\lambda_1}-2l_{\lambda_2} +3l_{\lambda_3})\geq 0,\\
\frac{1}{2}-  \frac{1}{6}(2+3l_{\lambda_1}-2l_{\lambda_2} -3l_{\lambda_3}) \geq 0,\\
\frac{1}{2} - \frac{1}{3}(1+3l_{\lambda_1}+2l_{\lambda_2} +3l_{\lambda_3}) \geq 0,\\
\frac{1}{3}(1+3l_{\lambda_1}+2l_{\lambda_2} +3l_{\lambda_3}) \leq 0.\\
\end{cases}
\end{equation}
%
\subsection{Covariant positive maps induced by group of monomial unitaries}
\label{mon_sec}
\label{mon_sec}
As was stated in the introductory section, although in general the condition for block positivity of the  Choi-Jamio{\l}kowski image of a given map (Theorem~\ref{mainTHM}) is hard to work with, it proved efficient in a class of $d-$dimensional maps, covariant with respect to the subgroup of the monomial unitary group $MU(d)$.
Before we move to main considerations for this section, we start from the following
\begin{definition}
The group of monomial unitary matrices $MU(d)$ is given as the collection of unitaries $U\in U(d)$ of the form $U=DP$, where $D,P\in U(d)$ and $D$ is diagonal with respect to orthonormal basis of $\mathbb{C}^d$, and $P$ is a permutation matrix.
\end{definition}
Here, due to our further purposes we define subgroup $MU(d,n)$ of $MU(d)$ as
\begin{definition}
	\label{sMU}
We define $MU(d,n)$ to be the subgroup of the monomial unitary matrices of dimension $d$, whose non-zero entries consist only of $n-$th roots of unity. Since the natural representation of $MU(d)$ and $MU(d,n)$ are matrices, from now on saying about covariance with respect to $MU(d)$ or $MU(d,n)$ we mean covariance w.r.t. its matrix representation.
\end{definition}
A linear map $M$ covariant with respect to the subgroup $MU(d,n)$ from Definition~\ref{sMU}, for some $n\geq3$, is given by:
\begin{equation}
\label{M}
M(X) \equiv \text{Tr}(X) \frac{\mathbf{1}}{d} + \alpha \left(X - \sum_{k=1}^d\ket{k}\bra{k} \bra{k}X\ket{k}\right) + \beta\left(\sum_{k=1}^d\ket{k}\bra{k} \bra{k}X\ket{k} - \text{Tr}(X) \frac{\mathbf{1}}{d}\right),
\end{equation}
for every $X\in \mathbb{M}(d,\mathbb{C})$ and $\alpha,\beta \in \mathbb{R}$.
\begin{p_example}
\label{eq:mat_MU}
The matrix representation of the map $M\in\operatorname{End}(\mathbb{M}(3,\mathbb{C}))$ which is irreducibly covariant with respect to subgroup $MU(3, n)$ of monomial unitary group is given by
\begin{equation}
        \operatorname{mat}(M) =\left(
 \begin{array}{ccccccccc}
         \frac{1-\beta }{3}+\beta & 0 & 0 & 0&\frac{1-\beta }{3}  & 0 & 0 & 0&\frac{1-\beta }{3}  \\
0 & \alpha & 0 & 0 & 0 & 0 & 0 & 0 & 0 \\
0 & 0 & \alpha & 0 & 0 & 0 & 0 & 0 & 0 \\
0 & 0 & 0  & \alpha & 0 & 0 & 0 & 0& 0 \\
         \frac{1-\beta }{3} & 0 & 0&  0&\frac{1-\beta }{3}+\beta  & 0  & 0 & 0& \frac{1-\beta }{3} \\
0 & 0 & 0 & 0 & 0 & \alpha & 0 & 0 & 0 \\
         0 & 0 & 0 & 0 & 0 & 0 &  \alpha & 0  &0\\
         0 & 0 & 0 & 0 & 0 & 0 & 0 &  \alpha& 0 \\
         \frac{1-\beta }{3} & 0 & 0 &0 &\frac{1-\beta }{3}  & 0 &  0 & 0& \frac{1-\beta }{3}+\beta
\end{array}
\right)
\end{equation}

    \end{p_example}
\begin{fact}Let $M\in\operatorname{End}\mathbb{M}(3,\mathbb{C})$ be $ICLM$ with respect to $MU(3,n)$. Then the Choi-Jamio{\l}kowski image of $M$ is given by
\be
\label{JM}
\begin{split}
	J(M)&=
	\frac{\mathbf{1}\otimes\mathbf{1}}{d}+\alpha\left(\sum_{ij}E_{ij}\otimes E_{ij} - \sum_i E_{ii}\otimes E_{ii}\right) + \beta\left(\sum_i E_{ii}\otimes E_{ii} - \sum_i\frac{E_{ii}\otimes\mathbf{1}}{d}\right).
	\end{split}
\ee
\end{fact}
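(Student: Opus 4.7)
The statement is a direct computation from the definition of the Choi--Jamio{\l}kowski image together with the explicit form of $M$ given in~\eqref{M}, so my plan is mostly bookkeeping rather than a conceptual argument. Concretely, I would compute $J(M) = \sum_{ij} E_{ij} \otimes M(E_{ij})$ by first evaluating the action of $M$ on each basis element $E_{ij}$, and then collecting terms by $\alpha$, $\beta$, and the trace contribution.

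For the first step, I would evaluate each of the three pieces of $M$ on $X=E_{ij}$. The trace piece gives $\operatorname{Tr}(E_{ij})\mathbf{1}/d = \delta_{ij}\mathbf{1}/d$. For the ``off-diagonal'' piece, since $\langle k|E_{ij}|k\rangle = \delta_{ki}\delta_{kj}$, the sum $\sum_k \ket{k}\bra{k}\langle k|E_{ij}|k\rangle$ collapses to $\delta_{ij}E_{ii}$, and similarly for the $\beta$-piece. Putting these together yields
\be
M(E_{ij}) \;=\; \delta_{ij}\frac{\mathbf{1}}{d} + \alpha\bigl(E_{ij} - \delta_{ij}E_{ii}\bigr) + \beta\Bigl(\delta_{ij}E_{ii} - \delta_{ij}\frac{\mathbf{1}}{d}\Bigr).
\ee

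For the second step, I would substitute this into $J(M) = \sum_{ij} E_{ij}\otimes M(E_{ij})$ and split into three sums. The trace term becomes $\sum_i E_{ii}\otimes \mathbf{1}/d = \mathbf{1}\otimes\mathbf{1}/d$ using $\sum_i E_{ii} = \mathbf{1}$. The $\alpha$-term becomes $\alpha\bigl(\sum_{ij} E_{ij}\otimes E_{ij} - \sum_i E_{ii}\otimes E_{ii}\bigr)$, because the $\delta_{ij}$ collapses the second sum onto the diagonal. The $\beta$-term becomes $\beta\bigl(\sum_i E_{ii}\otimes E_{ii} - \sum_i E_{ii}\otimes \mathbf{1}/d\bigr)$ by the same collapse. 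Adding these gives exactly the expression in~\eqref{JM}.

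There is no real obstacle here; the only point requiring care is the bookkeeping of the Kronecker deltas that force the projector sums onto the diagonal $E_{ii}$, and the final rewriting $\sum_i E_{ii}\otimes \mathbf{1}/d$ (rather than $\mathbf{1}\otimes \mathbf{1}/d$) in the $\beta$-term, which reflects that only the diagonal part of the first tensor factor survives that contribution. No auxiliary results beyond the definitions of $M$ and $J(\cdot)$ from~\eqref{Ch-J} are needed.
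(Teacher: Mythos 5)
Your proof is correct: the paper states this Fact without proof, and your direct expansion of $J(M)=\sum_{ij}E_{ij}\otimes M(E_{ij})$ using the explicit form of $M$ from~\eqref{M} is exactly the intended, routine verification. The Kronecker-delta bookkeeping is handled properly, and your closing remark about $\sum_i E_{ii}\otimes\mathbf{1}/d$ versus $\mathbf{1}\otimes\mathbf{1}/d$ is only cosmetic, since the two coincide.
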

The expression above allows us to examine complete positivity and, which is more remarkable, positivity as well.
\begin{theorem}
The eigenvlues of $J(M)$ from~\eqref{JM} are given by
\begin{equation}
	\epsilon_1  = \frac{1}{3}{(1-\beta) },\quad\epsilon_2  = \frac{1}{3} (-3 \alpha +2 \beta +1),\quad \epsilon_3  = \frac{1}{3} (6 \alpha +2 \beta +1)
\end{equation}
and thus $M$ is completely positive if and only if $\alpha, \beta$ fulfil

\be
\begin{cases}
\label{e:M_cp}
	\frac{1}{3}{(1-\beta)} \geq  0,\\
	 \frac{1}{3} (-3 \alpha +2 \beta +1) \geq 0,\\
	\frac{1}{3} (6 \alpha +2 \beta +1) \geq 0.
\end{cases}
\ee
\end{theorem}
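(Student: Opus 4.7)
The plan is to diagonalise $J(M)$ by rewriting it as a linear combination of three mutually orthogonal projectors that resolve the identity on $\mathbb{C}^{d^2}$, and then to invoke the first part of Theorem~\ref{mainTHM}.

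The key step is to introduce the three operators
\begin{equation*}
P_d^+ \;=\; \tfrac{1}{d}\sum_{i,j=1}^d E_{ij}\otimes E_{ij}, \qquad D \;=\; \sum_{i=1}^d E_{ii}\otimes E_{ii}, \qquad P_c \;=\; \mathbf{1}\otimes\mathbf{1} - D,
\end{equation*}
suggested by the block pattern of $\operatorname{mat}(M)$ in the preceding example. A direct matrix-unit computation using $E_{ij}E_{kl}=\delta_{jk}E_{il}$ gives $D^2=D$, $DP_d^+=P_d^+ D=P_d^+$, $P_c^2=P_c$, $P_c P_d^+=0$ and $P_c D=0$. Consequently $P_d^+$, $D-P_d^+$ and $P_c$ are mutually orthogonal projectors satisfying $P_d^+ + (D-P_d^+) + P_c = \mathbf{1}\otimes\mathbf{1}$, with ranks $1$, $d-1$ and $d(d-1)$, respectively.

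Next I would use $\sum_{i,j} E_{ij}\otimes E_{ij}=dP_d^+$ and $\sum_i E_{ii}\otimes\mathbf{1}=\mathbf{1}\otimes\mathbf{1}$ to recast~\eqref{JM} compactly as
\begin{equation*}
J(M) \;=\; \frac{1-\beta}{d}\,\mathbf{1}\otimes\mathbf{1} \;+\; \alpha d\,P_d^+ \;+\; (\beta-\alpha)\,D,
\end{equation*}
and then substitute $\mathbf{1}\otimes\mathbf{1}=P_d^++(D-P_d^+)+P_c$ together with $D=P_d^++(D-P_d^+)$. Collecting coefficients yields a spectral decomposition $J(M)=\lambda_a P_d^+ + \lambda_b (D-P_d^+) + \lambda_c P_c$ with
\begin{equation*}
\lambda_a = \tfrac{1-\beta}{d}+\alpha d + \beta - \alpha, \qquad \lambda_b = \tfrac{1-\beta}{d} + \beta - \alpha, \qquad \lambda_c = \tfrac{1-\beta}{d},
\end{equation*}
which at $d=3$ reduces to the stated $\epsilon_3,\epsilon_2,\epsilon_1$ with multiplicities $1$, $2$ and $6$ respectively (summing to $d^2=9$ as a sanity check).

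Finally, the first part of Theorem~\ref{mainTHM} gives $M$ completely positive $\Leftrightarrow$ $J(M)\geq 0$, which, in view of the orthogonal spectral decomposition above, is equivalent to $\epsilon_1,\epsilon_2,\epsilon_3\geq 0$, precisely the displayed system of inequalities. I do not foresee any serious obstacle: the only non-routine ingredient is noticing the three orthogonal projectors $P_d^+$, $D-P_d^+$ and $P_c$, after which the proof is a short bookkeeping of coefficients.
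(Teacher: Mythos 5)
Your proposal is correct, and it reaches the stated spectrum and CP region by a sound argument: the operators $P_d^+$, $D-P_d^+$ and $\mathbf{1}\ot\mathbf{1}-D$ are indeed mutually orthogonal projectors resolving the identity (since $|\psi_d^+\rangle$ lies in the range of $D=\sum_i E_{ii}\ot E_{ii}$), your rewriting of \eqref{JM} as $\frac{1-\beta}{d}\mathbf{1}\ot\mathbf{1}+\alpha d\,P_d^+ +(\beta-\alpha)D$ is exact, and collecting coefficients gives the eigenvalues with multiplicities $1$, $d-1$, $d(d-1)$, which at $d=3$ reproduce $\epsilon_3,\epsilon_2,\epsilon_1$; Theorem~\ref{mainTHM}(1) then converts nonnegativity of the spectrum into the displayed inequalities. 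The paper states this theorem without a written proof, and its implicit route is simply to diagonalize the explicit $9\times 9$ matrix $J(M)$ obtained from $\operatorname{mat}(M)$ for $d=3$; your argument differs in that it is basis-independent and dimension-independent, identifying the three spectral projectors structurally rather than numerically. What this buys is twofold: the multiplicities come out for free, and the general-$d$ eigenvalue formulas you obtain, namely $\frac{1-\beta}{d}$, $\frac{1}{d}\bigl(1+(d-1)\beta-d\alpha\bigr)$ and $\frac{1}{d}\bigl(1+(d-1)\beta+d(d-1)\alpha\bigr)$, are exactly the quantities the paper later needs in Fact~\ref{f:M_red} for arbitrary $d$, so your proof covers that subsequent use as well, whereas a brute-force $d=3$ diagonalization would not.
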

We can calculate the expectation value of $J(M)$ on arbitrary product vector $|x\> \ot |y\>\in \mathbb{C}^d\ot \mathbb{C}^d$:
\begin{align}\label{eq:block_pos}
	\begin{split}
		\bra{x}\otimes\bra{y}J(M)\ket{x}\otimes\ket{y} = \frac{1}{d} \biggl(&2d\alpha\sum_{i<j}x_iy_ix_jy_j +(1-\beta) \sum_{i\neq j} x_i^2y_j^2 +(1+(d-1)\beta) \sum_{i}x_i^2y_i^2\biggr),
	\end{split}
\end{align}
where $\ket{x}=(x_1, x_2, x_3),\ket{y}=(y_1, y_2, y_3)$. 

We can use expression~\eqref{eq:block_pos} to describe the whole region of positivity of $ICLM$ for $M$. First, in Lemma~\ref{lem:ex_points} and Theorem~\ref{th:suff} we describe sufficient conditions regarding $\alpha, \beta$, for which $M$ is positive. Then, in Theorem~\ref{thm:nec} we show that the sufficient region is neccesary as well, which gives the whole description of positive $ICLMs$ with respect to monomial unitary subgroup in arbitrary dimension.
\begin{lemma}\label{lem:ex_points}
	For the following values of $\alpha, \beta$ the map $M$, given through~\eqref{M}, is positive:
	\begin{eqnarray}
		\begin{cases}
			\alpha=1,\quad \beta=1,\\
			\alpha = -\frac{1}{d-1},\quad \beta=1,\\
			\alpha=\frac{1}{d-1},\quad \beta=-\frac{1}{d-1},\\
			\alpha=-\frac{1}{d-1},\quad \beta=-\frac{1}{d-1}.
		\end{cases}
	\end{eqnarray}
\end{lemma}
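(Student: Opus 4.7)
The plan is to invoke the block-positivity characterisation of positivity from Theorem~\ref{mainTHM} and, at each of the four listed pairs $(\alpha,\beta)$, read off a manifest non-negativity certificate from the explicit quadratic form~\eqref{eq:block_pos} for $\bra{x}\otimes\bra{y}J(M)\ket{x}\otimes\ket{y}$. The underlying observation is that at each of the four listed points, exactly one of the coefficients $(1-\beta)$ or $(1+(d-1)\beta)$ appearing in~\eqref{eq:block_pos} vanishes, so the quadratic form collapses onto a two-term expression that can be recognised as either a perfect square, a Cauchy--Schwarz quantity, or a sum of squares.

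Concretely, I would check the four cases in the following order. For $(\alpha,\beta)=(1,1)$ the coefficient $(1-\beta)$ vanishes and the two remaining terms assemble into $\left(\sum_i x_iy_i\right)^2$. For $(\alpha,\beta)=(-1/(d-1),1)$ the coefficient $(1-\beta)$ again vanishes and the remainder simplifies to $\tfrac{1}{d-1}\bigl(d\sum_i(x_iy_i)^2-(\sum_i x_iy_i)^2\bigr)$, which is non-negative by the Cauchy--Schwarz inequality applied to $(x_iy_i)_i$ and $(1,\dots,1)$. For $(\alpha,\beta)=(\pm 1/(d-1),-1/(d-1))$ the coefficient $(1+(d-1)\beta)$ vanishes, and the resulting expression $\tfrac{1}{d-1}\bigl(\sum_{i\ne j}x_i^2y_j^2 \pm 2\sum_{i<j}x_iy_ix_jy_j\bigr)$ is identified with the sum of squares $\tfrac{1}{d-1}\sum_{i<j}(x_iy_j\pm x_jy_i)^2$. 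Together with Theorem~\ref{mainTHM} these four identities establish positivity of $M$ at each of the four listed parameter points.

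The main obstacle is really just pattern recognition: one has to guess the correct sum-of-squares or Cauchy--Schwarz decomposition in each case before the algebra becomes a one-line check. A secondary subtlety is that~\eqref{eq:block_pos} is written with $x_i,y_i$ treated as real, whereas block positivity is a priori a condition on complex product vectors. For each of the four points, however, the arguments go through after the replacements $x_i^2\to|x_i|^2$ and $y_j^2\to|y_j|^2$: the perfect square becomes $|\sum_i x_iy_i|^2$, Cauchy--Schwarz still bounds $|\sum_i x_iy_i|^2 \le d\sum_i|x_iy_i|^2$, and the sum-of-squares bounds in cases~3--4 are replaced by the AM--GM estimate $|x_i|^2|y_j|^2+|x_j|^2|y_i|^2 \ge 2|\mathrm{Re}(x_i\bar{x}_jy_i\bar{y}_j)|$, giving the same non-negative conclusion and hence positivity of $M$.
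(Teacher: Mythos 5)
Your proposal is correct and follows essentially the same route as the paper: substitute the four parameter pairs into the block-positivity form~\eqref{eq:block_pos} and exhibit a manifestly non-negative expression in each case (your Cauchy--Schwarz bound in the second case is just the Lagrange-identity repackaging of the paper's sum of squares $\sum_{i<j}(x_iy_i-x_jy_j)^2$, and your cases 3--4 reproduce the paper's $\sum_{i<j}(x_iy_j\pm x_jy_i)^2$). Your extra care with complex product vectors (replacing $x_i^2,y_j^2$ by moduli and using $\lvert\sum_i x_iy_i\rvert^2$ together with the AM--GM estimate) closes a step the paper leaves implicit, since~\eqref{eq:block_pos} is written for real entries while block positivity is a condition over $\mathbb{C}^d\otimes\mathbb{C}^d$.
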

\begin{proof}
	Substituting the above values of $\alpha$ and $\beta$, one obtains the following expressions for $\bra{x}\otimes\bra{y}J(M)\ket{x}\otimes\ket{y}$ from~\eqref{eq:block_pos}:
\be
	\begin{cases}
	\displaystyle\frac{1}{d} \left( \sum_i x_iy_i\right) ^2,\quad \text{for } \alpha=1,\; \beta=1,\\
	\displaystyle\frac{1}{d-1}\sum_{i<j}(x_iy_i-x_jy_j)^2,\quad  \text{for } \alpha=-\frac{1}{d-1},\; \beta=1,\\
	\displaystyle\frac{1}{d-1}\sum_{i\neq j}(x_iy_j + x_jy_i)^2\quad  \text{for } \alpha=\frac{1}{d-1},\; \beta=-\frac{1}{d-1},\\
	\displaystyle\frac{1}{d-1}\sum_{i\neq j}(x_iy_j - x_jy_i)^2\quad  \text{for } \alpha=-\frac{1}{d-1},\; \beta=-\frac{1}{d-1},
	\end{cases}
\ee
	which are obviously positive for an arbitrary choice of $\ket{x}\otimes{\ket{y}}\in \mathbb{C}^d\ot \mathbb{C}^d$.
\end{proof}
Now we are able to formulate two main results of this section.
\begin{theorem}\label{th:suff}
	For the all values $\alpha, \beta$ within the quadrilateral region spanned by the points given in Lemma \ref{lem:ex_points}, the map $M$ is positive.
\end{theorem}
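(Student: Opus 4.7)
The plan is to exploit a simple but powerful structural observation: the map $M$ depends affinely on the parameters $(\alpha, \beta)$, and the set of positive maps is convex and closed under nonnegative linear combinations. Combining these two facts reduces the claim to the vertex evaluations already carried out in Lemma~\ref{lem:ex_points}.

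More precisely, I would first observe from the defining formula~\eqref{M} that for fixed input $X$, the expression $M(X)$ is of the form $\Phi_0(X) + \alpha\,\Phi_1(X) + \beta\,\Phi_2(X)$, where $\Phi_0(X) = \tr(X)\mathbf{1}/d$ does not depend on $(\alpha, \beta)$. Consequently, if $(\alpha, \beta) = \sum_{i=1}^{4} \lambda_i (\alpha_i, \beta_i)$ is a convex combination (so $\lambda_i \geq 0$ and $\sum_i \lambda_i = 1$), then
\begin{equation}
M_{\alpha, \beta}(X) \;=\; \Phi_0(X) + \Bigl(\sum_i \lambda_i \alpha_i\Bigr)\Phi_1(X) + \Bigl(\sum_i \lambda_i \beta_i\Bigr)\Phi_2(X) \;=\; \sum_{i=1}^{4} \lambda_i\, M_{\alpha_i, \beta_i}(X),
\end{equation}
using $\sum_i \lambda_i = 1$ to absorb $\Phi_0$ into the convex combination. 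Thus $M_{\alpha, \beta}$ equals a convex combination of the four vertex maps.

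Next, since the sum of positive semidefinite operators is positive semidefinite, a nonnegative linear combination of positive maps is again a positive map. Each vertex map $M_{\alpha_i, \beta_i}$ was shown to be positive in Lemma~\ref{lem:ex_points}, so for any $X \geq 0$ we have $M_{\alpha, \beta}(X) = \sum_i \lambda_i M_{\alpha_i, \beta_i}(X) \geq 0$. This establishes positivity of $M$ throughout the convex hull of the four vertices, which is precisely the quadrilateral region in the statement.

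The only thing one needs to check carefully is the representation of an arbitrary point in the quadrilateral as a convex combination of the four vertices; since the four points are non-collinear (indeed they span a genuine two-dimensional quadrilateral for every $d \geq 2$), this is immediate from standard convex geometry. I do not anticipate any genuine obstacle here: the whole argument is a convexity shortcut that sidesteps the direct block-positivity analysis of $J(M)$ at a generic $(\alpha, \beta)$. The harder work is deferred to Theorem~\ref{thm:nec}, which will conversely show that no point outside this quadrilateral yields a positive map.
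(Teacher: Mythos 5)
Your proposal is correct, and it reaches the conclusion by a genuinely cleaner route than the paper. The paper also exploits linearity in the parameters, but only one variable at a time: it fixes $\beta=-\tfrac{1}{d-1}$ (resp.\ $\beta=1$) and interpolates in $\alpha$ between the lemma's vertices, and because its sweeping is axis-parallel it must separately verify positivity along the slanted edge joining $\bigl(\tfrac{1}{d-1},-\tfrac{1}{d-1}\bigr)$ to $(1,1)$, i.e.\ $\beta=\tfrac{d}{d-2}\alpha-\tfrac{2}{d-2}$, which it does by an explicit sum-of-squares rewriting of $\bra{x}\otimes\bra{y}J(M)\ket{x}\otimes\ket{y}$ there; only then does it fill the interior by linearity in $\beta$. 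You instead observe that $M_{\alpha,\beta}=\Phi_0+\alpha\Phi_1+\beta\Phi_2$ is affine in $(\alpha,\beta)$, so any point of the (convex) quadrilateral gives $M_{\alpha,\beta}=\sum_i\lambda_i M_{\alpha_i,\beta_i}$ as a convex combination of the four vertex maps, and positivity follows at once from Lemma~\ref{lem:ex_points} together with convexity of the set of positive maps. This subsumes the slanted edge with no extra computation (and incidentally avoids the $d=2$ degeneracy of the paper's edge parametrization), at the mild cost of not exhibiting the explicit sum-of-squares form of the block-positivity functional on that edge, which the paper's calculation provides as a by-product but which is not needed for the theorem. Your cautionary remark about representing points of the quadrilateral as convex combinations of the vertices is indeed the only thing to check, and it holds since the four points form a convex (non-degenerate) trapezoid.
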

\begin{proof}
Let $\beta=-\frac{1}{d-1}$, $\alpha \in [-\frac{1}{d-1}, \frac{1}{d-1}]$. Expression \eqref{eq:block_pos} is positive for an arbitrary  $\ket{x}\otimes\ket{y}$, when $\beta=-\frac{1}{d-1}$ and either $\alpha=-\frac{1}{d-1}$ or $\alpha=\frac{1}{d-1}$. On the other hand, equation \eqref{eq:block_pos} is linear in $\alpha$ for fixed $\beta$ and $\ket{x}\otimes\ket{y}$. Thus, for all $\alpha\in [-\frac{1}{d-1}, \frac{1}{d-1}]$ and $\beta=-\frac{1}{d-1}$, the matrix $J(M)$ is block-positive, and thus the map $M$ is positive.
	Furthermore, let $\alpha\in[\frac{1}{d-1}, 1]$ and $\beta = \frac{d}{d-2}\alpha-\frac{2}{d-2}$. Then, writing for the convenience $\alpha=\frac{d-2}{d}\beta+\frac{2}{d}$, we get
	\be
	\begin{split}
		\bra{x}\otimes\bra{y}J(M)\ket{x}\otimes\ket{y}  = \frac{1}{d}\biggl(&2d\left(\frac{d-2}{d}\beta+\frac{2}{d}\right)\sum_{i>j}x_iy_ix_jy_j + (1-\beta) \sum_{i\neq j}x_i^2y_j^2 +(1+(d-1)\beta)\sum_ix_i^2y_i^2)\biggr) \\
		=\frac{1}{d}\biggl(&(1-\beta)\sum_{i>j} (x_iy_j+x_jy_i)^2+(1+(d-1)\beta)\left(\sum_ix_iy_i\right)^2\biggr),
	\end{split}
	\ee
	which is again clearly positive for $\beta \in [-\frac{1}{d-1}, 1]$.
	Using the same argument as for the $\beta=-\frac{1}{d-1}, \alpha = \pm \frac{1}{d-1}$, we can show that for $\beta=1, \alpha\in[-\frac{1}{d-1},1]$ expression \eqref{eq:block_pos} is also positive. Finally, fixing $\alpha$ and considering linearity of expression \eqref{eq:block_pos} in $\beta$ we obtain positivity for the whole region.
\end{proof}

\begin{theorem}
\label{thm:nec}
	The region described in Theorem \ref{th:suff} is not only sufficient, but necessary as well.
\end{theorem}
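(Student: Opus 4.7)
The plan is to use the fact that for any fixed product vector $\ket{x}\ot\ket{y}$, the quantity $\bra{x}\ot\bra{y}J(M)\ket{x}\ot\ket{y}$ from~\eqref{eq:block_pos} is affine in $(\alpha,\beta)$. Consequently, the set of $(\alpha,\beta)$ for which $M$ is positive is the intersection of the closed half-planes determined by all product vectors. Theorem~\ref{th:suff} already shows that the quadrilateral $Q$ with vertices given in Lemma~\ref{lem:ex_points} lies inside this intersection, so to establish necessity it suffices to exhibit, for each of the four edges of $Q$, one product vector whose associated half-plane boundary coincides with that edge.

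I would dispatch the three axis-aligned edges first, using vectors whose extreme symmetry makes all but one of the three sums in~\eqref{eq:block_pos} vanish. Specifically, for the edge $\beta=1$ take $\ket{x}=\ket{i}$, $\ket{y}=\ket{j}$ with $i\neq j$; for the edge $\beta=-\tfrac{1}{d-1}$ take $\ket{x}=\ket{y}=\ket{i}$; and for the edge $\alpha=-\tfrac{1}{d-1}$ take the uniform vector $\ket{x}=\ket{y}=\tfrac{1}{\sqrt{d}}\sum_{k=1}^{d}\ket{k}$, for which the profile $x_ky_k=\tfrac{1}{d}$ forces the diagonal and off-diagonal $\beta$-contributions to cancel. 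In each case a direct substitution should collapse~\eqref{eq:block_pos} into a single linear factor, respectively $\tfrac{1}{d}(1-\beta)$, $\tfrac{1}{d}(1+(d-1)\beta)$ and $\tfrac{1}{d}\bigl((d-1)\alpha+1\bigr)$, delivering the three corresponding edge inequalities.

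The step I expect to be the main obstacle is the slanted edge $d\alpha-(d-2)\beta=2$ joining $(1,1)$ and $(\tfrac{1}{d-1},-\tfrac{1}{d-1})$, because there I need a witness whose associated half-plane normal has slope $\tfrac{d}{d-2}$ and no obvious symmetry simultaneously kills two of the three sums in~\eqref{eq:block_pos}. The guiding principle is that~\eqref{eq:block_pos} vanishes at the vertex $(1,1)$ precisely when $\sum_k x_ky_k=0$ (cf.\ the sum-of-squares form in Lemma~\ref{lem:ex_points}), which motivates choosing $\ket{x},\ket{y}$ orthogonal but supported on a common two-dimensional subspace. A concrete candidate is $\ket{x}=\tfrac{1}{\sqrt{2}}(\ket{1}+\ket{2})$ and $\ket{y}=\tfrac{1}{\sqrt{2}}(\ket{1}-\ket{2})$, for which a short calculation should reduce~\eqref{eq:block_pos} to $\tfrac{1}{2d}\bigl(2-d\alpha+(d-2)\beta\bigr)$, precisely the missing boundary inequality. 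Combined, the four witness vectors show that any $(\alpha,\beta)\notin Q$ violates at least one block-positivity condition, so the sufficient region of Theorem~\ref{th:suff} is also necessary.
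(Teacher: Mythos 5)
Your proposal is correct and takes essentially the same approach as the paper: it tests block positivity of $J(M)$ from~\eqref{eq:block_pos} on specially chosen product vectors to recover each bounding inequality, and your witness for the slanted edge is, up to normalization and sign, the very vector $\ket{x}\propto\ket{1}+\ket{2}$, $\ket{y}\propto\ket{1}-\ket{2}$ used in the paper. The only cosmetic difference is that you read off the constraint $d\alpha-(d-2)\beta\leq 2$ directly from the computed value, whereas the paper derives it via an $\varepsilon$-perturbation below the slanted line producing a negative expectation value.
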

\begin{proof}
	To show most of the necessary conditions for positivity of map $M$ we choose special $\ket{x}$ and $\ket{y}$, presented in Table \ref{tab:n} along with the condition yield.
\begin{table}[h]
	\centering
\begin{tabular}{|c|c|c|c|c|}
	\hline
	$\ket{x}$ & $\ket{y} $  & $\bra{x}\otimes\bra{y}W_T\ket{x}\otimes\ket{y} $ &  condition\\ \hline
	$\sum_i \ket{i}$ & $\sum_i \ket{i}$  & $d(1+ (d-1)\alpha)$  &  $\alpha \geq -\frac{1}{d-1}$  \\
	$-\ket{1}+\ket{2}$ & $\sum_i \ket{i}$  & $(d-1)(1-\alpha)$  &  $\alpha \leq  1$  \\
	$\ket{1}$ & $\ket{1}$ &  $\frac{1+(d-1)\beta}{d}$ & $\beta\geq-\frac{1}{d-1}$   \\ 
	$\ket{1}$ & $\ket{2}$  & $\frac{1-\beta}{d}$  & $\beta\leq 1$   \\ \hline
\end{tabular}
	\caption{Derivation of the necessary conditions of the positivity of the map  $M$, basing on special choice of $\ket{x}, \ket{y}$ where $\{\ket{i}\}$ is the standard basis in $\mathbb{C}^d$.}\label{tab:n}
\end{table}

The final condition that makes the necessary and sufficient regions coincide is obtained by taking  $\beta < \frac{d}{d-2}\alpha-2/(d-2)$.  Let us set
\begin{equation}
	\alpha= \frac{d-2}{d}\beta + 2/d +\frac{\varepsilon}{2d}, \quad \varepsilon>0.
\end{equation}
Thus, condition for block positivity becomes
\be
	\bra{x}\otimes\bra{y}J(M)\ket{x}\otimes\ket{y} = \frac{1}{d}\left(   (1-\beta)\left( \sum_{i< j}(x_iy_j+x_jy_i)^2\right)  + (1+(d-1)\beta)\left( \sum_ix_iy_i\right) ^2+\varepsilon\sum_{i< j} x_iy_ix_jy_j\right) .
\ee
Clearly, this expression can be negative, e.g. let us take $\ket{x}=-(\ket{1}+\ket{2})$ and $\ket{y}=\ket{2}-\ket{1}$, then
\begin{equation}
	\bra{x}\otimes\bra{y}W_M\ket{x}\otimes\ket{y}=-\varepsilon.
\end{equation}
\end{proof}
One can compare the result above with the one obtained from the inverse reduction map.
\begin{fact}
\label{f:M_red}The region for which $ICLM$ $M$ is positive but not completely positive, obtained from inverse reduction map is given by the following set of inequalities
\begin{align}
\begin{cases}
	\frac{1}{d-1} - \frac{1-\beta }{d} \geq 0,\\
	\frac{1}{d-1} - \frac{1}{d} (-d \alpha +(d-1) \beta +1) \geq 0,\\
	\frac{1}{d-1} - \frac{1}{d} (d(d-1) \alpha +(d-1) \beta +1) \geq 0,
\end{cases}
\end{align}
while at least one inequality given in (\ref{e:M_cp}) is violated.
\end{fact}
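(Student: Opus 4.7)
The plan is to specialise Corollaries~\ref{cor_Wtilde} and~\ref{Cor27} directly to the $ICLM$ $M$ defined in~(\ref{M}). The machinery of Section~\ref{invRR} requires only two ingredients: the spectral parameter $l_{\id}$ and the spectrum of $J(M)$ in arbitrary dimension $d$.

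First I would pin down $l_{\id}$. Taking the trace of~(\ref{M}) gives $\tr(M(X))=\tr(X)$, because the bracketed expressions appearing in both the $\alpha$- and $\beta$-terms have vanishing trace by inspection. Hence $M$ is trace preserving, and by the paragraph preceding~(\ref{sol}) this forces $l_{\id}=1$.

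Next I would diagonalise $J(M)$ in general dimension. Using $\sum_{i}E_{ii}\otimes\mathbf{1}=\mathbf{1}\otimes\mathbf{1}$, equation~(\ref{JM}) rewrites as $J(M)=\tfrac{1-\beta}{d}\,\mathbf{1}\otimes\mathbf{1}+\alpha\,F+(\beta-\alpha)\,D$, where $F=\sum_{ij}E_{ij}\otimes E_{ij}=d\,P_{+}$ and $D=\sum_{i}E_{ii}\otimes E_{ii}$. Since $D$ is itself a rank-$d$ projector satisfying $DP_{+}=P_{+}$, the three operators $P_{+}$, $D-P_{+}$, $\mathbf{1}\otimes\mathbf{1}-D$ form a complete family of mutually orthogonal projectors of ranks $1$, $d-1$, and $d^{2}-d$. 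Expanding $J(M)$ in this family yields the three eigenvalues
\[
\epsilon_{+}=\tfrac{1+(d-1)\beta+d(d-1)\alpha}{d}, \qquad \epsilon_{D}=\tfrac{1+(d-1)\beta-d\alpha}{d}, \qquad \epsilon_{\perp}=\tfrac{1-\beta}{d},
\]
which for $d=3$ coincide with the eigenvalues $\epsilon_{3},\epsilon_{2},\epsilon_{1}$ already displayed in the section.

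Substituting $l_{\id}=1$ together with $\epsilon_{\perp},\epsilon_{D},\epsilon_{+}$ into the inequalities $\tfrac{l_{\id}}{d-1}-\epsilon\geq 0$ of Corollary~\ref{cor_Wtilde} reproduces, line by line, the three conditions stated in the Fact, and Corollary~\ref{Cor27} supplies the additional clause requiring at least one $CP$ inequality in~(\ref{e:M_cp}) to be violated so that $M$ is not completely positive. I do not anticipate any serious obstacle; the one moderately substantive step is identifying the three invariant subspaces that simultaneously diagonalise $\mathbf{1}\otimes\mathbf{1}$, $F$, and $D$, after which the Fact is an immediate transcription of the general framework of Section~\ref{invRR}.
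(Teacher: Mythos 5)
Your proposal is correct and follows exactly the route the paper intends: apply Corollaries~\ref{cor_Wtilde} and~\ref{Cor27} with $l_{\id}=1$ (trace preservation of $M$) and the eigenvalues of $J(M)$, which for $d=3$ reproduce $\epsilon_1,\epsilon_2,\epsilon_3$ of~(\ref{e:M_cp}). Your simultaneous diagonalisation of $J(M)$ via the orthogonal projectors $P_{+}$, $D-P_{+}$, $\mathbf{1}\otimes\mathbf{1}-D$ is a clean way to obtain the general-$d$ spectrum that the paper leaves implicit, and the three resulting inequalities transcribe line by line into the Fact.
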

The comparison of all the regions discussed above is presented in Figure~\ref{fig:PnCP}.
\begin{figure}
\subfloat[][]{\includegraphics[width=0.5\textwidth]{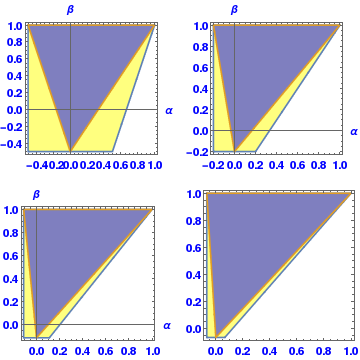}}
\subfloat[][]{\includegraphics[width=0.5\textwidth]{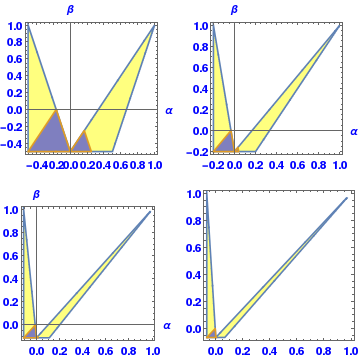}}
	\caption{In the subplot (a) we present possible values for the parameters $\alpha, \beta$ for $d=3,4,6,10$ for which the map $M$ from~\eqref{M} is either positive (yellow) or completely positive (blue). In the subplot (b) we compare general solution (yellow) for the P but not CP region, with the one obtained from the inverse reduction map (blue).  One observes that for larger dimensions $d$ the region for positivity (without complete positivity) shrinks.}
		\label{fig:PnCP}
\end{figure}
\FloatBarrier

\subsection{Connection between maps induced by group of monomial unitaries and group $S(4)$}
One can observe that the solution for positivity for the map $M$, for $d=3$ obtained from the inverse reduction map resembles the solution for $S(4)$ (see Figure \ref{fig:s4}) taken along the plane $l_{\lambda_1}=l_{\lambda_3}=\alpha, l_{\lambda_2}=\beta$.  What is more, for such choice of the parameters, all eigenvalues of $J(\Phi^{\lambda_1})$ and $J(M)$ are the same. Since both the matrices are hermitian, if they have common spectra they are similar. Moreover, using properties of  isomorphism, $M$ and $\Phi^{\lambda_1}$ (see~\eqref{S4}) have to be similar as well.
The connection between the entanglement witnesses from both groups is as follows

\begin{fact}
Let $M\in \operatorname{End}(\mathbb{M}(\mathbb{C},3)$ be an $ICLM$ with respect to monomial unitary subgroup $MU(3, n)$ and let $\Phi^{\lambda_1} \in \operatorname{End}(\mathbb{M}(\mathbb{C},3)$ be ICLM with respect to symmetric group $S(4)$. Then for $\alpha, \beta$ described in Fact~\ref{f:M_red} their respective Choi-Jamio{\l}kowski images related by
\be
J(M(\alpha,\beta))=AJ(\Phi^{\lambda_1}(\alpha, \beta))A^{-1}, 
\ee
where the matrix $A$ is given by
\begin{equation}
A=\left(
\begin{array}{ccccccccc}
 0 & 0 & \frac{1}{3} & 0 & 1 & 0 & -\frac{1}{3} & 0 & 0 \\
 0 & 1 & 0 & 0 & 0 & 0 & 0 & 0 & 0 \\
 0 & 0 & 0 & 1 & 0 & 0 & 0 & 0 & 0 \\
 0 & 0 & 0 & 0 & 0 & 1 & 0 & 0 & 0 \\
 0 & 0 & -\frac{2}{3} & 0 & 1 & 0 & \frac{2}{3} & 0 & 0 \\
 0 & 0 & 1 & 0 & 0 & 0 & 1 & 0 & 0 \\
 0 & 0 & 0 & 0 & 0 & 0 & 0 & 1 & 0 \\
 -1 & 0 & 0 & 0 & 0 & 0 & 0 & 0 & 1 \\
 1 & 0 & \frac{1}{3} & 0 & -1 & 0 & -\frac{1}{3} & 0 & 1 \\
\end{array}
\right).
\end{equation}
\end{fact}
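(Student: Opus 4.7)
The plan is to verify the claimed similarity $J(M(\alpha,\beta))A = A J(\Phi^{\lambda_1}(\alpha,\beta))$ as a polynomial identity in $\alpha,\beta$, leveraging the shared spectral structure of the two Choi-Jamio{\l}kowski images. The identification $l_{\lambda_1}=l_{\lambda_3}=\alpha$ and $l_{\lambda_2}=\beta$ is already hinted at in the paragraph preceding the Fact; substituting it into the eigenvalue formulas \eqref{map} for $J(\Phi^{\lambda_1})$ collapses the spectrum to the three distinct values $\tfrac{1-\beta}{3}$, $\tfrac{1-3\alpha+2\beta}{3}$ and $\tfrac{1+6\alpha+2\beta}{3}$, with total multiplicities $6$, $2$ and $1$ respectively. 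The first step of the proof is to compare these with the eigenvalues of $J(M)$ computed from \eqref{JM}; inspecting its block-diagonal form one sees the same three eigenvalues with matching multiplicities.

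Since both $J(M)$ and $J(\Phi^{\lambda_1})$ are Hermitian with coincident spectra (including multiplicities) for every $(\alpha,\beta)\in\mathbb{R}^2$, they are automatically unitarily similar for all parameter values. In particular, an intertwiner $A$ between them exists and is unique up to a change of orthonormal basis within each eigenspace. To pin down the specific matrix $A$ in the statement, I would construct, once and for all (independently of $\alpha,\beta$), orthonormal bases of the common eigenspaces: for $J(M)$ this is immediate from the sparse block structure of \eqref{JM}, while for $J(\Phi^{\lambda_1})$ the bases can be read off from the block decomposition obtained via the novel irrep construction of Appendix~\ref{NovelSn}. The matrix that sends the latter basis to the former (ordered so that equal eigenvalues correspond) is the candidate $A$.

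The last step is direct verification: compute $A J(\Phi^{\lambda_1})$ and $J(M) A$ column by column, using the explicit $9{\times}9$ sparse expressions, and check that every entry is the same affine polynomial in $(\alpha,\beta)$. Both products have only a handful of nonzero coefficients, so this reduces to checking a small number of linear relations among $\alpha,\beta$ and the constant $1$. Invertibility of $A$ can be checked separately by computing its determinant, which is a nonzero rational number independent of $(\alpha,\beta)$.

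The main obstacle is the six-dimensional eigenspace attached to the eigenvalue $\tfrac{1-\beta}{3}$: on that subspace $A$ is only determined up to a unitary, and the specific form displayed in the Fact encodes one canonical choice that must be consistent across all three eigenspaces simultaneously. Once the eigenvector bookkeeping is set up carefully, however, the remaining algebra is routine and can, in principle, be delegated to a computer algebra system; conceptually, the Fact is a concrete realisation of the abstract similarity forced by the matching spectra.
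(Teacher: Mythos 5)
Your proposal follows essentially the same route as the paper: the paper likewise observes that under the identification $l_{\lambda_1}=l_{\lambda_3}=\alpha$, $l_{\lambda_2}=\beta$ the two hermitian Choi--Jamio{\l}kowski images share their spectra (with multiplicities $6,2,1$) and then simply exhibits the explicit intertwiner $A$, the relation being confirmed by direct computation. The only point worth keeping explicit is that the spectral argument alone gives similarity only pointwise in $(\alpha,\beta)$ and by a possibly parameter-dependent (and here non-unitary) $A$, so the real content is carried by your final step, the entrywise verification that $AJ(\Phi^{\lambda_1})=J(M)A$ holds as an affine identity in $(\alpha,\beta)$ together with $\det A\neq 0$ — which your plan includes, so the argument is complete.
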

\section{Comparison with the generalized Choi map}
\label{Choi-porownanie} 
The Choi map \cite{cho} and its generalised version~\cite{choi2} are examples of positive, but not completely positive endomorphisms of $\M(3, \mathbb{C})$. They are discussed, either in non-normalized \cite{cho} or normalized form \cite{chru_kos}. Since the $ICLM$ with respect to the monomial unitary subgroup is trace-preserving, we shall compare it to the latter one. The normalised and generalised version of the Choi map is defined by its action on the standard operator basis $\{e_{ij}\}_{i,j=1}^3$, with parameters $a,b,c \geq 0$, as follows:
\be
\begin{split}
	\Lambda[a,b,c](e_{ii}):=\sum_{j=1}^3d_{ij}e_{jj},\quad
	\Lambda[a,b,c](e_{ij}):=-\frac{1}{a+b+c}e_{ij}\quad \text{for} \quad i\neq j,
\end{split}
\ee
where the coefficients $d_{ij}$, for $1\leq i,j\leq 3$ form a matrix:
\be
d=\frac{1}{a+b+c}\begin{pmatrix}
	a & b & c\\
	c & a & b\\
	b & c & a
\end{pmatrix}.
\ee
Using the above, together with equation~\eqref{Ch-J}, we compute the Choi-Jamio{\l}kowski image of $\Lambda[a,b,c]$:
\be
\label{gchoi}
J(\Lambda[a,b,c])=\frac{1}{a+b+c}\begin{pmatrix}
	a & 0 & 0 & 0 & -1 & 0 & 0 & 0 & -1\\
	0 & b & 0 & 0 & 0  & 0 & 0 & 0 & 0\\
	0 & 0 & c & 0 & 0  & 0 & 0 & 0 & 0\\
	0 & 0 & 0 & c & 0  & 0 & 0 & 0 & 0\\
	-1& 0 & 0 & 0 & a  & 0 & 0 & 0 & -1\\
	0 & 0 & 0 & 0 & 0  & b & 0 & 0 & 0\\
	0 & 0 & 0 & 0 & 0  & 0 & b & 0 & 0\\
	0 & 0 & 0 & 0 & 0  & 0 & 0 & c &0\\
	-1& 0 & 0 & 0 & -1 & 0 & 0 & 0 &a 
\end{pmatrix}.
\ee
We know that the map $\Lambda[a,b,c]$ is \textit{non-decomposable and positive} \cite{chru_kos} if
\begin{enumerate}[a)]
	\item 
	\be
	a\leq 2,
	\ee
	\item 
	\be
	a+b+c\geq 2,
	\ee
	\item 
	\be
	\label{3}
	\begin{cases}
		(1-a)^2\leq bc \leq (2-a)^2/4 \quad \text{for} \quad 0\leq a\leq 1,\\
		0\leq bc\leq (2-a)^2/4 \quad \text{for} \quad 1\leq a\leq 2.
	\end{cases}
	\ee
\end{enumerate}
To have conditions for the map $\Lambda[a,b,c]$ to be \textit{positive}  but not completely positive, we have to use instead of~\eqref{3} the following constraints~\cite{chru_kos2}:
\be
\text{if} \quad a\leq 1 \quad \text{then} \quad bc\geq (1-a)^2.
\ee
\paragraph{Comparison with the generalized Choi map - the permutation group $S(4)$.}
The Choi-Jamio{\l}kowski image of $J(\Phi)$ from~\eqref{chooi} can be  equal to $J(\Lambda[a,b,c])$ from~\eqref{gchoi}, only when we put $l_{\lambda_3}=l_{\lambda_2}=l_{\lambda_1}=l$ in equation~\eqref{map}. This implies the following conditions for the parameters $a,b,c$:
\be
\label{equ}
a=-\frac{2l+1}{3l}, \ b=c=-\frac{1-l}{3l}.
\ee
\paragraph{Comparison with the generalised Choi map - the subgroup of monomial unitaries $MU(n,d)$.}
Comparing the matrix $\Lambda[a,b,c]$  with the Choi-Jamio{\l}kowski image $J(M)$, for $d=3$, we arrive at the following set of equations
\begin{equation}
	\label{eq:ab_al_be}
\begin{cases}
\frac{-1}{a+b+c}=\alpha,\\
\frac{a}{a+b+c}=\frac{1-\beta}{3} +\beta,\\
\frac{b}{a+b+c}=\frac{1-\beta}{3} ,\\
\frac{c}{a+b+c}=\frac{1-\beta}{3}.\\
\end{cases}
\end{equation}
From~\eqref{eq:ab_al_be}, we can immediately see that for positive $\alpha$ both pictures cannot coincide. Thus,  for $\alpha>0$, we obtain the positive and non-completely positive map, which is different from the generalised Choi map described in the literature.
Another immediate observation is that for $\alpha<0$ both maps coincide if $b=c$. This simplifies the requirement for positivity and non-complete positivity to
\begin{align}
	\label{eq:ab_p_ncp}
	\begin{cases}
&a\leq 2,\\
&a+2b\geq2,\\
&1-a\leq b, a \in [0, 1].
	\end{cases}
\end{align}
Solving equation \eqref{eq:ab_al_be} and taking into account condition \eqref{eq:ab_p_ncp}, one check that for every $(\alpha, \beta)$, in the remaining region (except for one point, where $\alpha=0$), we have the following
\begin{fact}
	In the region presented in Figure \ref{fig:final}, corresponding to $\alpha<0$, the $ICLM$ $M(\alpha, \beta)$ can be expressed as the generalised Choi map $\Lambda(a,b,b)$ which is positive, but not completely positive. Or conversely, the map $\Lambda(a,b,b)$ can be expressed as $M(\alpha, \beta)$.
\end{fact}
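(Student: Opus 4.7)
The plan is to establish an explicit bijection between parameter pairs $(\alpha,\beta)$ with $\alpha<0$ and triples $(a,b,c)$ with $b=c$ by inverting the system \eqref{eq:ab_al_be}, and then verify that under this bijection the positive-but-not-completely-positive regions for $M(\alpha,\beta)$ and for $\Lambda(a,b,b)$ coincide. Since both Choi--Jamio{\l}kowski images live in $\mathbb{M}(9,\mathbb{C})$ and the defining systems of equations already force equality of every matrix entry when $(\alpha,\beta)$ and $(a,b,c=b)$ are related by \eqref{eq:ab_al_be}, the core of the argument is a parametric identification followed by a region-matching check.

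First I would solve \eqref{eq:ab_al_be} with $c=b$. Dividing the second and third (or fourth) equations by the first and using $a+b+c = -1/\alpha$, one gets the closed-form inversion
\begin{equation}
a = -\frac{1+2\beta}{3\alpha}, \qquad b = c = -\frac{1-\beta}{3\alpha}.
\end{equation}
For $\alpha<0$ this yields $a,b\geq 0$ precisely when $\beta\in[-1/2,\,1]$, giving a well-defined map $(\alpha,\beta)\mapsto(a,b,b)$. The inverse is $\alpha=-1/(a+2b)$ and $\beta=(a-b)/(a+2b)$, which is defined whenever $a+2b>0$, i.e.\ away from the excluded point $\alpha=0$. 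This establishes the bijection between the two parameter sets.

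Next I would translate the conditions \eqref{eq:ab_p_ncp} for $\Lambda(a,b,b)$ to be positive but not completely positive into the $(\alpha,\beta)$ coordinates using the closed-form expressions above. The inequality $a\leq 2$ becomes $\beta \geq 3\alpha - \tfrac{1}{2}$ (remembering to flip the inequality since $\alpha<0$), the inequality $a+2b \geq 2$ becomes $\alpha\geq -\tfrac{1}{2}$, and the mixed condition ``$a\in[0,1]$ implies $b\geq 1-a$'' becomes a linear bound relating $\alpha$ and $\beta$ that I would check coincides with the boundary segment $\beta=\tfrac{d}{d-2}\alpha-\tfrac{2}{d-2}$ from Theorem~\ref{th:suff} specialised to $d=3$. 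The remaining bounds $\beta\in[-1/2,1]$ match the constraints from the quadrilateral in Lemma~\ref{lem:ex_points}. By comparing boundary by boundary with the $\alpha<0$ half of the positivity region of Theorem~\ref{thm:nec}, the two regions are seen to be identical.

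The main obstacle I expect is the careful matching of the non-linear/piecewise condition $bc\geq(1-a)^2$ for $a\in[0,1]$ (with $c=b$ giving $b\geq 1-a$) against the linear boundary of the $M$-positivity region in the $(\alpha,\beta)$ plane; one must show that the apparent non-linearity collapses to a linear boundary after the change of variables, and that the corner cases at $\alpha=-1/2$ and along $\beta=1$ are handled consistently. Once this region-matching is verified, the Fact follows directly: for every $(\alpha,\beta)$ in the stated region, the parameters $(a,b,b)$ produced by the inversion satisfy the P-but-not-CP conditions for $\Lambda$, and the Choi--Jamio{\l}kowski matrices~\eqref{JM} and~\eqref{gchoi} coincide by construction, yielding the claimed identification in both directions.
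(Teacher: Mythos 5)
Your strategy is the same as the paper's: invert the entrywise matching \eqref{eq:ab_al_be} with $c=b$, and your closed-form inversion $a=-\frac{1+2\beta}{3\alpha}$, $b=-\frac{1-\beta}{3\alpha}$, together with $\alpha=-1/(a+2b)$, $\beta=(a-b)/(a+2b)$ and the range $\beta\in[-\frac12,1]$ for $a,b\ge0$, is correct. The problem is in the region-matching step, where your translated inequalities are wrong and the claimed boundary-by-boundary identification would fail as written. Since $a+2b=-1/\alpha>0$, the condition $a\le 2$ is equivalent to $\frac{1+2\beta}{3}\le-2\alpha$, i.e.\ $\beta\le-3\alpha-\frac12$, not $\beta\ge3\alpha-\frac12$. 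This is the decisive inequality: its strict version is exactly the non-complete-positivity condition $6\alpha+2\beta+1<0$ of $M$ (the eigenvalue $\epsilon_3$ in \eqref{e:M_cp}, which equals the eigenvalue $\frac{a-2}{a+b+c}$ of the central block of \eqref{gchoi}). Your version $\beta\ge3\alpha-\frac12$ is automatically satisfied whenever $\alpha<0$ and $\beta\ge-\frac12$, so with it the constraint becomes vacuous and your ``matched'' region would wrongly contain completely positive maps, losing precisely the not-CP boundary that Figure \ref{fig:final} encodes.

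The second mis-step is the mixed condition: with $b=c$, ``$a\in[0,1]$ implies $b\ge1-a$'' is just $a+b\ge1$, which translates to $\frac{2+\beta}{3}\bigl(-\frac1\alpha\bigr)\ge1$, i.e.\ $\beta\ge-3\alpha-2$. For $\alpha\ge-\frac12$ this is implied by $\beta\ge-\frac12$ (it touches the region only at the corner $(-\frac12,-\frac12)$), so it is non-binding; it does not reproduce the segment $\beta=\frac{d}{d-2}\alpha-\frac{2}{d-2}$ of Theorem \ref{th:suff}, which for $d=3$ is $\beta=3\alpha-2$ and lies entirely in the half-plane $\alpha>0$, hence is irrelevant here. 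Once these two translations are corrected, the matching does go through exactly as in the paper: the $\alpha<0$ part of the P-but-not-CP region of $M$ is cut out by $\alpha\ge-\frac12$ (from $a+2b\ge2$), $-\frac12\le\beta\le1$ (from $a,b\ge0$) and $\beta<-3\alpha-\frac12$ (from $a<2$), with the condition $bc\ge(1-a)^2$ automatically satisfied. A further simplification you could have exploited: since \eqref{eq:ab_al_be} forces $J(M)$ and $J(\Lambda(a,b,b))$ to be literally the same matrix, positivity and non-complete-positivity transfer automatically between the two maps, so the only substantive checks are the admissibility $a,b\ge0$ and the coincidence of the parameter regions.
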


Both regions,i.e. for $\alpha<0$ and $\alpha>0$ are presented in  Figure \ref{fig:final}.
\begin{figure}[h]
	\begin{centering}
		\includegraphics[width=0.4\textwidth]{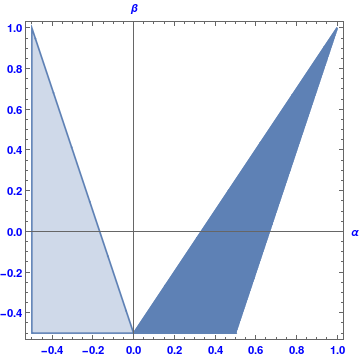} 
		\caption{The general region for positive but not completely positive $ICLMs$ generated by the group $MU(3)$. The blue triangle on the right-hand side of the point $(0,0)$ represents the solutions different from the generalized Choi map. It is worth noting that the region of overlap between the generalized Choi map and $MU(3)$ is not closed since we have to exclude the point $(0,0)$.}
		\label{fig:final}
	\end{centering}
\end{figure}

\paragraph{Decomposability of $ICLMs$ generated by the monomial unitary group.}
Since the $ICLMs$ induced by the monomial unitary group coincides with the generalized Choi map, we can examine the conditions for positivity and non-decomposability. By~\cite{chru_kos} we know that the map $\Lambda(a,b,c)$ is non-decomposable if:
\begin{align}
	&0\leq a\leq2,\\
	&a+b+c\geq 2,\\
	&\begin{cases}
	(1-a)^2 \leq bc \leq (2-a)^2/4 \;\text{for}\; 0\leq 1,\\ 
	0 \leq bc \leq (2-a)^2/4 \;\text{for}\; 1 \leq a \leq 2.
\end{cases}
\end{align}
Since for the region of coincidence of these two maps, we have $\alpha = -1/(a+b+c)$ and $\alpha \geq-1/2$, we can see that in the region of overlap the $ICLM$ is decomposable. A natural question arises, if the explicit decomposition can be pointed. Unfortunately, no non-trivial results were obtained in this respect.

The decomposition of the following form
\begin{equation}
M(\alpha, \beta) = pM^{(1)}(\alpha_1, \beta_1) + (1-p)M^{(2)}(\alpha_2, \beta_2) \circ T\quad p\in[0,1]
\end{equation}
where $M, M^{(1)}, M^{(2)}$ are of the form (\ref{M}) and $M$ is positive, but not completely positive and $M^{(1)}, M^{(2)}$ are completely positive does not have any solutions except for the trivial case (i.e. $p=0$).
Similarly, the decomposition 
\be
M(\alpha, \beta) = pM^{(1)}(\alpha_1, \beta_1) + (1-p)M^{(2)}(\alpha_2, \beta_2) \quad p\in[0,1]
\ee
where $M(\alpha_2, \beta_2) = \Phi(\alpha_2, \beta_2)\circ T$ and $M^{(1)}$ and $\Phi(\alpha_2, \beta_2)$ are CP is not contradictory only if $p = 0$. In such case $M$ is completely co-positive. From the examination of $J(\Phi)$ one can describe the region of complete co-positivity of M as follows: 
\begin{fact}
A map $M\in \operatorname{End}(\mathbb{M}(\mathbb{C}, 3)$, given by  (\ref{M}) is completely co-positive if the following inequalities hold
\be
\begin{cases}
1/3 (1 - 3 \alpha - \beta)\geq 0,\\
1/3 (1 + 3 \alpha - \beta)\geq 0,\\
1/3 (1 + 2 \beta)\geq 0.
\end{cases}
\ee
\end{fact}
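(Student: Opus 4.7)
The plan is to invoke Theorem~\ref{mainTHM}(3), which states that $M$ is completely co-positive if and only if the Choi-Jamio{\l}kowski image of $M\circ T$ is positive semidefinite. A direct relabeling of summation indices shows that $J(M\circ T) = J(M)^{T_1}$, i.e.\ the partial transposition of $J(M)$ on the first tensor factor. Using the explicit form of $J(M)$ from Section~\ref{mon_sec}, partial transposition fixes both $\mathbf{1}\otimes\mathbf{1}$ and the diagonal operator $D := \sum_i E_{ii}\otimes E_{ii}$, while sending $\sum_{ij} E_{ij}\otimes E_{ij}$ to the SWAP operator $F := \sum_{ij} E_{ji}\otimes E_{ij}$. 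Therefore, for $d=3$,
\be
J(M\circ T) \;=\; \frac{1-\beta}{3}\,\mathbf{1}\otimes\mathbf{1} \;+\; \alpha\, F \;+\; (\beta-\alpha)\,D.
\ee

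The next step is to diagonalize this operator. I would exploit the fact that $F$ and $D$ commute and admit a joint block decomposition: $\mathbb{C}^3\otimes\mathbb{C}^3$ splits into three three-dimensional subspaces, each invariant under both $F$ and $D$. These are (i) the diagonal subspace $\operatorname{span}\{|ii\rangle : i=1,2,3\}$, on which $F=+1$ and $D=+1$; (ii) the symmetric off-diagonal subspace $\operatorname{span}\{(|ij\rangle+|ji\rangle)/\sqrt{2} : i<j\}$, on which $F=+1$ and $D=0$; and (iii) the antisymmetric subspace $\operatorname{span}\{(|ij\rangle-|ji\rangle)/\sqrt{2} : i<j\}$, on which $F=-1$ and $D=0$. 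Substituting these values into the displayed expression for $J(M\circ T)$ yields three eigenvalues, each of multiplicity three, namely $\tfrac{1}{3}(1+2\beta)$, $\tfrac{1}{3}(1+3\alpha-\beta)$ and $\tfrac{1}{3}(1-3\alpha-\beta)$, corresponding to the subspaces (i), (ii), (iii) respectively.

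To conclude, $J(M\circ T)\geq 0$ is equivalent to the nonnegativity of all three eigenvalues, which is exactly the system of inequalities in the statement. No genuine obstacle is expected: once the identification $J(M\circ T)=J(M)^{T_1}$ is in hand, the computation reduces to reading off eigenvalues from the joint $F$-$D$ block structure, which is elementary. The only mild subtlety is making sure the sign and tensor-factor conventions for the partial transpose are consistent with the definition of $J$ used in the paper, but this is handled by the index relabeling noted at the outset.
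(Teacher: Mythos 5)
Your proposal is correct: the identity $J(M\circ T)=J(M)^{T_1}$ holds by the index relabeling you describe, the resulting operator $\tfrac{1-\beta}{3}\mathbf{1}\otimes\mathbf{1}+\alpha F+(\beta-\alpha)D$ is exactly right, and reading off its eigenvalues $\tfrac{1}{3}(1+2\beta)$, $\tfrac{1}{3}(1+3\alpha-\beta)$, $\tfrac{1}{3}(1-3\alpha-\beta)$ from the diagonal, symmetric off-diagonal and antisymmetric subspaces gives precisely the stated inequalities (in fact as an if-and-only-if). The paper only asserts this fact "from the examination of $J(\Phi)$" without spelling out the computation, and your argument is essentially that intended examination, carried out cleanly via the joint $F$--$D$ block structure.
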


\section{Irreducibly covariant linear maps and unital quantum channels}
\label{sec:unit_qc}
To illustrate the importance of the general studies on $ICLMs$, below we investigate a novel connection between all qubit unital quantum channels and linear maps generated by the irreducible representations of the quaternion group $Q$. 
To understand this connection, first, we have to present two methods of description of a qubit quantum state transformations. A qubit state is a matrix 
\be
\rho =(\rho _{ij})\in \mathbb{M}(2,\mathbb{C}):\tr(\rho )=1,\quad \rho \geq 0,
\ee
and the linear maps $\Phi \in \operatorname{End}[\mathbb{M}(2,\mathbb{C})]$ acts directly on the state matrices $\Phi :\rho \rightarrow \rho
^{\prime }$. There is also an alternative method of description of qubit states transformation based
on the presentation of the quantum states $\rho $ on the Bloch sphere%
\begin{equation}
\label{kubit}
\rho \equiv \rho (r)=\frac{1}{2}(\mathbf{1}_{2}+\sum_{i=1}^{3}r_{i}\sigma
_{i}):\sum_{i=1}^{3}r_{i}^{2}\leq 1,\quad r=(r_{i})\in \mathbb{R}^3,
\end{equation}
where $\sigma_{i}$ are Pauli matrices. Here, due to the properties of the
Pauli matrices we have a one-to-one correspondence between the states $\rho
\in \mathbb{M}(2,\mathbb{C})$ and the vectors $r=(r_{i})\in \mathbb{R}^{3},$ so we may describe the state transformations as the transformations $%
T\in \operatorname{End}[\mathbb{R}^{3}]\equiv \mathbb{M}(3,\mathbb{R})$. From the paper of K{\"u}bler and Braun~\cite{Kubler} we know that the most general
form of the unital qubit quantum channel $\Phi^T[\rho(r)]$ is 
\begin{equation}
\label{TT}
\Phi^T[\rho(r)]=\rho(Tr),
\end{equation}
and is fully characterized by $T\in \text{End}[\mathbb{R}^3]$ given by
\begin{equation}
\label{T}
T\equiv T(\eta ,R_{1},R_{2})=R_{1}D(\eta )R_{2}:R_{i}\in SO(3),\qquad D(\eta
)=\operatorname{diag}(\eta _{1},\eta _{2},\eta _{3})\in \mathbb{M}(3,
\mathbb{R}).
\end{equation}
The most important quantum properties of the map $T$, like positivity
and complete positivity  depend on the matrix $D(\eta )$ only, not on
the matrices\ $R_{i}\in SO(3)$, so the unital quantum channel is completely
characterized by the numbers $(\eta _{1},\eta _{2},\eta _{3})$, called in~\cite{Kubler} signed singular values (SSV).

From~\cite{Kubler} thr map $\Phi^T$ is completely positive, if and only if 
\begin{equation}
1+\eta_3\geq |\eta_1+\eta_2|,\quad 1-\eta_3\geq|\eta_1-\eta_2|.
\end{equation}
Now, let us consider the non-abelian quaternion group $Q$, described in Section~\ref{S_ICLM} (see also~\cite{Moz2017}).
Taking the $ICLM$ generated by the irrep $\phi^{t_4}$, with $l_{\operatorname{id}}=1$ in~\eqref{eq:iclm_q}, the conditions for $CP$ are
%
\begin{equation}
\label{CPQ}
1+l_{t_2}\geq |l_{t_1}+l_{t_3}|, \quad 1-l_{t_2}\geq |l_{t_1}-l_{t_3}|.
\end{equation}
These are exactly the same conditions as the conditions for $CP$ of the unital qubit quantum channel $\Phi^T$~\eqref{TT}, defined through  $T(\eta,R_1,R_2)$ in~\eqref{T}. In particular, if $R_1=R_2=\mathbf{1}$, the operators $\Phi^T(\eta, \mathbf{1}, \mathbf{1})$  and $\Phi$ generated for $\phi^{t_4}$ coincide. We summarise this in the following
\begin{corollary}
	Every "diagonal" unital quantum channel $\Phi^T(\eta, \mathbf{1}, \mathbf{1})$, i.e. when $R_1=R_2=\mathbf{1}$ in~\eqref{T},   is irreducibly covariant with respect to quaternion group $Q$.
\end{corollary}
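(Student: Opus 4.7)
The plan is to work in the Bloch-sphere picture used in (\ref{kubit})--(\ref{T}) and reduce the covariance condition to an elementary commutation relation between diagonal matrices. I begin by fixing an explicit realization of the two-dimensional irrep $\phi^{t_4}$ of $Q$: the standard one sends $\pm e\mapsto \pm\mathbf{1}_2$, $\pm i\mapsto \pm \iu\sigma_1$, $\pm j\mapsto \pm \iu\sigma_2$, $\pm k\mapsto \pm \iu\sigma_3$, where $\sigma_1,\sigma_2,\sigma_3$ are the Pauli matrices. With this choice, the adjoint action of any $g\in Q$ on $\rho(r)$ is, up to the irrelevant global sign and factor of $\iu$, conjugation by one of $\mathbf{1}_2,\sigma_1,\sigma_2,\sigma_3$.

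Next, I would compute the induced action on the Bloch vector. Using the identity $\sigma_i\sigma_j\sigma_i=-\sigma_j$ for $i\neq j$ and $\sigma_i\sigma_i\sigma_i=\sigma_i$, one gets
\begin{equation}
\sigma_i\,\rho(r)\,\sigma_i^{\dagger}=\rho(R_i r),\qquad R_i=\operatorname{diag}(\varepsilon_{i1},\varepsilon_{i2},\varepsilon_{i3}),
\end{equation}
where $\varepsilon_{ij}=+1$ if $i=j$ and $\varepsilon_{ij}=-1$ otherwise. Thus the image of $\phi^{t_4}$ under the adjoint is precisely the Klein four-subgroup $\{\mathbf{1}_3,R_1,R_2,R_3\}\subset SO(3)$, consisting entirely of diagonal $\pm 1$ matrices. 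This is the only non-routine step; once it is in place, everything else collapses to linear algebra.

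Finally, I would plug this into Definition~\ref{iclm} written on Bloch vectors. Using (\ref{TT}), the covariance condition $\Phi^T\!\bigl(U(g)\rho(r)U^{\dagger}(g)\bigr)=U(g)\Phi^T(\rho(r))U^{\dagger}(g)$ becomes $\rho(T R(g) r)=\rho(R(g) T r)$ for all $g\in Q$ and all admissible $r\in\R^3$. By the bijective correspondence (\ref{kubit}) between $r$ and $\rho(r)$, this is equivalent to the matrix identity
\begin{equation}
T\,R(g)=R(g)\,T\qquad \forall g\in Q,
\end{equation}
which, for $T=D(\eta)=\operatorname{diag}(\eta_1,\eta_2,\eta_3)$ and $R(g)\in\{\mathbf{1}_3,R_1,R_2,R_3\}$, is trivially satisfied since any two real diagonal matrices commute. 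Hence every $\Phi^T(\eta,\mathbf{1},\mathbf{1})$ satisfies (\ref{irr_cov}) for $U=\phi^{t_4}$, proving the corollary.

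As a consistency check that I would include at the end, one can instead compare the matrix representation of $\Phi^T(\eta,\mathbf{1},\mathbf{1})$ with (\ref{eq:iclm_q}): reading off the coefficients produces $l_{\id}=1$ together with a linear identification of $(l_{t_1},l_{t_2},l_{t_3})$ with $(\eta_1,\eta_2,\eta_3)$ (possibly up to a relabelling), and in particular the CP conditions (\ref{CPQ}) recover the K\"ubler--Braun inequalities verbatim. The main obstacle, as noted, is simply pinning down the correct bookkeeping for the adjoint image of $\phi^{t_4}$; the covariance itself is a one-line statement about commuting diagonal matrices.
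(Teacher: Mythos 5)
Your argument is correct, and it takes a genuinely different (and more self-contained) route than the paper. The paper obtains the corollary by \emph{identification}: it shows (Proposition~\ref{prop2}) that every quaternion $ICLM$ $\Phi^{Q}(l)$ acts on the Bloch vector as the diagonal matrix $M^{Q}(l)=\operatorname{diag}(l_{t_3},l_{t_1},l_{t_2})$, so any diagonal channel $\Phi^{T}(\eta,\mathbf{1},\mathbf{1})$ literally coincides with a member of the covariant family \eqref{eq:iclm_q} after matching $(\eta_1,\eta_2,\eta_3)$ with a permutation of $(l_{t_1},l_{t_2},l_{t_3})$; covariance is then automatic because $ICLM$s are covariant by construction, and the identification doubles as the parameter dictionary used later in Corollary~\ref{c18} and Proposition~\ref{prop19}. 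You instead verify the covariance equation \eqref{irr_cov} directly: the adjoint image of $\phi^{t_4}$ is the Klein four-group $\{\mathbf{1}_3,R_1,R_2,R_3\}$ of diagonal sign matrices in $SO(3)$, and a diagonal $T=D(\eta)$ trivially commutes with these, so $\Phi^{T}(\eta,\mathbf{1},\mathbf{1})$ is covariant without ever invoking the spectral decomposition \eqref{ICQC}. What your route buys is brevity and independence from the $\Pi^{\alpha}$ machinery; what it does not deliver by itself is the explicit correspondence $l\leftrightarrow\eta$ (your closing consistency check against \eqref{eq:iclm_q} supplies exactly that and essentially reproduces the paper's argument), and one could add that since $\phi^{t_4}\otimes\bar{\phi}^{t_4}$ is multiplicity-free, your covariance statement plus \eqref{ICQC} recovers the identification anyway. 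One cosmetic caveat: the assignment $\pm i\mapsto\pm\mathrm{i}\sigma_1$, $\pm j\mapsto\pm\mathrm{i}\sigma_2$, $\pm k\mapsto\pm\mathrm{i}\sigma_3$ needs a sign adjustment to satisfy $ijk=-e$ exactly, but as you note this is immaterial for the adjoint action, so it does not affect the proof.
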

We can say even more by exploiting the connection between $ICLM$ generated by the irreducible representations of $Q$ and $S(3)$, described briefly in Section~\ref{S_ICLM}. Firstly we present  what types of linear transformations on $\operatorname{End}[\mathbb{R}^{3}]$ are induced  by $ICLM$ generated by the groups $Q$ and $S(3)$. We have:
\begin{proposition}
	\label{prop2}
	\begin{enumerate}[a)]
		\item The $ICLM$  $\Phi ^{Q}(l)=\Pi ^{\id}+\sum_{i}l_{i}\Pi ^{i}$, where $l=(l_i)=(l_{t_1},l_{t_2},l_{t_3})$  induces
		the  following map in $\mathbb{M}(3,\mathbb{R})$  
		\be
		\Phi ^{Q}(l)[\rho (r)]=\rho (r^{\prime }),\qquad r^{\prime }=M^{Q}(r),\qquad
		M^{Q}(l)=\operatorname{diag}(l_{t_3},l_{t_1},l_{t_2}).
		\ee
		
		\item The $ICLM$ $\Phi ^{S(3)}(\widetilde{l})=\Pi ^{\id}+\widetilde{l}_{\sgn}\Pi ^{\sgn}+\widetilde{l}_{\lambda }\Pi
		^{\lambda }$ induces
		the  following map in $\mathbb{M}(3,\mathbb{R})$ 
		\be
		\Phi ^{S(3)}(\widetilde{l})[\rho (r)]=\rho (r'),\qquad r'=M^{S(3)}(r),\qquad M^{S(3)}(k)=\operatorname{diag}(\widetilde{l}_{\lambda },\widetilde{l}_{\lambda },\widetilde{l}_{\sgn}).
		\ee
	\end{enumerate}
\end{proposition}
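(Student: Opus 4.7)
The plan is to establish both claims by linearity, computing the action of $\Phi^Q(l)$ and $\Phi^{S(3)}(\widetilde{l})$ on the basis $\{\mathbf{1}_2, \sigma_1, \sigma_2, \sigma_3\}$ of $\mathbb{M}(2,\mathbb{C})$. By Fact~\ref{F3} and the unitality assumption $l_{\id}=1$, one has $\Phi(\mathbf{1}_2)=\mathbf{1}_2$, so only the images of the Pauli matrices need to be computed. The structural reason the result takes the claimed diagonal form is that the decomposition $\phi \otimes \bar{\phi} = \phi^{\id} \oplus \bigoplus_\alpha \phi^\alpha$ breaks into one-dimensional irreps (plus, in the $S(3)$ case, a single two-dimensional $\lambda$ piece), so each traceless $\sigma_i$ lies in the image of exactly one projector $\Pi^\alpha$ and is thus an eigenvector of $\Phi$ with eigenvalue $l_\alpha$.

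For part (a) I would apply the matrix representation $\operatorname{mat}(\Phi^Q)$ in \eqref{eq:iclm_q} to the column-stacked Pauli vectors $\operatorname{vec}(\sigma_1)=(0,1,1,0)^T$, $\operatorname{vec}(\sigma_2)=(0,i,-i,0)^T$, $\operatorname{vec}(\sigma_3)=(1,0,0,-1)^T$. The block-diagonal structure of $\operatorname{mat}(\Phi^Q)$, consisting of two $2 \times 2$ blocks acting on the coordinate pairs $\{(1,1),(2,2)\}$ and $\{(1,2),(2,1)\}$, makes the multiplication immediate: one obtains $\Phi^Q(\sigma_1)=l_{t_3}\sigma_1$, $\Phi^Q(\sigma_2)=l_{t_1}\sigma_2$ and $\Phi^Q(\sigma_3)=l_{t_2}\sigma_3$. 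Substituting into $\rho(r)=\frac{1}{2}(\mathbf{1}_2+\sum_i r_i \sigma_i)$ and using linearity yields $\Phi^Q[\rho(r)]=\rho(r')$ with $r' = M^Q(l)r$ and $M^Q(l)=\operatorname{diag}(l_{t_3},l_{t_1},l_{t_2})$, as claimed.

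For part (b) I would exploit the explicit formula \eqref{PP} giving the action of $\Phi^{S(3)}$ on an arbitrary rank-one projector $P=pp^{\dagger}$. From the Bloch parametrisation $\rho(r)=P$ one identifies $|p_1|^2-|p_2|^2=r_3$ and $p_1\bar{p}_2=\frac{1}{2}(r_1-ir_2)$; inserting these into \eqref{PP} immediately casts the output in the form $\rho(r')$ with $r_1'=\widetilde{l}_\lambda r_1$, $r_2'=\widetilde{l}_\lambda r_2$ and $r_3'=\widetilde{l}_{\sgn} r_3$, i.e.\ $r'=M^{S(3)}(\widetilde{l})r$ with $M^{S(3)}(\widetilde{l})=\operatorname{diag}(\widetilde{l}_\lambda,\widetilde{l}_\lambda,\widetilde{l}_{\sgn})$. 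Since Hermitian rank-one projectors span $\mathbb{M}(2,\mathbb{C})$, linearity of $\Phi^{S(3)}$ extends the identity to all states.

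The argument is essentially a direct verification once the explicit forms \eqref{eq:iclm_q} and \eqref{PP} are in hand, so no genuine obstacle arises. The only care required is the bookkeeping step of matching the three Pauli matrices to the three one-dimensional non-trivial irreps of $Q$ (respectively to the two-dimensional $\lambda$ and one-dimensional $\sgn$ components for $S(3)$); this matching is fixed by the block structure of the corresponding matrix representations and so can be read off without additional work.
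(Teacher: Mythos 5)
Your verification is correct: applying $\operatorname{mat}(\Phi^Q)$ from~\eqref{eq:iclm_q} to the Pauli matrices indeed gives $\Phi^Q(\sigma_1)=l_{t_3}\sigma_1$, $\Phi^Q(\sigma_2)=l_{t_1}\sigma_2$, $\Phi^Q(\sigma_3)=l_{t_2}\sigma_3$, and the identification $|p_1|^2-|p_2|^2=r_3$, $p_1\bar p_2=\tfrac12(r_1-ir_2)$ in~\eqref{PP} gives the $S(3)$ case, so both diagonal matrices follow. The paper states Proposition~\ref{prop2} without an explicit proof, and your direct computation in the Pauli/Bloch basis is precisely the argument it leaves implicit, so there is nothing to add.
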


Both  maps induce the diagonal transformation of the Bloch
vector but the map $\Phi ^{Q}(l)$ is more general than $\Phi ^{S(3)}(\widetilde{l})$. What is more, we have
\begin{corollary}
	\label{c18}
	The following relation between  the maps  $\Phi ^{Q}(l),$ $\Phi
	^{S(3)}(\widetilde{l})$ and the general form of the unital qubit quantum channel $T(\eta ,R_{1},R_{2}) 
	$ holds
	\be
	M^{\Phi }(l)=T(\eta ^{Q},\id_{SO(3)},\id_{SO(3)})\Rightarrow l_{1}=\eta
	_{2}^{Q},\quad l_{2}=\eta _{3}^{Q},\quad l_{3}=\eta _{1}^{Q},
	\ee
	\be
	M^{S(3)}(\widetilde{l})=T(\eta ^{S(3)},\id_{SO(3)},\id_{SO(3)})\Rightarrow \widetilde{l}_{\lambda
	}=\eta _{1}^{S(3)},\quad \widetilde{l}_{\lambda }=\eta _{2}^{S(3)},\quad \widetilde{l}_{\sgn}=\eta
	_{3}^{S(3)}.
	\ee
	The spectral parameters $l=(l_{i})$ and $\widetilde{l}=(\widetilde{l}_{\sgn},\widetilde{l}_{\lambda
	})$ of the   $ICLMs$ $\Phi ^{Q}(l)$ and $\Phi ^{S(3)}(\widetilde{l})$ are
	precisely the $SSV$ of the corresponding maps $T(\eta
	^{Q},\id_{SO(3)},\id_{SO(3)})$ and   $T(\eta ^{S(3)},\id_{SO(3)},\id_{SO(3)})$. 
\end{corollary}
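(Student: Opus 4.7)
The plan is to observe that Corollary~\ref{c18} is essentially an immediate consequence of Proposition~\ref{prop2} combined with the definition of $T(\eta, R_1, R_2)$ from~\eqref{T}, so the only task is to match diagonal entries of two explicit diagonal matrices.

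First I would specialise the general form of the unital channel transformation in~\eqref{T} to the ``diagonal'' case $R_1 = R_2 = \id_{SO(3)}$. Since $SO(3)$ contains the identity, this substitution gives directly $T(\eta, \id_{SO(3)}, \id_{SO(3)}) = \id \cdot D(\eta) \cdot \id = \operatorname{diag}(\eta_1, \eta_2, \eta_3)$, which is a legitimate instance of the general unital channel representation. Then I would quote Proposition~\ref{prop2} to obtain the explicit diagonal representatives $M^{Q}(l) = \operatorname{diag}(l_{t_3}, l_{t_1}, l_{t_2})$ and $M^{S(3)}(\widetilde{l}) = \operatorname{diag}(\widetilde{l}_\lambda, \widetilde{l}_\lambda, \widetilde{l}_{\sgn})$ of the induced maps on $\mathbb{R}^3$.

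With both sides in diagonal form, I would simply equate entry by entry. From $\operatorname{diag}(l_{t_3}, l_{t_1}, l_{t_2}) = \operatorname{diag}(\eta_1^Q, \eta_2^Q, \eta_3^Q)$ one reads off $l_{t_3} = \eta_1^Q$, $l_{t_1} = \eta_2^Q$, $l_{t_2} = \eta_3^Q$, which is exactly the relabelling stated in the corollary. For the $S(3)$ case, equating $\operatorname{diag}(\widetilde{l}_\lambda, \widetilde{l}_\lambda, \widetilde{l}_{\sgn}) = \operatorname{diag}(\eta_1^{S(3)}, \eta_2^{S(3)}, \eta_3^{S(3)})$ gives $\widetilde{l}_\lambda = \eta_1^{S(3)} = \eta_2^{S(3)}$ and $\widetilde{l}_{\sgn} = \eta_3^{S(3)}$, which exposes a genuine geometric feature of the $S(3)$ case: two of its SSVs are forced to coincide, so $\Phi^{S(3)}(\widetilde{l})$ can only realise those diagonal unital channels whose SSV spectrum has a repeated eigenvalue, which is precisely the restriction signalled by Proposition~\ref{prop2}(b). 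Finally, since by construction the numbers $l_{t_i}$ and $\widetilde{l}_\lambda, \widetilde{l}_{\sgn}$ are the spectral parameters in the decompositions~\eqref{eq:iclm_q} and its $S(3)$ analogue, the identifications above show that they are literally the SSVs of the corresponding unital channels, completing the claim.

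There is really no main obstacle here; the content of the corollary is an organisational statement. The only subtlety worth flagging in the write-up is the ordering convention: one has to be careful to match the index ordering in $\operatorname{diag}(l_{t_3}, l_{t_1}, l_{t_2})$ (as dictated by Proposition~\ref{prop2}(a)) with the standard ordering $\operatorname{diag}(\eta_1, \eta_2, \eta_3)$ of $D(\eta)$, which is the source of the cyclic relabelling $l_1 = \eta_2^Q$, $l_2 = \eta_3^Q$, $l_3 = \eta_1^Q$ appearing in the statement rather than the naive identity $l_i = \eta_i^Q$.
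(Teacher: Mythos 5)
Your proposal is correct and follows exactly the route the paper intends: the corollary is stated without a separate proof precisely because it is the immediate entry-by-entry identification of $T(\eta,\id_{SO(3)},\id_{SO(3)})=\operatorname{diag}(\eta_1,\eta_2,\eta_3)$ with the diagonal matrices $M^{Q}(l)$ and $M^{S(3)}(\widetilde{l})$ given in Proposition~\ref{prop2}. Your handling of the cyclic relabelling $l_1=\eta_2^Q$, $l_2=\eta_3^Q$, $l_3=\eta_1^Q$ (coming from the ordering $\operatorname{diag}(l_{t_3},l_{t_1},l_{t_2})$) and the observation that the $S(3)$ case forces two coinciding SSVs are both consistent with the paper.
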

Under the identification of the parameters $(l_{i})$  and $(\eta _{i}^{Q})$
in  Corollary~\ref{c18}, we may formulate the following

\begin{proposition}
	\label{prop19}
	Any qubit unital linear map can be decomposed as follows 
	\be
	T\equiv T(l,R_{1},R_{2})=R_{1}M^{Q}(l)R_{2},
	\ee
	where $M^{Q}(l)$ is the matrix induced by some quaternion irreducible covariant linear map $\Phi ^{Q}(l)$. In other words, the $SSV$ of any qubit unital map $T\in \operatorname{End}[\mathbb{R}^{3}]$ are spectral parameters of some quaternion irreducible covariant linear map $\Phi ^{Q}(l)$.
\end{proposition}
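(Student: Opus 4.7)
The plan is to combine the K\"ubler--Braun parametrization of qubit unital maps (equation~\eqref{T}) with the explicit diagonal form of the Bloch-sphere matrix $M^{Q}(l)$ provided by Proposition~\ref{prop2}(a). The claim is essentially a rewriting of the standard SVD-type normal form for $T\in\operatorname{End}[\mathbb{R}^{3}]$ using the quaternion spectral parameters $l=(l_{t_{1}},l_{t_{2}},l_{t_{3}})$ as coordinates.

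First, I would invoke the K\"ubler--Braun decomposition recalled in~\eqref{T}: every qubit unital linear map $T\in\operatorname{End}[\mathbb{R}^{3}]$ admits a factorization $T=R_{1}D(\eta)R_{2}$ with $R_{1},R_{2}\in SO(3)$ and $D(\eta)=\operatorname{diag}(\eta_{1},\eta_{2},\eta_{3})$, the $\eta_{i}$ being the signed singular values. The $SSV$ are allowed to be negative precisely so that the two rotations can be kept in $SO(3)$ rather than $O(3)$.

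Next, by Proposition~\ref{prop2}(a), the Bloch-sphere matrix induced by a quaternion $ICLM$ $\Phi^{Q}(l)=\Pi^{\operatorname{id}}+\sum_{i}l_{i}\Pi^{i}$ is $M^{Q}(l)=\operatorname{diag}(l_{t_{3}},l_{t_{1}},l_{t_{2}})$, with $l_{t_{1}},l_{t_{2}},l_{t_{3}}\in\mathbb{R}$ free. In particular the assignment $l\mapsto M^{Q}(l)$ is surjective onto the space of all real diagonal $3\times 3$ matrices. Thus, given any $T$ with $SSV$ $\eta=(\eta_{1},\eta_{2},\eta_{3})$, I would make the identification $l_{t_{3}}=\eta_{1}$, $l_{t_{1}}=\eta_{2}$, $l_{t_{2}}=\eta_{3}$ prescribed by Corollary~\ref{c18}. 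This produces $M^{Q}(l)=D(\eta)$, and substituting into the K\"ubler--Braun factorization gives the desired form $T=R_{1}M^{Q}(l)R_{2}$. The second half of the proposition, namely that the $SSV$ of $T$ coincide with the spectral parameters of $\Phi^{Q}(l)$, is then just a reading off of this identification.

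I do not expect a genuine obstacle: the proposition is a repackaging of Proposition~\ref{prop2}(a) and Corollary~\ref{c18} on top of~\eqref{T}. The only point that deserves an explicit line of verification is the surjectivity of $l\mapsto M^{Q}(l)$ onto all real diagonal matrices, which is immediate from the explicit formula. If the statement is to be read in the stronger sense of a unital quantum \emph{channel} (rather than merely a unital linear map), one would additionally note that the positivity/CP constraints on $\eta$ translate, via the same identification, to the constraints~\eqref{CPQ} on $l$, but this is not needed for the decomposition itself.
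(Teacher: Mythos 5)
Your proposal is correct and follows essentially the same route the paper takes: the proposition is presented there as an immediate consequence of the K\"ubler--Braun factorization~\eqref{T}, the diagonal form $M^{Q}(l)=\operatorname{diag}(l_{t_3},l_{t_1},l_{t_2})$ from Proposition~\ref{prop2}(a), and the parameter identification of Corollary~\ref{c18}. Your added remark on the surjectivity of $l\mapsto M^{Q}(l)$ onto real diagonal matrices is the only step the paper leaves implicit, and it is indeed the right point to make explicit.
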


We know from~\cite{Kubler} that all important quantum properties of the qubit unital  maps depend only on their $SSV$. In fact, investigated properties depend on the spectral parameters of the quaternion $ICLM$ $\Phi ^{Q}(l)$. Namely,  the Fujiwara-Algoet conditions for $CP$ of the qubit unital map $T(\eta
,R_{1},R_{2})$~\cite{Kubler} 
\be
\label{Algoet}
1+\eta _{3}\geq |\eta _{1}+\eta _{2}|,\quad 1-\eta _{3}\geq |\eta _{1}-\eta
_{2}|
\ee
are exactly the same as our conditions on spectral parameters $l=(l_{i})$,
derived in~\cite{Moz2017} (or see~\eqref{CPQ} in this paper).
Moreover, the conditions for $P$  of the qubit unital  map $T(\eta ,R_{1},R_{2})$,
given in~\cite{Kubler} 
\be
|\eta _{i}|\leq 1,\quad i=1,2,3
\ee
are also the same as our conditions for $P$ of the irreducibly covariant quaternion map $\Phi (l)$ presented in Theorem~\ref{thmQQ}.
One can see that the Fujiwara-Algoet conditions~\cite{PhysRevA.59.3290} for $CP$ presented in~\eqref{Algoet} for the qubit unital map, 
in the case  of the map $T(\eta ^{S(3)},\id_{SO(3)},\id_{SO(3)})$, from Corollary~\ref{c18}, induced by  $\Phi ^{S(3)}(\widetilde{l})$, lead to the conditions for $CP$ of $\Phi
^{S(3)}(\widetilde{l})$
\be
1+\widetilde{l}_{\sgn}\geq 2|\widetilde{l}_{\lambda }|,\quad 1\geq |\widetilde{l}_{\sgn}|.
\ee
We can summarize the above as follows

\begin{proposition}
	The qubit unital linear map $T(l,R_{1},R_{2})$, given in Proposition~\ref{prop19}, is $CP$ (respectively $P$), if and only
	if the irreducibly covariant quaternion map $\Phi (l)$ is $CP$ (respectively $P$%
	).
\end{proposition}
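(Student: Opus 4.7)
The plan is to exploit the fact, recalled from Kübler--Braun~\cite{Kubler} in the paragraph after~\eqref{T}, that the positivity and complete positivity of the unital qubit map $T(\eta,R_1,R_2)$ depend only on the signed singular values $\eta$ and not on the rotations $R_1,R_2\in SO(3)$. Consequently, $T(l,R_1,R_2)$ and $T(l,\mathrm{id}_{SO(3)},\mathrm{id}_{SO(3)})=M^Q(l)$ share the same $P$ and $CP$ status, so it suffices to prove the equivalence for the ``diagonal'' map $M^Q(l)$. By Corollary~\ref{c18}, the spectral parameters $(l_{t_1},l_{t_2},l_{t_3})$ of $\Phi^Q(l)$ are precisely (up to relabelling) the $SSV$ of $M^Q(l)$, so the problem reduces to matching two already-derived systems of inequalities.

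For the $CP$ direction, I would quote the Fujiwara--Algoet conditions~\eqref{Algoet}, $1+\eta_3\geq|\eta_1+\eta_2|$ and $1-\eta_3\geq|\eta_1-\eta_2|$, substitute the identification $(\eta_1,\eta_2,\eta_3)=(l_{t_3},l_{t_1},l_{t_2})$ read off from Corollary~\ref{c18}, and observe that these are exactly~\eqref{CPQ}, which in turn were shown in~\cite{Moz2017} to characterize $CP$ of the quaternion $ICLM$ $\Phi^Q(l)$ with $l_{\operatorname{id}}=1$. Thus $T(l,R_1,R_2)$ is $CP$ if and only if $\Phi^Q(l)$ is $CP$.

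For the $P$ direction, I would instead quote the Kübler--Braun condition $|\eta_i|\leq 1$ for $i=1,2,3$ for positivity of the unital qubit channel, translate it through Corollary~\ref{c18} into $|l_{t_i}|\leq 1$ for $i=1,2,3$, and recognize this as the necessary and sufficient condition for positivity of the quaternion $ICLM$ $\Phi^Q(l)$ established in Theorem~\ref{thmQQ}. The two characterizations coincide, proving the equivalence for $P$.

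The proposition then follows by combining these two equivalences, since the rotations $R_1,R_2$ play no role on either side of the identification. There is essentially no serious obstacle in the argument beyond keeping careful bookkeeping of the permutation of indices $(l_{t_1},l_{t_2},l_{t_3})\leftrightarrow(\eta_1,\eta_2,\eta_3)$ dictated by Corollary~\ref{c18}; the mild subtlety to emphasize is that the invariance of $P/CP$ under $R_1,R_2$ is what allows one to ``forget'' the rotations and reduce to the quaternion-covariant representative, without which the statement would be false as phrased.
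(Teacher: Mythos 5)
Your proposal is correct and follows essentially the same route as the paper: it reduces to the diagonal representative via the Kübler--Braun invariance of $P$/$CP$ under $R_1,R_2$, identifies the SSV with the quaternion spectral parameters via Corollary~\ref{c18}, and then matches the Fujiwara--Algoet conditions~\eqref{Algoet} with~\eqref{CPQ} for $CP$ and the bounds $|\eta_i|\leq 1$ with Theorem~\ref{thmQQ} for $P$, which is exactly the comparison the paper makes in the discussion preceding the proposition.
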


All qubit unital linear maps  depend on the elements of the group $SO(3)$, and their specific properties ($CP, P$) are determined
by $ICLMs$ of the quaternion group $Q$, which are simpler object in the general description.
Moreover,  similar forms of the matrices $M^{Q}(l)=\operatorname{diag}(l_{t_3},l_{t_1},l_{t_2})$
and $M^{S(3)}(\widetilde{l})=\operatorname{diag}(\widetilde{l}_{\lambda },\widetilde{l}_{\lambda },\widetilde{l}_{\sgn})$, presented in Proposition~\ref{prop2}
suggest that in the particular case the  $ICLM$ $\Phi ^{Q}(l)$ of the form $%
\Phi ^{Q}(\widetilde{l}_{\lambda },\widetilde{l}_{\lambda },\widetilde{l}_{\sgn})$ is closely related to $ICLM$
$\Phi ^{S(3)}(\widetilde{l}_{\lambda },\widetilde{l}_{\sgn})$.
	\section{Conclusions and discussion}
    The results presented in the paper show how symmetric patterns imposed on linear maps via (irreducible) covariance make examination of its certain properties, like positivity simpler than in the general case. A variety of techniques was employed which led to different (both general and partial) results concerning multiple linear maps. In particular the spectral decomposition of an $ICLM$ allows to formulate the conditions for the positivity as the conditions on the mentioned spectrum of $ICLM$, leading to  geometrical interpretations in the spaces of $ICLM'$s spectra. Basing on this idea, the full description of positivity and complete positivity for low dimensional of $ICLMs$ induced by the permutation group $S(3)$ and the quaternion group $Q$ is given. In higher dimensions we present methods allowing us to find  the regions for positivity exploiting method inspired by the inverse reduction map, as well as we exploited the direct approach, working remarkable effective for monomial unitary group $MU(d)$. We show that the $ICLMs$ induced by the monomial unitary group reproduce the generalised Choi map, or conversely the Choi map exhibits covariant properties.
	Additionally, the connection of Fujiwara-Algolet conditions for an arbitrary qubit unital map with the quaternion symmetry is shown. 

	Still many questions remain open. For example, the $ICLM$ arising from irreducible representation of symmetric group $S(4)$ was examined via inverse reduction map and thus the obtained region of positivity is not maximal. The general solution despite attempts could not be obtained. Moreover, the (non-)decomposability of $ICLM$ with respect to unitary monomial subgroup was not resolved in the general and thus there may be some region from which new, non-decomposable entanglement witnesses may arise. Moreover, the results on Fujiwara-Algolet conditions suggest the possibility of similar  relation for unital maps and some, possibly different symmetries in higher dimensions.

	\section*{Acknowledgements}
	MS~was supported by the grant "Mobilno{\'s}{\'c} Plus IV", 1271/MOB/IV/2015/0 from the Polish Ministry of Science and Higher Education. MS, PK would like to thank DAMTP (University of Cambridge) where substantial part of this work has been done. PK's visit at the University of Cambridge was funded from programme PROM from the Polish National Agency for Academic Exchange.
\appendix

\section{{An explicit method of constructing matrix form of the irreducible representation of the symmetric group $S(n)$ for the partition $(n-1,1)$}}
\label{NovelSn}
We consider the matrix form of the natural representation of $S(n)$ given by
the permutation matrices 
\be
\label{cycle}
\forall \sigma \in S(n)\quad M(\sigma )=(\delta _{i\sigma (j)}),\quad
i,j=0,1,\ldots,n-1,\quad M(\sigma )\in \M(n,\mathbb{C}). 
\ee
The cycle $c=(0,1,\ldots,n-1)$ acts in the standard way $c(i)=i+_{n}1$, where
the addition is modulo $n.$ Let us formulate the following

\begin{definition}
	\label{U}
	Let $\varepsilon $ be a primitive $n$-th root of unity, we define the matrix $%
	U\in \M(n,\mathbb{C})$ as
	\be
	U=(u_{kl}):u_{kl}=\frac{1}{\sqrt{n}}\varepsilon ^{kl},\quad k,l=0,1,\ldots,n-1. 
	\ee
\end{definition}

It is easy to check that

\begin{proposition}
	The matrix $U$ from Definition~\ref{U} is unitary and diagonalizes the cycle matrix $M$ in~\eqref{cycle} for the cycle $c=(0,1,\ldots,n-1)$, i.e. we
	have 
	\be
	U^{\dagger}M(c)U=\operatorname{diag}(1,\overline{\varepsilon },\overline{\varepsilon }%
	^{2},\ldots,\overline{\varepsilon }^{n-1}). 
	\ee
\end{proposition}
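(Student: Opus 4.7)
The plan is to verify the two assertions separately, both by direct computation with the matrix elements $u_{kl} = \frac{1}{\sqrt{n}}\varepsilon^{kl}$ and the standard root-of-unity identity $\sum_{m=0}^{n-1}\varepsilon^{mr} = n\,\delta_{r \equiv 0 \pmod n}$.

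First, for unitarity I would compute
\begin{equation}
(U^{\dagger}U)_{kl} = \sum_{m=0}^{n-1}\overline{u_{mk}}\,u_{ml} = \frac{1}{n}\sum_{m=0}^{n-1}\varepsilon^{m(l-k)},
\end{equation}
which equals $\delta_{kl}$ by the cited orthogonality of the $n$-th roots of unity since $|l-k| < n$. This immediately gives $U^{\dagger}U = \mathbf{1}$, and since $U$ is square, also $UU^{\dagger} = \mathbf{1}$.

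Next, for the diagonalization, I would show that the columns of $U$ are eigenvectors of $M(c)$. From the definition~\eqref{cycle}, $M(c)_{ij} = \delta_{i,c(j)} = \delta_{i,\, j +_n 1}$, so
\begin{equation}
(M(c)\,U)_{kl} = \sum_{j=0}^{n-1}\delta_{k,\,j+_n 1}\,u_{jl} = u_{k-_n 1,\, l} = \frac{1}{\sqrt{n}}\varepsilon^{(k-1)l} = \overline{\varepsilon}^{\,l}\,u_{kl}.
\end{equation}
Here the reduction modulo $n$ in the index $k-_n 1$ is harmless because $\varepsilon^{n} = 1$. Hence $M(c)\,U = U\,D$ with $D = \operatorname{diag}(1,\overline{\varepsilon},\overline{\varepsilon}^2,\ldots,\overline{\varepsilon}^{n-1})$, and multiplying on the left by $U^{\dagger}$ using the unitarity established above yields the claimed identity $U^{\dagger}M(c)U = D$.

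The argument is essentially the standard discrete Fourier transform diagonalizing a cyclic shift, so I do not expect a genuine obstacle; the only place to be careful is bookkeeping with the modular arithmetic in the row index (both in the definition $M(\sigma)_{ij} = \delta_{i,\sigma(j)}$ and in recognizing $\varepsilon^{-l} = \overline{\varepsilon}^{l}$ since $\varepsilon$ lies on the unit circle). Once those conventions are fixed, both computations reduce to one-line applications of the geometric sum for roots of unity.
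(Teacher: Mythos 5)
Your proof is correct: the unitarity computation and the column-eigenvector calculation (including the handling of the modular shift via $\varepsilon^n=1$ and $\varepsilon^{-l}=\overline{\varepsilon}^{\,l}$) both check out, and the paper itself offers no proof, merely asserting the proposition is ``easy to check.'' Your direct discrete-Fourier-transform verification is precisely the standard argument the paper intends.
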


It appears, that the matrix $U$ not only diagonalises the matrix $M(c)$, but
also reduces the natural, permutation matrix representation of $M(\sigma )$ in~\eqref{cycle}
of $S(n)$ into  the direct sum of irreducible representation. In fact we have

\begin{proposition}
	For any permutation $\sigma \in S(n)$ a matrix 
	\be
	\Psi (\sigma )=U^{\dagger}M(\sigma )U=(\Psi _{ij}(\sigma )):\Psi _{ij}(\sigma )=%
	\frac{1}{n}\sum_{l=0}^{n-1}\varepsilon ^{jl-\sigma (l)i},\quad
	i,j=0,1,\ldots,n-1 
	\ee
	has the following property 
	\be
	\forall \sigma \in S(n)\quad \Psi _{00}(\sigma )=1,\quad \Psi _{0j}(\sigma
	)=0=\Psi _{i0}(\sigma ),\quad i,j=1,\ldots,n-1, 
	\ee
	which produces the block form of $\Psi(\sigma)$
	\be
	\Psi (\sigma )=\left( 
	\begin{array}{cc}
		1 & 0 \\ 
		0 & \psi (\sigma )%
	\end{array}%
	\right) :\psi (\sigma )\in \M(n-1,\mathbb{C}). 
	\ee
\end{proposition}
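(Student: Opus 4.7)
The plan is to prove the proposition by direct evaluation of the formula
\[
\Psi_{ij}(\sigma) = \frac{1}{n}\sum_{l=0}^{n-1}\varepsilon^{jl - \sigma(l)i},
\]
exploiting only the orthogonality of the $n$-th roots of unity together with the fact that $\sigma$ is a bijection of $\{0,1,\ldots,n-1\}$. The key identity I will use throughout is the standard one
\[
\sum_{l=0}^{n-1}\varepsilon^{ml} \;=\; n\,\delta_{m,0\,(\mathrm{mod}\,n)},
\]
valid whenever $\varepsilon$ is a primitive $n$-th root of unity.

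First, I would handle the entry $\Psi_{00}(\sigma)$: setting $i=j=0$ in the displayed formula collapses every summand to $1$, giving $\Psi_{00}(\sigma)=1$ independently of $\sigma$. Second, for the top row $\Psi_{0j}(\sigma)$ with $1\le j\le n-1$ the exponent $-\sigma(l)i$ is zero, so the sum reduces to $\frac{1}{n}\sum_{l=0}^{n-1}\varepsilon^{jl}$, which vanishes by the root-of-unity identity above since $j\not\equiv 0\pmod n$. Third, for the first column $\Psi_{i0}(\sigma)$ with $1\le i\le n-1$ the factor $\varepsilon^{jl}$ becomes $1$, so
\[
\Psi_{i0}(\sigma) \;=\; \frac{1}{n}\sum_{l=0}^{n-1}\varepsilon^{-\sigma(l)i}.
\]
Since $\sigma$ is a permutation, the substitution $k=\sigma(l)$ is a bijection of the index set and the sum equals $\frac{1}{n}\sum_{k=0}^{n-1}\varepsilon^{-ki}$, which again vanishes for $i\not\equiv 0\pmod n$. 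Combining these three cases yields the claimed block structure, with the lower $(n-1)\times(n-1)$ block $\psi(\sigma)$ collecting the remaining entries.

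This is really a routine computation rather than a conceptual obstacle; the only subtlety worth flagging explicitly is the reindexing step in the first-column calculation, where I must note that the \emph{set} $\{\sigma(l):l=0,\ldots,n-1\}$ equals $\{0,\ldots,n-1\}$ regardless of $\sigma$, so the sum becomes $\sigma$-independent even though the individual summands are not. No additional group-theoretic machinery is required for the statement itself. (Of course, showing subsequently that $\psi$ is an irreducible representation would require Schur's criterion or a character computation, but the proposition as stated only asks for the block form, so that can be left to a separate argument.)
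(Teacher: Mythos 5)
Your proof is correct: the paper itself offers no proof of this proposition (it is left as a routine check after the remark ``It appears that\ldots''), and your direct verification via the orthogonality relation $\sum_{l=0}^{n-1}\varepsilon^{ml}=n\,\delta_{m\equiv 0\ (\mathrm{mod}\ n)}$ together with the reindexing $k=\sigma(l)$ is exactly the intended argument. The only cosmetic addition worth making is the one-line computation $(U^{\dagger}M(\sigma)U)_{ij}=\frac{1}{n}\sum_{l}\varepsilon^{jl-\sigma(l)i}$ confirming the entry formula you start from, since it is displayed as part of the statement rather than established earlier in the paper.
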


The above proposition allows us the formulate the following

\begin{theorem}
	\label{cxz}
	The map $\psi :S(n)\rightarrow \M(n-1,\mathbb{C})$ of the form 
	\be
	\forall \sigma \in S(n)\quad \psi (\sigma )=(\psi _{ij}(\sigma
	)):\psi _{ij}(\sigma )=\frac{1}{n}\sum_{l=0}^{n-1}\varepsilon ^{jl-\sigma
		(l)i},\quad i,j=1,\ldots,n-1 
	\ee
	is the irreducible representation of $S(n)$ corresponding to the partition $
	(n-1,1)$.
\end{theorem}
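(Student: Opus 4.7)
The plan is to observe that the content of the theorem follows almost immediately from the block-diagonalization established in the preceding proposition, combined with the classical decomposition of the natural permutation representation of $S(n)$.

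First I would argue that $\psi$ is a representation. Since $\Psi(\sigma) = U^{\dagger} M(\sigma) U$ and $M$ is a representation, $\Psi$ is unitarily equivalent to $M$ and therefore also a representation. The block form
\be
\Psi(\sigma) = \begin{pmatrix} 1 & 0 \\ 0 & \psi(\sigma) \end{pmatrix}
\ee
established in the previous proposition then forces $\psi(\sigma\tau) = \psi(\sigma)\psi(\tau)$, so $\psi : S(n) \to \M(n-1, \mathbb{C})$ is an $(n-1)$-dimensional representation. A direct substitution using $u_{kl} = \varepsilon^{kl}/\sqrt{n}$ and $M(\sigma)_{ij} = \delta_{i,\sigma(j)}$ verifies that the displayed formula $\psi_{ij}(\sigma) = \frac{1}{n}\sum_{l} \varepsilon^{jl - \sigma(l)i}$ really is the $(i,j)$-entry of $U^{\dagger} M(\sigma) U$ for $i,j \geq 1$.

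Next I would invoke the classical decomposition of the natural (permutation) representation $M$ of $S(n)$: it splits as trivial $\oplus$ standard, where the standard irrep is indexed by the partition $(n-1,1)$. Concretely, the all-ones vector $\tfrac{1}{\sqrt{n}}(1,\ldots,1)^T$ is fixed by every $M(\sigma)$ and spans an invariant trivial subrepresentation. But this vector is precisely the first column of $U$ (Definition \ref{U}), because $u_{k0} = \tfrac{1}{\sqrt{n}}\varepsilon^{0} = \tfrac{1}{\sqrt{n}}$ for all $k$. Hence the $1$ in position $(0,0)$ of $\Psi(\sigma)$ is exactly the trivial component, and $\psi$ is the restriction of the action to the orthogonal complement of the trivial, which carries the standard irrep indexed by $(n-1,1)$.

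To make the irreducibility claim entirely self-contained, I would alternatively argue at the level of characters: from $\chi^{\psi}(\sigma) = \chi^{M}(\sigma) - 1 = \#\operatorname{fix}(\sigma) - 1$ and the Burnside identity $\frac{1}{n!}\sum_{\sigma \in S(n)} \#\operatorname{fix}(\sigma)^{2} = 2$ (the number of $S(n)$-orbits on $\{1,\ldots,n\}^{2}$, namely the diagonal and its complement), one immediately obtains $\langle \chi^{\psi}, \chi^{\psi}\rangle = 1$, which proves that $\psi$ is irreducible. Since $\psi$ is an $(n-1)$-dimensional irreducible summand of the $n$-dimensional permutation module, its Young label must be $(n-1,1)$.

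I do not expect any genuine obstacle: the theorem is essentially a packaging result, asserting that the specific unitary $U$ of Definition \ref{U} implements the well-known reduction of the permutation module into trivial plus standard. The only verification requiring care is the substitution step identifying the formula for $\psi_{ij}$ with the matrix entries of $U^{\dagger}M(\sigma)U$, and the invocation of the standard permutation-module decomposition; neither step involves computation beyond elementary manipulations with roots of unity and Burnside counting.
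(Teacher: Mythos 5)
Your proposal is correct and follows essentially the same route as the paper: the unitary $U$ of Definition~\ref{U} block-diagonalizes the natural permutation representation into the trivial component (spanned by the all-ones column of $U$) plus its complement, which is the standard irrep labelled $(n-1,1)$ — the paper leaves this last identification implicit, relying on the preceding propositions. Your added character computation $\langle\chi^{\psi},\chi^{\psi}\rangle = 2 - 2 + 1 = 1$ via Burnside is a welcome self-contained verification of irreducibility, but it does not change the substance of the argument.
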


After calculations we get the following

\begin{proposition}
	\label{MatrixIrrep}
	The cycle $c=(0,1,\ldots,n-1)$ in irreducible representation given through Theorem~\ref{cxz}  has the form 
	\be
	\psi (c)=\operatorname{diag}(\overline{\varepsilon },\overline{\varepsilon }^{2},\ldots,
	\overline{\varepsilon }^{n-1}),
	\ee
	whereas the transposition $\sigma =(ab)$ is represented by a matrix 
	\be
	\psi (ab)=(\psi _{ij}(ab)):\psi _{ij}(ab)=\delta _{ij}+\frac{1}{n}%
	(\varepsilon ^{-ai}-\varepsilon ^{-bi})(\varepsilon ^{bj}-\varepsilon ^{aj}).
	\ee
\end{proposition}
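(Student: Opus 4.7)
The plan is to compute $\psi(c)$ and $\psi(ab)$ directly from the explicit formula
\[
\psi_{ij}(\sigma)=\frac{1}{n}\sum_{l=0}^{n-1}\varepsilon^{jl-\sigma(l)i},\qquad i,j=1,\ldots,n-1,
\]
provided by Theorem~\ref{cxz}. No deeper representation-theoretic input is needed; the statement is really a bookkeeping exercise.

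For the cycle, I would substitute $c(l)=l+_n 1$ to get
\[
\psi_{ij}(c)=\frac{\varepsilon^{-i}}{n}\sum_{l=0}^{n-1}\varepsilon^{l(j-i)}.
\]
Since $i,j\in\{1,\ldots,n-1\}$ and hence $j-i\in\{-(n-2),\ldots,n-2\}$, the geometric sum equals $n\delta_{ij}$, giving $\psi_{ij}(c)=\overline{\varepsilon}^{i}\delta_{ij}$, i.e.\ the claimed diagonal matrix.

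For the transposition $\sigma=(ab)$, the idea is to isolate the two values $l=a,b$ where $\sigma$ acts nontrivially. Writing
\[
\psi_{ij}(ab)=\frac{1}{n}\sum_{l\neq a,b}\varepsilon^{l(j-i)}+\frac{1}{n}\bigl(\varepsilon^{aj-bi}+\varepsilon^{bj-ai}\bigr),
\]
I would complete the first sum to all $l\in\{0,\ldots,n-1\}$ by adding and subtracting the missing terms; by the same geometric-series computation as above, the completed sum is $n\delta_{ij}$. This yields
\[
\psi_{ij}(ab)=\delta_{ij}+\frac{1}{n}\bigl(\varepsilon^{aj-bi}+\varepsilon^{bj-ai}-\varepsilon^{a(j-i)}-\varepsilon^{b(j-i)}\bigr).
\]
The final step is to recognise the parenthesised expression as the product $(\varepsilon^{-ai}-\varepsilon^{-bi})(\varepsilon^{bj}-\varepsilon^{aj})$, which is verified by expanding and matching the four monomials. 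This reproduces the stated formula.

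The only subtlety is making sure the geometric-series step is valid, i.e.\ that $j-i\not\equiv 0\pmod n$ whenever $i\neq j$ in the range $1,\ldots,n-1$; this is immediate from $|j-i|<n$. Apart from that, the computation is mechanical, so I do not expect a real obstacle—the entire proof should fit in a short paragraph under each of the two cases.
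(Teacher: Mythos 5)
Your computation is correct and is exactly the "after calculations" the paper has in mind: substituting $c(l)=l+_n 1$ and the transposition $(ab)$ into the formula $\psi_{ij}(\sigma)=\frac{1}{n}\sum_{l}\varepsilon^{jl-\sigma(l)i}$ from Theorem~\ref{cxz}, evaluating the geometric sum (noting $|j-i|<n$ so it gives $n\delta_{ij}$), and factoring the four leftover monomials. Nothing is missing; your argument matches the paper's intended proof.
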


Next, using Proposition~\ref{MatrixIrrep} we calculate two examples presenting the matrix forms of the irreducible representations of the symmetric groups $S(3),S(4)$ and the partitions $(2,1), (3,1)$ respectively.

\begin{example}
	The irreducible representation labelled by $(2,1)$ of the group $S(3)$, has the following $\varepsilon $-form%
	\be
	\psi (01)=\begin{pmatrix}
		0 & \varepsilon ^{2} \\ 
		\varepsilon  & 0%
	\end{pmatrix} ,\quad \psi (02)=\begin{pmatrix}
		0 & \varepsilon  \\ 
		\varepsilon ^{2} & 0%
	\end{pmatrix},\quad \psi (12)=\begin{pmatrix}
		0 & 1 \\ 
		1 & 0%
	\end{pmatrix} ,\quad 
	\psi (012)=\begin{pmatrix}
		\varepsilon ^{2} & 0 \\ 
		0 & \varepsilon 
	\end{pmatrix} ,\quad \psi (021)=\begin{pmatrix}
		\varepsilon  & 0 \\ 
		0 & \varepsilon ^{2}%
	\end{pmatrix}.
	\ee
\end{example}
\begin{example}
	The irreducible representation labelled by $(3,1)$ of the group $S(4)$ has the following $\epsilon$-form, listed in Table \ref{t2} below.
	
\begin{table}[ht!]	
	\begin{tabular}{|l|l|}
		\hline
		$\psi(23)=\left(
		\begin{array}{ccc}
		\frac{1}{2} & \frac{1}{2}+\frac{i}{2} & -\frac{i}{2} \\
		\frac{1}{2}-\frac{i}{2} & 0 & \frac{1}{2}+\frac{i}{2} \\
		\frac{i}{2} & \frac{1}{2}-\frac{i}{2} & \frac{1}{2} \\
		\end{array}
		\right)$ & $\psi(12)=\left(
		\begin{array}{ccc}
		\frac{1}{2} & \frac{1}{2}-\frac{i}{2} & \frac{i}{2} \\
		\frac{1}{2}+\frac{i}{2} & 0 & \frac{1}{2}-\frac{i}{2} \\
		-\frac{i}{2} & \frac{1}{2}+\frac{i}{2} & \frac{1}{2} \\
		\end{array}
		\right)$ \\[3ex]
		\hline
		$\psi((23)(13))=\left(
		\begin{array}{ccc}
		-\frac{i}{2} & \frac{1}{2}+\frac{i}{2} & \frac{1}{2} \\
		\frac{1}{2}+\frac{i}{2} & 0 & \frac{1}{2}-\frac{i}{2} \\
		\frac{1}{2} & \frac{1}{2}-\frac{i}{2} & \frac{i}{2} \\
		\end{array}
		\right)$ & $\psi((23)(12))=\left(
		\begin{array}{ccc}
		\frac{i}{2} & \frac{1}{2}-\frac{i}{2} & \frac{1}{2} \\
		\frac{1}{2}-\frac{i}{2} & 0 & \frac{1}{2}+\frac{i}{2} \\
		\frac{1}{2} & \frac{1}{2}+\frac{i}{2} & -\frac{i}{2} \\
		\end{array}
		\right)$\\[3ex]
		\hline
		$\psi(13)=\left(
		\begin{array}{ccc}
		0 & 0 & 1 \\
		0 & 1 & 0 \\
		1 & 0 & 0 \\
		\end{array}
		\right)$ & $\psi(01)=\left(
		\begin{array}{ccc}
		\frac{1}{2} & -\frac{1}{2}-\frac{i}{2} & -\frac{i}{2} \\
		-\frac{1}{2}+\frac{i}{2} & 0 & -\frac{1}{2}-\frac{i}{2} \\
		\frac{i}{2} & -\frac{1}{2}+\frac{i}{2} & \frac{1}{2} \\
		\end{array}
		\right)$\\[3ex]
		\hline
		$\psi((01)(23))=\left(
		\begin{array}{ccc}
		0 & 0 & -i \\
		0 & -1 & 0 \\
		i & 0 & 0 \\
		\end{array}
		\right)$ & $\psi((12)(02))=\left(
		\begin{array}{ccc}
		-\frac{i}{2} & \frac{1}{2}-\frac{i}{2} & -\frac{1}{2} \\
		-\frac{1}{2}+\frac{i}{2} & 0 & -\frac{1}{2}-\frac{i}{2} \\
		-\frac{1}{2} & \frac{1}{2}+\frac{i}{2} & \frac{i}{2} \\
		\end{array}
		\right)$\\[3ex]
		\hline
		$\psi((12)(23)(03))=\left(
		\begin{array}{ccc}
		-i & 0 & 0 \\
		0 & -1 & 0 \\
		0 & 0 & i \\
		\end{array}
		\right)$ & $\psi((13)(23)(02))=\left(
		\begin{array}{ccc}
		-\frac{1}{2} & \frac{1}{2}-\frac{i}{2} & -\frac{i}{2} \\
		-\frac{1}{2}-\frac{i}{2} & 0 & -\frac{1}{2}+\frac{i}{2} \\
		\frac{i}{2} & \frac{1}{2}+\frac{i}{2} & -\frac{1}{2} \\
		\end{array}
		\right)$\\[3ex]
		\hline
		$\psi((13)(03))=\left(
		\begin{array}{ccc}
		-\frac{i}{2} & -\frac{1}{2}-\frac{i}{2} & \frac{1}{2} \\
		-\frac{1}{2}-\frac{i}{2} & 0 & -\frac{1}{2}+\frac{i}{2} \\
		\frac{1}{2} & -\frac{1}{2}+\frac{i}{2} & \frac{i}{2} \\
		\end{array}
		\right)$ & $\psi((12)(01))=\left(
		\begin{array}{ccc}
		\frac{i}{2} & -\frac{1}{2}-\frac{i}{2} & -\frac{1}{2} \\
		\frac{1}{2}+\frac{i}{2} & 0 & \frac{1}{2}-\frac{i}{2} \\
		-\frac{1}{2} & -\frac{1}{2}+\frac{i}{2} & -\frac{i}{2} \\
		\end{array}
		\right)$\\[3ex]
		\hline
		$\psi((23)(13)(01))=\left(
		\begin{array}{ccc}
		-\frac{1}{2} & -\frac{1}{2}+\frac{i}{2} & -\frac{i}{2} \\
		\frac{1}{2}+\frac{i}{2} & 0 & \frac{1}{2}-\frac{i}{2} \\
		\frac{i}{2} & -\frac{1}{2}-\frac{i}{2} & -\frac{1}{2} \\
		\end{array}
		\right)$ & $\psi(02)=\left(
		\begin{array}{ccc}
		0 & 0 & -1 \\
		0 & 1 & 0 \\
		-1 & 0 & 0 \\
		\end{array}
		\right)$\\[3ex]
		\hline
		$\psi((23)(03))=\left(
		\begin{array}{ccc}
		-\frac{i}{2} & -\frac{1}{2}+\frac{i}{2} & -\frac{1}{2} \\
		\frac{1}{2}-\frac{i}{2} & 0 & \frac{1}{2}+\frac{i}{2} \\
		-\frac{1}{2} & -\frac{1}{2}-\frac{i}{2} & \frac{i}{2} \\
		\end{array}
		\right)$ & $\psi((02)(13))=\left(
		\begin{array}{ccc}
		-1 & 0 & 0 \\
		0 & 1 & 0 \\
		0 & 0 & -1 \\
		\end{array}
		\right)$\\[3ex]
		\hline
		$\psi((12)(13)(03))=\left(
		\begin{array}{ccc}
		-\frac{1}{2} & -\frac{1}{2}-\frac{i}{2} & \frac{i}{2} \\
		\frac{1}{2}-\frac{i}{2} & 0 & \frac{1}{2}+\frac{i}{2} \\
		-\frac{i}{2} & -\frac{1}{2}+\frac{i}{2} & -\frac{1}{2} \\
		\end{array}
		\right)$ & $\psi((23)(12)(01))=\left(
		\begin{array}{ccc}
		i & 0 & 0 \\
		0 & -1 & 0 \\
		0 & 0 & -i \\
		\end{array}
		\right)$\\[3ex]
		\hline
		$\psi((13)(01))=\left(
		\begin{array}{ccc}
		\frac{i}{2} & -\frac{1}{2}+\frac{i}{2} & \frac{1}{2} \\
		-\frac{1}{2}+\frac{i}{2} & 0 & -\frac{1}{2}-\frac{i}{2} \\
		\frac{1}{2} & -\frac{1}{2}-\frac{i}{2} & -\frac{i}{2} \\
		\end{array}
		\right)$ & $\psi((23)(02))=\left(
		\begin{array}{ccc}
		\frac{i}{2} & \frac{1}{2}+\frac{i}{2} & -\frac{1}{2} \\
		-\frac{1}{2}-\frac{i}{2} & 0 & -\frac{1}{2}+\frac{i}{2} \\
		-\frac{1}{2} & \frac{1}{2}-\frac{i}{2} & -\frac{i}{2} \\
		\end{array}
		\right)$ \\[3ex]
		\hline
		$\psi(03)=\left(
		\begin{array}{ccc}
		\frac{1}{2} & -\frac{1}{2}+\frac{i}{2} & \frac{i}{2} \\
		-\frac{1}{2}-\frac{i}{2} & 0 & -\frac{1}{2}+\frac{i}{2} \\
		-\frac{i}{2} & -\frac{1}{2}-\frac{i}{2} & \frac{1}{2} \\
		\end{array}
		\right)$ & $\psi((13)(12)(02))=\left(
		\begin{array}{ccc}
		-\frac{1}{2} & \frac{1}{2}+\frac{i}{2} & \frac{i}{2} \\
		-\frac{1}{2}+\frac{i}{2} & 0 & -\frac{1}{2}-\frac{i}{2} \\
		-\frac{i}{2} & \frac{1}{2}-\frac{i}{2} & -\frac{1}{2} \\
		\end{array}
		\right)$\\[3ex]
		\hline
		$\psi((03)(12))=\left(
		\begin{array}{ccc}
		0 & 0 & i \\
		0 & -1 & 0 \\
		-i & 0 & 0 \\
		\end{array}
		\right)$ & \text{\phantom{o}}\\[3ex]
		\hline
	\end{tabular}
\caption{Table presents the $\epsilon-$matrix representations of the irreducible representation labelled by the partition $(3,1)$ for the permutation group $S(4)$. }
	\label{t2}
\end{table}
\end{example}	

\FloatBarrier
	\bibliographystyle{apsrev}
	\bibliography{um}
\end{document}